\newcommand{\hau}[1]{\textcolor{blue}{Hau: #1}}
\newtheorem{definition}{Definition}[section]
\DeclareMathOperator*{\argmax}{arg\,max}
\title{Maximizing Approximately $k$-Submodular Functions\thanks{Partially supported by NSFC Project 11771365.}}
\author{%
  Leqian Zheng\thanks{City University of Hong Kong, Hong Kong SAR  
   \texttt{leqizheng2-c@my.cityu.edu.hk}}  \hspace{+3mm} Hau Chan\thanks{University of Nebraska-Lincoln, USA,  \texttt{hchan3@unl.edu}}  \hspace{+3mm} Grigorios Loukides\thanks{King's College London, UK,  \texttt{grigorios.loukides@kcl.ac.uk}}  \hspace{+3mm} Minming Li\thanks{City University of Hong Kong, Hong Kong SAR, and City University of Hong Kong Shenzhen Research Institute, Shenzhen, P. R. China,  \texttt{minming.li@cityu.edu.hk}}
}
\date{}
\begin{document}

\maketitle

\begin{abstract}
We introduce the problem of maximizing \emph{approximately} $k$-submodular functions subject to size constraints. In this problem, one seeks to select $k$-disjoint subsets of a ground set with bounded total size or individual sizes, and maximum utility, given by a function that is ``close'' to being $k$-submodular. The problem finds applications in  tasks such as sensor placement, where one wishes to install $k$ types of sensors whose measurements are noisy, and influence maximization, where one seeks to advertise $k$ topics to users of a social network whose level of influence is uncertain. To deal with the problem, we first provide two natural definitions for approximately $k$-submodular functions and establish a hierarchical relationship between them. Next, we show that simple greedy algorithms offer approximation guarantees for different types of size constraints. Last, we demonstrate  experimentally that the greedy algorithms are effective in sensor placement and influence maximization problems. 
\end{abstract}

\section{Introduction}
A $k$-submodular function is a natural generalization of a submodular function to $k$ arguments or dimensions~\cite{Huber:2012aa}.  Recently, there has been an increased theoretical and algorithmic 
 interest in studying the problem of maximizing (monotone) 
\emph{$k$-submodular functions},  with~\cite{Ohsaka:2015aa} or  without \cite{Huber:2012aa,TALG16,iwata} size constraints. 
The problem finds applications in sensor placement~\cite{Ohsaka:2015aa}, influence maximization  \cite{Ohsaka:2015aa}, coupled feature selection~\cite{singh}, and network cut capacity optimization~\cite{iwata}. In these applications, one 
wishes to select $k$ \emph{disjoint} subsets from a ground set 
that maximize a given function with $k$ arguments subject to an upper bound on the total size of the selected subsets or the size of each individual selected subset~\cite{Ohsaka:2015aa}. Such a function of interest often has a \emph{diminishing returns property}  
with respect to each subset when fixing the other $k-1$ subsets \cite{Huber:2012aa}. 

More formally, let $V = \{1, ..., n\} = [n]$ be a finite ground set of $n$ elements 
and $(k+1)^V =\{(X_1, X_2, \dots, X_k)\mid X_i\subseteq V, \forall i\in [k], X_{i}\cap X_{j}=\emptyset\}$ 
be the set of $k$-disjoint subsets. 
A function $f:(k+1)^V \to \mathbb{R}^+$ is $k$-submodular if and only if, for any $\pmb{x},\pmb{y}\in(k+1)^{V}$, 
\begin{align*}
f(\pmb{x})+f(\pmb{y})\geq{}f(\pmb{x}\sqcap\pmb{y})+f(\pmb{x}\sqcup\pmb{y}), 
\end{align*}
where $\pmb{x}\sqcap\pmb{y}=(X_1\cap{}Y_1, X_2\cap{}Y_2,\hdots,X_k\cap{}Y_k)$ and 
{\small
\begin{align*}
\pmb{x}\sqcup\pmb{y}=(X_1\cup{}Y_1\setminus(\bigcup_{i\neq{1}}X_i\cup{}Y_i),\hdots,X_k\cup{}Y_k\setminus(\bigcup_{i\neq{k}}X_i\cup{}Y_i)). 
\end{align*}
\vspace{-2mm}
}

A $k$-submodular function $f$ is \emph{monotone} if and only if, for any $\pmb{x},\pmb{y}\in(k+1)^{V}$ such that 
${\pmb x}\preceq {\pmb y}$ (i.e., $X_i\subseteq Y_i,$ $\forall i\in [k]$), $f(\pmb{x}) \le f(\pmb{y})$. 
The problem of maximizing a monotone $k$-submodular function $f$ subject to the \emph{total size} (TS) and \emph{individual size} (IS) constraints \cite{Ohsaka:2015aa} is 
{\small
\begin{align*}
\max_{\pmb x \in (k+1)^V : |\cup_{i \in [k]} X_i|  \le B} f(\pmb x) ~~ \text{ and } ~~\max_{\pmb x \in (k+1)^V : |X_i|  \le B_i \; \forall i \in [k]} f(\pmb x), 
\end{align*}}
respectively, for some positive integers $B \in \mathbb{Z}^+$ and $B_i \in \mathbb{Z}^+$ for all $i \in [k]$. 
While recent results \cite{Ohsaka:2015aa} show that the above two problems can be well-approximated using 
greedy algorithms (with an approximation ratio of $\frac{1}{2}$ and $\frac{1}{3}$ for total size and individual size constraints, respectively) under the value oracle model, not much is known when the function is not entirely $k$-submodular, which can result from the oracle being inaccurate 
or the function itself not being $k$-submodular by  default (see examples and applications below). 
In fact, often there is no access to the exact value of the function but only to noisy values of it.   
In this paper, we initiate the study of the above maximization problems for non-$k$-submodular functions and pose the following questions: \begin{quote}
{\bf Q1.} \emph{How to define an approximately $k$-submodular function?}

{\bf Q2.} \emph{What approximation guarantees can be obtained when maximizing such a function under total size  
or individual size constraints?}
\end{quote}

\begin{table*} \centering
\resizebox{0.75\textwidth}{!}{%
\begin{tabular}{|c|c|c|c|c|}
\hline
\multicolumn{2}{|c|}{\multirow{2}{*}{}} & \multicolumn{2}{c|}{$\varepsilon$-AS} & \multicolumn{1}{c|}{$\varepsilon$-ADR} \\
\cline{3-4}
\multicolumn{2}{|c|}{$F$'s Solution}          & $F$'s Solution & $f$'s Solution &  \multicolumn{1}{c|}{$F$'s Solution} \\
\hline
\multicolumn{2}{|c|}{$k=1$}    & $\frac{1}{1+\frac{4B\varepsilon}{(1-\varepsilon)^2}}\left(1-\left(\frac{1-\varepsilon}{1+\varepsilon}\right)^{2B}\left(1-\frac{1}{B}\right)^B\right)$ \cite{Horel:2016aa} & 
$\frac{1-\varepsilon}{1+\varepsilon}\left(1-\frac{1}{e}\right)$ & 
$\left(1-e^{-\frac{(1- \varepsilon)}{(1+\varepsilon)}}\right)$
\\
\hline
\multirow{2}{*}{$k\geq 2$} & TS & $\frac{(1-\varepsilon)^2}{2(1-\varepsilon+\varepsilon B)(1+\varepsilon)}$ & $\frac{1-\varepsilon}{2(1+\varepsilon)}$ & $\frac{1-\varepsilon}{2}$ \\
\cline{2-5} 
                         &  IS & $\frac{(1-\varepsilon)^2}{(3-3\varepsilon+2\varepsilon B)(1+\varepsilon)}$ & $\frac{1-\varepsilon}{3(1+\varepsilon)}$ & $\frac{1-\varepsilon}{3+\varepsilon}$ \\
\hline
\end{tabular}
}
\vspace{+1mm}
\caption{Approximation ratios of the greedy algorithms of  \cite{Ohsaka:2015aa} for maximizing $\varepsilon$-approximately $k$-submodular ($\varepsilon$-AS) 
and $\varepsilon$-approximately diminishing returns ($\varepsilon$-ADR) function $F$ under a total size (TS) constraint $B$, or individual size (IS) constraints $B_1, ..., B_k$ with $B=\sum_{i \in[k]} B_i$, when they are applied to $F$ or to $f$. 
An algorithm has approximation ratio $\alpha \le 1$ if and only if it returns a (feasible) solution $\pmb{x}\in{(k+1)^V}$ such that $f(\pmb{x}) \ge \alpha f(\pmb{x}^*)$, where $\pmb{x}^*$ is an optimal solution.}
\label{table:results}
\vspace{-4mm}
\end{table*}

\noindent \textbf{Applications.} Answering Q1 and Q2 could be important in optimization, machine learning, and beyond. Let us first consider the case of $k=1$. In many applications, a function $F$ is \emph{approximately submodular} rather than submodular (i.e., $(1-\varepsilon)f(x)\leq F(x)\leq(1+\varepsilon)f(x)$, for an $\varepsilon>0$, where $f$ is a monotone submodular function)~\cite{Horel:2016aa,Hassidim:2017aa,Singer:2018aa}.  These applications include subset selection which is fundamental in areas such as (sequential) document summarization, sensor placement, and influence maximization~\cite{Ohsaka:2015aa,Elhamifar:2019aa}. For example, in sensor placement, the objective is to select a subset of good sensors and the approximation comes from  sensors producing noisy values due to hardware issues, environmental effects, and imprecision in measurement \cite{cleaningsensors,weightedentropy}. In influence  maximization, the objective is to select a subset of good users to start a viral marketing campaign over a social network and the approximation comes from our uncertainty about the level of influence of specific users in the social network~\cite{Li:2017aa,noveltydecay,hypergraph}. Other applications are PMAC learning \cite{Balcan:2011aa,Balcan:2012aa,Feldman:2013aa}, where the objective is to learn a submodular function, and sketching \cite{Badanidiyuru:2012aa}, where the objective is to find a good representation of a submodular function of polynomial size. 

In many of these applications, it is often needed to select $k$ disjoint subsets of given maximum total size or sizes, instead of a single subset, which gives rise to size-constrained $k$-submodular function  maximization~\cite{Ohsaka:2015aa}. This is the case for sensor placement where $k$ types of sensors need to be placed in disjoint subsets of locations, or influence maximization where $k$ viral marketing campaigns, each  starting from a different subset of users, need to be performed simultaneously over the same social network. Again, a function may not be exactly $k$-submodular, due to noise in sensor measurements and uncertainty in user influence levels.   
Thus, by answering question Q1, we could more accurately model the quality of solutions in these applications and, by answering Q2, we could obtain solutions of guaranteed quality. 

\vspace{+2mm}
\noindent \textbf{Our Contributions.} 
We address Q1 by introducing two natural definitions 
of an approximately $k$-submodular function.  To the best of our knowledge, approximately $k$-submodular functions have not been defined or studied before for general $k>1$. 
Namely, we define a function $F: (k+1)^V \to \mathbb{R}^+$ as $\varepsilon$-\emph{approximately $k$-submodular} ($\varepsilon$-AS) 
or \emph{$\varepsilon$-approximately diminishing returns} ($\varepsilon$-ADR) for some small  $\varepsilon > 0$, 
if and only if 
there exists a monotone \textit{k-submodular} function $f$ such that for any $\pmb{x}\in{(k+1)^V}$, $u \not\in \bigcup_{l \in [k]} X_l$, $u\in V$ and $i\in[k]$ 

\vspace{-5mm}
\begin{align*}
\varepsilon\text{-AS}: &  ~(1-\varepsilon)f(\pmb{x})\leq F(\pmb{x})\leq(1+\varepsilon)f(\pmb{x}) \text{ or } \\
\varepsilon\text{-ADR}: &  ~(1-\varepsilon)\Delta_{u,i}f(\pmb{x})\leq{}\Delta_{u,i}F(\pmb{x})\leq{}(1+\varepsilon)\Delta_{u,i}f(\pmb{x}), 
\end{align*}
where $\Delta_{u,i}f({\pmb x})=f(X_1, \ldots, X_{i-1}, X_i\cup \{u\}, X_{i+1}, \ldots, X_k) -f(X_1, \ldots, X_k)$ and $\Delta_{u,i}F(\pmb x)$ is defined similarly. Our $\varepsilon$-AS definition generalizes 
the approximately submodular definition of \cite{Horel:2016aa} from $k=1$ to $k
\geq 1$ dimensions. 
The $\varepsilon$-ADR definition is related to the \emph{marginal gain} of the functions $F$ and $f$ 
and implies $\alpha$-submodularity when $k=1$ \cite{Halabi:2018aa}. As we will show, an $\varepsilon$-ADR function $F$ is also $\varepsilon$-AS. However, the converse is not true. 
Thus, we establish a novel approximately $k$-submodular hierarchy for $k \ge 1$. 

We address Q2 by considering the maximization problems on a function $F$ that is $\varepsilon$-AS or $\varepsilon$-ADR, 
subject to the total size (TS) or individual size (IS) constraint. We show that, when applying 
the simple greedy algorithms of \cite{Ohsaka:2015aa}
for TS and IS, we obtain approximation guarantees that depend on $\varepsilon$. 
Table \ref{table:results} provides an overview of our results. 
Note that there are two cases with respect to an $\varepsilon$-AS $F$ and $f$ 
where we can apply greedy algorithms to $F$ or $f$ (when $f$ is known). 
A surprising observation is that we can derive better approximation ratios 
by applying the greedy algorithms to $f$ and using the solutions to approximate $F$.  
When only $F$ is known (e.g., through a noisy oracle, an approximately learned submodular function, or a sketch), we provide approximation ratios of the greedy algorithms being applied to $F$. When the function $F$ is $\varepsilon$-ADR, we provide better approximation guarantees, compared to when $F$ is $\varepsilon$-AS. 

We conduct experiments on two real datasets to evaluate the effectiveness of the greedy algorithms 
in a $k$-type sensor placement problem and a $k$-topic influence maximization problem for our setting. 
We provide three approximately $k$-submodular function generation/noise methods for generating function  values. 
Our experimental results are consistent with the theoretical findings of the derived approximation ratios and showcase the impact of the type of noise on the quality of the solutions. 

\vspace{+1mm}
\noindent \textbf{Organization.} 
Section~\ref{sec:related} discusses related work. 
Section \ref{section:prelim} provides some 
preliminaries and establishes a strong relationship between 
$\varepsilon$-AS and $\varepsilon$-ADR. Section  \ref{section:k1} and Section \ref{section:kg1} considers the approximation ratios of the greedy algorithms for maximizing an $\varepsilon$-AS and $\varepsilon$-ADR function for $k=1$ and $k > 1$, respectively.  
Section \ref{sec:paper:improvedgreedy} discusses the approximation ratios of $f$'s greedy solutions to $F$. 
Section \ref{section:experiment} presents experimental results. In Section~\ref{section:discussion}, we conclude and present extensions.

\section{Related Work.}\label{sec:related}
The works of \cite{Hassidim:2017aa,Singer:2018aa} and \cite{Horel:2016aa} consider the problem of maximizing an approximately submodular function  (e.g., $\varepsilon$-AS definition with $k = 1$)  
subject to constraints under stochastic errors (e.g., $\varepsilon$) drawn i.i.d. from some distributions 
and sub-constant (bounded) errors, respectively. 
Our paper focuses on the latter setting with $k\geq 1$ and considers the performance of greedy algorithms with respect to errors (e.g., a fixed $\varepsilon$). 
In particular, \cite{Horel:2016aa} shows that, for some $\varepsilon = \frac{1}{n^{1/2-\beta}}$ for $\beta > 0$, 
no algorithm can obtain any constant approximation ratio using polynomially  many queries of the value oracle.  
The same lower bound can be applied to our setting. 
However, when $\varepsilon$ is sufficiently small with respect to the size constraint (e.g., $\varepsilon = \frac{\delta}{B}$), 
the standard greedy algorithm~\cite{Nemhauser:1978aa} provides an (almost) constant approximation ratio of $(1 - 1/e - O(\delta))$. 

Under the stochastic noise assumptions, \cite{Hassidim:2017aa} shows that, 
when the size constraint $B \in \Omega(\log\log n)$ is sufficiently large, there is an algorithm that achieves 
$(1- 1/e)$ approximation ratio with high probability (w.h.p.) and also shows some impossibility results for the existence of randomized algorithms with a good approximation ratio using a polynomial number of queries w.h.p. 
The work of \cite{Singer:2018aa} shows that good approximation ratios are attainable  
(e.g., an approximation ratio of $(1-1/e)$) w.h.p. and/or  in expectation 
for arbitrary size constraint $B$ using greedy algorithms (with or without randomization) under different size constraints. 
Additionally, \cite{Singer:2018aa} considers general matroid constraints  and provides an approximation ratio that depends on the matroid constraints. 
Another work \cite{Qian:2017aa} considers the approximate function maximization problem (with $f$ that is not necessarily submodular) under the cardinality constraint and derives several approximation ratios that depend on the submodularity ratio \cite{Das:2011aa}, 
$\varepsilon$, and $B$ using the greedy algorithm, as well as an algorithm based on the Pareto Optimization for Subset Selection (POSS) strategy. 

A recently published work \cite{Nguyen:2020aa} explores the problem of maximizing approximate $k$-submodular functions in the streaming context,  where the elements of $V$ are scanned at most once, under the total size constraint. 
The approximate notion, introduced independently, is that of our $\varepsilon$-AS. However,~\cite{Nguyen:2020aa} aims to optimize $f$ instead of $F$, which is the main focus of our work, and it does not consider $\varepsilon$-ADR nor individual size constraints.  

\section{Preliminaries} \label{section:prelim}
We denote the vector of $k$ empty subsets by ${\bf 0}=(X_1=\{\}, \ldots, X_k=\{\})$.  
Without loss of generality, we assume that the functions $F$ and $f$ are normalized such that
$F(\textbf{0}) = f(\textbf{0}) = 0$. 
We 
define $k$-submodular functions as those that are monotone and have a diminishing returns property in each dimension \cite{TALG16}. 

\vspace{-1mm}
\begin{definition}[$k$-submodular function~\cite{TALG16}] A function $f:(k+1)^V\rightarrow \mathbb{R}$ is $k$-\emph{submodular} if and only if:
(a) $\Delta_{u,i}f({\pmb x})\geq \Delta_{u,i}f({\pmb y})$, for all ${\pmb x}, {\pmb y}\in (k+1)^V$ with ${\pmb x}\preceq{\pmb y}$, $u\notin \cup_{\ell \in [k]}Y_i$,
and $i\in [k]$, and
(b) $\Delta_{u,i}f({\pmb x})+\Delta_{u,j}f({\pmb x})\geq 0$, for any
${\pmb x}\in (k+1)^V$, $u\notin \cup_{\ell \in [k]}X_i$, and $i,j\in [k]$ with $i\neq j$.\label{def:ksubgain}
\end{definition}
\vspace{-1mm}

Part (a) of Definition \ref{def:ksubgain} is known as the \emph{diminishing returns} property
and part (b) as \emph{pairwise monotonicity}. We assume that the function is monotone, and this implies pairwise monotonicity directly. Moreover, $k$-submodular functions are \emph{orthant submodular}~\cite{TALG16}. 

For consistency, we use the following notations as used by \cite{Ohsaka:2015aa}. 
Namely, with a slight abuse of notation, we associate each $\pmb{x} = (X_1, \hdots, X_k)\in (k+1)^V$  with $\pmb{x}\in \{0,1,\hdots,k\}^V$ where  
$X_i=\{e\in{}V\mid \pmb{x}(e)=i\}$ for $i\in[k]$. We define the \emph{support} or \emph{size} of $\pmb{x}\in(k+1)^V$ 
as $supp(\pmb{x})=\{e\in{}V\mid \pmb{x}(e)\neq{0}\}$. Similarly, we define $supp_{i}(\pmb{x}) =\{e\in{}V\mid{}\pmb{x}(e)=i\}$.
    
In the following, we show that when the function $f$ is $\varepsilon$-ADR, the function is also $\varepsilon$-AS. The proof is in Appendix~\ref{appendA}.

\begin{theorem}\label{lemma1}
If $F$ is $\varepsilon$-approximately diminishing returns, 
then $F$ is 
$\varepsilon$-approximately $k$-submodular.    
\end{theorem}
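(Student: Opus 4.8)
The plan is to show that the very monotone $k$-submodular function $f$ guaranteed by the $\varepsilon$-ADR hypothesis also serves as the witness for the $\varepsilon$-AS property of $F$; in particular, no new function needs to be constructed. Fix an arbitrary $\pmb{x} = (X_1,\dots,X_k) \in (k+1)^V$ and enumerate $supp(\pmb{x}) = \{u_1,\dots,u_m\}$, where $u_j$ is assigned to coordinate $i_j$, i.e.\ $\pmb{x}(u_j) = i_j$. Form the chain $\textbf{0} = \pmb{x}^{(0)} \preceq \pmb{x}^{(1)} \preceq \cdots \preceq \pmb{x}^{(m)} = \pmb{x}$, where $\pmb{x}^{(j)}$ is obtained from $\pmb{x}^{(j-1)}$ by placing $u_j$ into its coordinate $i_j$. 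Each $\pmb{x}^{(j)}$ is a legitimate element of $(k+1)^V$, since its nonzero coordinates remain pairwise disjoint, and $supp(\pmb{x}^{(j-1)}) = \{u_1,\dots,u_{j-1}\}$, so the freshly inserted element $u_j$ does not lie in $\bigcup_{\ell \in [k]} supp_\ell(\pmb{x}^{(j-1)})$.

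First I would telescope along this chain, using the normalization $F(\textbf{0}) = f(\textbf{0}) = 0$:
\[
F(\pmb{x}) \;=\; \sum_{j=1}^{m}\bigl(F(\pmb{x}^{(j)}) - F(\pmb{x}^{(j-1)})\bigr) \;=\; \sum_{j=1}^{m}\Delta_{u_j,i_j}F(\pmb{x}^{(j-1)}),
\]
and identically $f(\pmb{x}) = \sum_{j=1}^{m}\Delta_{u_j,i_j}f(\pmb{x}^{(j-1)})$. Since at step $j$ the element $u_j$ is fresh, the $\varepsilon$-ADR inequalities apply term by term:
\[
(1-\varepsilon)\,\Delta_{u_j,i_j}f(\pmb{x}^{(j-1)}) \;\le\; \Delta_{u_j,i_j}F(\pmb{x}^{(j-1)}) \;\le\; (1+\varepsilon)\,\Delta_{u_j,i_j}f(\pmb{x}^{(j-1)}).
\]
Summing over $j = 1,\dots,m$ and using distributivity to factor $1-\varepsilon$ and $1+\varepsilon$ out of the sums, the middle sum collapses to $F(\pmb{x})$ and the outer sums to $(1-\varepsilon)f(\pmb{x})$ and $(1+\varepsilon)f(\pmb{x})$ respectively. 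This gives $(1-\varepsilon)f(\pmb{x}) \le F(\pmb{x}) \le (1+\varepsilon)f(\pmb{x})$ for every $\pmb{x} \in (k+1)^V$, which is exactly the $\varepsilon$-AS property of $F$ with witness $f$.

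I expect the only delicate point to be the bookkeeping around the chain $\{\pmb{x}^{(j)}\}$: one must check that every intermediate vector is feasible and that the element inserted at each step is genuinely new, so that the $\varepsilon$-ADR inequality is applicable there --- the particular order in which the elements of $supp(\pmb{x})$ are inserted is immaterial. Note that monotonicity of $f$ is not strictly needed for the summation (the ADR inequalities are added as given, with no multiplication of an inequality by a possibly-negative quantity), though it does ensure each marginal $\Delta_{u_j,i_j}f(\pmb{x}^{(j-1)})$ is nonnegative, which is consistent with the stated bounds.
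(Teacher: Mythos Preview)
Your proof is correct and follows essentially the same approach as the paper: both telescope $F(\pmb{x})$ and $f(\pmb{x})$ along a chain from $\textbf{0}$ to $\pmb{x}$ by inserting the elements of $supp(\pmb{x})$ one at a time, apply the $\varepsilon$-ADR bounds to each marginal, and sum, using the same witness $f$. The only cosmetic difference is that the paper orders the insertions coordinate by coordinate whereas you allow an arbitrary enumeration, which (as you note) is immaterial.
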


The converse is not true (as shown below). 
One could attempt to upper bound 
$\Delta_{u,i}F(\pmb{x}) \le (1+\varepsilon)f(X_1, ..., X_i \cup\{u\}, ..., X_k) - (1-\varepsilon)f(\pmb{x}) 
= (1+\varepsilon)\Delta_{u,i}f(\pmb{x}) + 2\varepsilon f(\pmb{x}) $ 
using the definition of $\varepsilon$-AS for each term. However, the resultant function does not appear to be $\varepsilon$-ADR. 

\begin{theorem}
If $F$ is 
$\varepsilon$-approximately $k$-submodular, 
then $F$ is not necessarily 
$\varepsilon$-approximately diminishing returns.    
\end{theorem}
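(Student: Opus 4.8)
The plan is to exhibit, for any $\varepsilon > 0$, a single explicit $k$-submodular function $f$ and a function $F$ that is $\varepsilon$-AS with respect to $f$, but for which no $k$-submodular function $g$ makes $F$ satisfy the two-sided marginal-gain bound required by $\varepsilon$-ADR. It suffices to work with $k = 1$ and then note that the same construction lifts trivially to larger $k$ by leaving the other $k-1$ coordinates inert (so the counterexample is as strong as possible). So I would pick a small ground set, say $V = \{1, 2\}$ or $\{1,2,3\}$, take $f$ to be a simple monotone submodular function (e.g.\ a coverage function or even a modular function), and then perturb it element-by-element within the multiplicative window $[(1-\varepsilon)f, (1+\varepsilon)f]$ so that some marginal gain of $F$ is pushed well outside $[(1-\varepsilon)\Delta f, (1+\varepsilon)\Delta f]$.

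The key mechanism: a marginal gain $\Delta_{u,1}F(\pmb x) = F(\pmb x \cup \{u\}) - F(\pmb x)$ is a \emph{difference} of two function values, so a multiplicative $\varepsilon$-perturbation of each value translates into an \emph{additive} perturbation of size up to $\varepsilon\big(F(\pmb x\cup\{u\}) + F(\pmb x)\big)$ on the marginal gain — which, when $f(\pmb x)$ is large relative to the marginal gain $\Delta_{u,1}f(\pmb x)$, is much larger than $\varepsilon\,\Delta_{u,1}f(\pmb x)$. Concretely I would arrange $f$ so that adding element $u$ to a nonempty set $\pmb x$ yields a small positive marginal gain while $f(\pmb x)$ itself is comparatively large; then set $F(\pmb x) = (1+\varepsilon)f(\pmb x)$ and $F(\pmb x \cup \{u\}) = (1-\varepsilon)f(\pmb x\cup\{u\})$, forcing $\Delta_{u,1}F(\pmb x)$ to be negative (or at least far below $(1-\varepsilon)\Delta_{u,1}f(\pmb x)$), which already rules out monotonicity-respecting $\varepsilon$-ADR.

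The one genuine subtlety — and the step I expect to be the main obstacle — is that $\varepsilon$-ADR only requires the existence of \emph{some} monotone $k$-submodular $g$ witnessing the marginal-gain bounds for $F$; it need not be our chosen $f$. So I must argue that \emph{no} such $g$ exists. The clean way to do this is to make the offending marginal gain of $F$ strictly negative on a step where monotonicity of any candidate $g$ would force $\Delta_{u,1}g(\pmb x) \ge 0$, hence $(1-\varepsilon)\Delta_{u,1}g(\pmb x) \ge 0 > \Delta_{u,1}F(\pmb x)$, contradicting the lower bound in the $\varepsilon$-ADR definition (which presumes the witness $g$ is monotone $k$-submodular, as stated in the paper's definition). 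To pull this off while staying inside the $\varepsilon$-AS window I need the window for $F(\pmb x\cup\{u\})$ and the window for $F(\pmb x)$ to overlap in a way that admits $F(\pmb x\cup\{u\}) < F(\pmb x)$, i.e.\ $(1-\varepsilon)f(\pmb x\cup\{u\}) < (1+\varepsilon)f(\pmb x)$; this holds precisely when $\frac{f(\pmb x\cup\{u\})}{f(\pmb x)} < \frac{1+\varepsilon}{1-\varepsilon}$, so I just choose $f$ with a sufficiently mild increase at that step (for fixed $\varepsilon$ this is always arrangeable, e.g.\ $f(\pmb x) = 2$, $f(\pmb x \cup \{u\}) = 2 + \eta$ for small $\eta$). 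I would then double-check that the resulting $F$ is still monotone and nonnegative (or relax those if the $\varepsilon$-AS definition doesn't demand them of $F$) and that every value of $F$ on the rest of $(k+1)^V$ can be chosen inside its window consistently — a finite, routine verification. Finally, I would remark that the construction generalizes: the gap between $\varepsilon$-AS and $\varepsilon$-ADR is not an artifact of tiny examples but reflects the structural fact that additive perturbations of differences are uncontrolled by multiplicative perturbations of values whenever the base values dwarf their marginal differences.
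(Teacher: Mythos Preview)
Your approach is essentially the same as the paper's: take $k=1$ on a two-element ground set, push $F$ to opposite ends of the $\varepsilon$-AS window on two adjacent sets (the paper sets $F(\{e_1\}) = (1+\varepsilon)f(\{e_1\})$ and $F(\{e_1,e_2\}) = (1-\varepsilon)f(\{e_1,e_2\})$), and observe that the resulting marginal gain falls below the $\varepsilon$-ADR lower bound. The paper stops after showing $\Delta_{e_2}F(\{e_1\}) < (1-\varepsilon)\Delta_{e_2}f(\{e_1\})$ for the \emph{particular} $f$ used in the $\varepsilon$-AS definition, and does not address the existential quantifier in the $\varepsilon$-ADR definition---i.e., it never rules out the possibility that some \emph{other} monotone $k$-submodular witness $g$ might work. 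You correctly flag this subtlety and close it by choosing parameters so that $\Delta_{u,1}F(\pmb x)$ is strictly negative; then monotonicity of any candidate $g$ forces $(1-\varepsilon)\Delta_{u,1}g(\pmb x) \ge 0 > \Delta_{u,1}F(\pmb x)$, so no witness $g$ can exist. Your argument is therefore strictly more complete than the paper's, at essentially no extra cost.
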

\begin{proof}
To prove the 
statement, we construct a function $F$ that is $\varepsilon$-AS but not $\varepsilon$-ADR for $k=1$. 
Let $V = \{e_1, e_2\}$. By the $\varepsilon$-AS definition, we have that $(1-\varepsilon)f(\pmb{x})\leq F(\pmb{x})\leq(1+\varepsilon)f(\pmb{x})$ for any $\pmb{x}\in{(k+1)^V}$. 
We define $F$ partially as follows. 
Let $F(\{e_1\})=(1+\varepsilon)f(\{e_1\})$ and $F(\{e_1, e_2\})=(1-\varepsilon)f(\{e_1, e_2\})$. 
Consider the marginal gain of adding $e_2$ to the set $\{e_1\}$. 
We have that 
$\Delta_{e_2}F(\{e_1\})=(1-\varepsilon)f(\{e_1, e_2\})-(1+\varepsilon)f(\{e_1\})<(1-\varepsilon)\left[f\left(\{e_1, e_2\}\right)-f\left(\{e_1\}\right)\right]=(1-\varepsilon)\Delta_{e_2}f(\{e_1\})$, 
where the first equality holds due to our construction, the inequality is due to $\varepsilon>0$, and the last equality is from the lower bound of the $\varepsilon$-ADR definition. 
\end{proof}

\section{Approximately $k$-Submodular Function Maximization: $k=1$} \label{section:k1}
When $k = 1$, $k$-submodular functions coincide with the standard submodular functions~\cite{Nemhauser:1978aa}.  
Thus, the total size and individual size constraints are the same. 
We are interested in the maximization problem with a size constraint with parameter $B$. 
For an $\varepsilon$-AS $F$, \cite{Horel:2016aa} proves the following approximation ratio 
when applying the greedy algorithm~\cite{Nemhauser:1978aa}, which iteratively adds into $F$ an element that achieves the highest marginal gain.

\begin{corollary} [Theorem 5 \cite{Horel:2016aa}] \label{coro1} 
Suppose $F$ is $\varepsilon$-approximately submodular. The greedy algorithm provides an approximation ratio of  
$\frac{1}{1+\frac{4B\varepsilon}{(1-\varepsilon)^2}}\left(1-\left(\frac{1-\varepsilon}{1+\varepsilon}\right)^{2B}\left(1-\frac{1}{B}\right)^B\right)$ for the size constrained maximization problem.
\end{corollary}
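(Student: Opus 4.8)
The plan is to reproduce the classical analysis of the greedy algorithm under a cardinality constraint, paying at every iteration for the gap between the function $F$ that the algorithm queries and the monotone submodular function $f$ by which solution quality is measured. Write $\rho=\tfrac{1-\varepsilon}{1+\varepsilon}$, let $\emptyset=S_0\subseteq S_1\subseteq\cdots\subseteq S_B$ be the sets produced by the greedy run on $F$ (so $S_{t+1}$ adds the element of largest $F$-marginal $\Delta_e F(S_t)$), and let $O$ be an optimal feasible solution with $\mathrm{OPT}=f(O)$; recall the normalization $F(\textbf{0})=f(\textbf{0})=0$ and the $\varepsilon$-AS sandwich $(1-\varepsilon)f\le F\le(1+\varepsilon)f$.

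\textbf{Transferring the greedy step into $f$.} Because $\Delta_e F(S_t)=F(S_t\cup\{e\})-F(S_t)$ and $F(S_t)$ does not depend on $e$, the element chosen at step $t+1$ also maximizes $F(S_t\cup\{e\})$, so $F(S_{t+1})\ge F(S_t\cup\{e\})$ for every $e$. Bounding the left side by $F(S_{t+1})\le(1+\varepsilon)f(S_{t+1})$ and the right side by $F(S_t\cup\{e\})\ge(1-\varepsilon)f(S_t\cup\{e\})$ yields the per-step guarantee
\[
f(S_{t+1})\ \ge\ \rho\,f(S_t\cup\{e\})\qquad\text{for all }e\in V .
\]
If one instead keeps the marginal-gain inequality $\Delta_{e_{t+1}}F(S_t)\ge\Delta_oF(S_t)$ and converts each $F$-marginal to an $f$-marginal termwise, each conversion leaves an additive slack of order $\varepsilon f(S_t)$; tracking those slacks is ultimately what produces the correction factor in the statement.

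\textbf{Submodularity supplies a good candidate.} By monotonicity and submodularity of $f$, $\mathrm{OPT}-f(S_t)\le f(O\cup S_t)-f(S_t)\le\sum_{o\in O\setminus S_t}\bigl(f(S_t\cup\{o\})-f(S_t)\bigr)$, and since $|O\setminus S_t|\le B$ there exists $o^\star\in O\setminus S_t$ with $f(S_t\cup\{o^\star\})\ge f(S_t)+\tfrac1B(\mathrm{OPT}-f(S_t))$ (if $O\subseteq S_t$ then $f(S_t)\ge\mathrm{OPT}$ and the bound is immediate). Combining this with the previous display at $e=o^\star$ gives the affine recursion
\[
f(S_{t+1})\ \ge\ \rho\Bigl(\bigl(1-\tfrac1B\bigr)f(S_t)+\tfrac1B\,\mathrm{OPT}\Bigr),
\]
which, unrolled from $f(S_0)=0$ and summed as a geometric series, produces a lower bound on $f(S_B)$ of exactly the shape claimed: the classical greedy factor $1-(1-\tfrac1B)^B$ is degraded to $1-\bigl(\tfrac{1-\varepsilon}{1+\varepsilon}\bigr)^{2B}(1-\tfrac1B)^B$, and an additional multiplicative correction $\bigl(1+\tfrac{4B\varepsilon}{(1-\varepsilon)^2}\bigr)^{-1}$ appears as the steady-state offset of the recursion (equivalently, from bounding the cumulative additive/multiplicative distortion by a single factor).

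\textbf{Main obstacle.} The conceptual steps above are routine; the delicate part is the bookkeeping that turns the recursion into the precise constants — the exponent $2B$ in $\bigl(\tfrac{1-\varepsilon}{1+\varepsilon}\bigr)^{2B}$ and the denominator $1+\tfrac{4B\varepsilon}{(1-\varepsilon)^2}$. These arise from carrying the factors $\tfrac{1}{1-\varepsilon}$ and $1+\varepsilon$ (one $f\!\to\!F$ and one $F\!\to\!f$ conversion per iteration, so distortions compound like $\rho^{2}$ per step) together with the accumulated $\varepsilon$-slacks through all $B$ iterations and controlling their sum; this is precisely the estimate carried out in \cite{Horel:2016aa}, whose argument we follow to obtain Corollary~\ref{coro1}.
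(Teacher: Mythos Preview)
The paper does not prove this corollary at all; it is stated as a direct citation of Theorem~5 in \cite{Horel:2016aa}, so there is no ``paper's own proof'' to compare against. Your write-up is therefore really a sketch of the Horel--Singer argument, and the question is whether that sketch actually yields the stated constants.

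It does not. Your displayed recursion
\[
f(S_{t+1})\ \ge\ \rho\Bigl((1-\tfrac1B)f(S_t)+\tfrac1B\,\mathrm{OPT}\Bigr),\qquad \rho=\tfrac{1-\varepsilon}{1+\varepsilon},
\]
is correct (both steps --- the $F\!\to\!f$ transfer and the submodular averaging --- are valid), but unrolling it gives
\[
f(S_B)\ \ge\ \frac{1-\varepsilon}{\,1-\varepsilon+2B\varepsilon\,}\Bigl(1-\rho^{B}(1-\tfrac1B)^B\Bigr)\,\mathrm{OPT},
\]
with exponent $B$ on $\rho$ and prefactor $\tfrac{1-\varepsilon}{1-\varepsilon+2B\varepsilon}$, not $\rho^{2B}$ and $\bigl(1+\tfrac{4B\varepsilon}{(1-\varepsilon)^2}\bigr)^{-1}$. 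Your final paragraph asserts that ``distortions compound like $\rho^{2}$ per step,'' but in the recursion you actually derived the distortion is $\rho^{1}$ per step --- you used one $(1-\varepsilon)$ and one $(1+\varepsilon)$, which is a single factor of $\rho$. The two bounds are in fact incomparable (your prefactor is larger, the Horel--Singer second factor is larger), so your inequality neither proves nor is implied by the stated one.

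In short: your argument is a clean and correct derivation of \emph{an} approximation guarantee, but it is not a proof of Corollary~\ref{coro1} as stated, and the closing hand-wave about $\rho^{2}$ contradicts your own recursion. To obtain the exact constants you would need to follow the actual bookkeeping in \cite{Horel:2016aa} rather than defer to it.
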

\vspace{-1mm}

When $F$ is $\varepsilon$-ADR, we obtain the following approximation ratio using its connection to the notions of a-submodularity, weakly submodularity, 
and submodularity ratio (see e.g., \cite{Halabi:2018aa,Das:2011aa,Bian:2017aa}). 

\vspace{-2mm}
\begin{theorem} \label{newtheo:subk1}
Suppose $F$ is $\varepsilon$-approximately diminishing returns. The greedy algorithm provides an approximation ratio of 
$(1-e^{-a})$ for the size constrained maximization problem where $a = (1-\varepsilon)/(1+\varepsilon)$. 
\end{theorem}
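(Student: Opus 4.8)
The plan is to reduce Theorem~\ref{newtheo:subk1} to the known guarantee of the greedy algorithm for maximizing set functions with bounded submodularity ratio. First I would recall the relevant notion: for a monotone set function $F$ with $F(\emptyset)=0$, the submodularity ratio $\gamma$ (as in \cite{Das:2011aa,Bian:2017aa}) quantifies how far $F$ is from submodular, and the classical result is that greedy achieves a $(1-e^{-\gamma})$ approximation for the cardinality-constrained problem. So the entire content of the theorem will be to show that an $\varepsilon$-ADR function $F$ (with $k=1$) has submodularity ratio at least $a=(1-\varepsilon)/(1+\varepsilon)$, and then invoke that result as a black box.

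The key steps, in order, are: (i) Fix the underlying monotone submodular $f$ guaranteed by the $\varepsilon$-ADR definition; by normalization $f(\emptyset)=F(\emptyset)=0$, and since $f$ is monotone submodular its marginal gains $\Delta_u f(\pmb x)$ are nonnegative, so the sandwich inequality $(1-\varepsilon)\Delta_u f(\pmb x)\le \Delta_u F(\pmb x)\le (1+\varepsilon)\Delta_u f(\pmb x)$ forces every marginal gain of $F$ to be nonnegative as well, hence $F$ is monotone. (ii) Establish the submodularity-ratio bound: for disjoint sets $S,T$, write $\sum_{u\in T}\Delta_u F(S)\ge (1-\varepsilon)\sum_{u\in T}\Delta_u f(S)\ge (1-\varepsilon)\,\Delta_T f(S)$ using submodularity of $f$ (the sum of individual marginals dominates the block marginal), and then $\Delta_T f(S)\ge \frac{1}{1+\varepsilon}\Delta_T F(S)$ by applying the upper $\varepsilon$-ADR bound along any fixed chain of single-element additions from $S$ to $S\cup T$ (telescoping, with each term having the same sign since $f$ is monotone). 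Combining gives $\sum_{u\in T}\Delta_u F(S)\ge \frac{1-\varepsilon}{1+\varepsilon}\,\Delta_T F(S)$, i.e. submodularity ratio $\ge a$. (iii) Apply the $(1-e^{-\gamma})$ greedy guarantee with $\gamma=a$ to conclude.

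The main obstacle — really the only delicate point — is step (ii), specifically relating the block marginal $\Delta_T f(S)$ to $\Delta_T F(S)$ cleanly: the $\varepsilon$-ADR definition only controls \emph{single-element} marginals, so one must decompose $\Delta_T F(S)$ as a telescoping sum $\sum_{j}\Delta_{u_j} F(S\cup\{u_1,\dots,u_{j-1}\})$ along an arbitrary ordering of $T$, bound each summand below by $(1-\varepsilon)$ times the corresponding $f$-marginal and above by $(1+\varepsilon)$ times it, and observe that because $f$ is monotone every one of these $f$-marginals is nonnegative, so no sign cancellation breaks the inequality chain. One must also be slightly careful that the definition of submodularity ratio used is the one quoted in \cite{Das:2011aa,Bian:2017aa} (ratio over all $S$ and all $T$ disjoint from $S$, or the weaker ``greedy'' variant sufficient for the $(1-e^{-\gamma})$ bound), but the argument above yields the bound for the strongest variant, so any of these suffices. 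A remark that the same reasoning identifies $a$ as exactly the $a$-submodularity / weak-submodularity parameter of $F$ (cf.\ \cite{Halabi:2018aa}) would tie this back to the discussion preceding the statement, and explains why the bound degrades gracefully to the submodular case $1-1/e$ as $\varepsilon\to 0$.
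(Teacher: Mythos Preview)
Your proposal is correct and follows essentially the same strategy as the paper: reduce to the known $(1-e^{-\gamma})$ greedy guarantee of \cite{Das:2011aa,Bian:2017aa} by showing that the $\varepsilon$-ADR assumption, combined with submodularity of the underlying $f$, forces the relevant submodularity-type parameter of $F$ to be at least $a=(1-\varepsilon)/(1+\varepsilon)$.

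The only minor difference is in the intermediate property established. The paper first proves the pointwise (single-element) comparison $\Delta_u F(\pmb{x})\ge a\,\Delta_u F(\pmb{y})$ for $\pmb{x}\preceq\pmb{y}$, i.e.\ that $F$ is $a$-submodular in the sense of \cite{Halabi:2018aa}, and then invokes Proposition~8 of \cite{Halabi:2018aa} as a black box to pass from $a$-submodularity to weak submodularity. You instead bound the submodularity ratio directly via the block inequality $\sum_{u\in T}\Delta_u F(S)\ge a\,\Delta_T F(S)$, using the same ADR sandwich and a telescoping decomposition. Your route is slightly more self-contained (it unpacks what Proposition~8 of \cite{Halabi:2018aa} would do anyway), while the paper's route records the marginally stronger intermediate fact that $F$ is $a$-submodular; both land on the same cited guarantee, and your closing remark already acknowledges this connection.
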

\vspace{-4mm}

\begin{proof}
We note that if $F$ is $\varepsilon\text{-ADR}$, then 
$F$ is $a$-submodular \cite{Halabi:2018aa} via the fact that 
$\varepsilon\text{-ADR}:  
(1-\varepsilon)\Delta_{u,i}f(\pmb{x})\leq{}\Delta_{u,i}F(\pmb{x})\leq{}(1+\varepsilon)\Delta_{u,i}f(\pmb{x}),$ 
which implies 
$\Delta_{u,i}F(\pmb{y})(1-\varepsilon)/(1+\varepsilon) \le
(1-\varepsilon)\Delta_{u,i}f(\pmb{y}) \le (1-\varepsilon)\Delta_{u,i}f(\pmb{x})\leq \Delta_{u,i}F(\pmb{x}) 
$ 
for any $\pmb{x} \le \pmb{y}, u \not\in \bigcup_{l \in [k]} X_l$ and $i\in[k]$ with 
$a = (1-\varepsilon)/(1+\varepsilon)$
(i.e., 
a function, say $g$, is $a$-submodular if and only if $ \Delta_{u,i}g(\pmb{x}) \ge a \Delta_{u,i}g(\pmb{y})$ for any $\pmb{x} \le \pmb{y}$ \cite{Halabi:2018aa}). 
As any $a$-submodular function is weakly submodular (Proposition 8 \cite{Halabi:2018aa}), the greedy algorithm provides an approximation of $(1-e^{-a})$ \cite{Das:2011aa,Bian:2017aa}.
\end{proof}

\section{Approximately $k$-Submodular Function Maximization: $k>1$} \label{section:kg1}
In this section, we consider the problems of maximizing approximately $k$-submodular functions 
under the $\varepsilon$-AS and $\varepsilon$-ADR definitions subject to the total size, $\max\limits_{\pmb{x}:|supp(\pmb{x})|\leq{B}}F(\pmb{x})$, 
or individual size constraints, $\max\limits_{\pmb{x}:|supp_i(\pmb{x})|\leq{B_i},\forall{}i\in[k]}F(\pmb{x})$, for some function $F$ 
and $B, B_1, ..., B_k \in \mathbb{Z}^+$. 
We show that the greedy algorithms \cite{Ohsaka:2015aa} $k$-Greedy-TS (see Algorithm \ref{algorithm1}) 
and $k$-Greedy-IS (see Algorithm \ref{algorithm2}) provide (asymptotically tight as $\varepsilon\to 0$ for TS) approximation ratios to function $F$. 
Algorithm \ref{algorithm1} and Algorithm \ref{algorithm2}  essentially add a single element with the highest marginal gain to one of the $k$ subsets at each iteration without violating the TS and IS constraints, respectively. 
Algorithm \ref{algorithm1} and Algorithm \ref{algorithm2} requires evaluating the function $O(knB)$ and $O(kn\sum_{i\in[k]}B_i)$ times, respectively.   

\begin{algorithm2e}\small
     \KwIn{a $\varepsilon$-approximately $k$-submodular function $F:(k+1)^V\mapsto\mathbb{R}^+$ 
     and $B\in\mathbb{Z}^+$.}
     \KwOut{a vector $\pmb{x}$ with $|supp(\pmb{x})|=B$.}

     $\pmb{x}\leftarrow\pmb{0}$\;
     
     \For{$j=1$ \KwTo $B$}{
         $(e, i)\leftarrow{}\argmax_{e\in{}V\setminus{supp(\pmb{x})},i\in[k]}\Delta_{e,i}F(\pmb{x})$\;
         
         $\pmb{x}(e)\leftarrow{}i$\;
     } \caption{$k$-Greedy-TS (Total Size)}
     \label{algorithm1}
 
 \end{algorithm2e}
 \vspace{-2mm}
 \begin{algorithm2e}\small
    \caption{$k$-Greedy-IS (Individual Size)}
    \label{algorithm2}
    \KwIn{a $\varepsilon$-approximately $k$-submodular function $F:(k+1)^V\mapsto\mathbb{R}^+$ 
    and $B_1, \cdots, B_k\in\mathbb{Z}^+$.}
    \KwOut{a vector $\pmb{x}$ with $|supp_i(\pmb{x})|=B_i$ $\forall i\in[k]$.}
    $\pmb{x}\leftarrow\pmb{0}$; $I\leftarrow[k]$\;
    
    \While{$I\neq\emptyset$}{
        $(e, i)\leftarrow{}\argmax_{e\in{}V\setminus{}supp(\pmb{x}),i\in I}\Delta_{e,i}F(\pmb{x})$\;
        
        $\pmb{x}(e)\leftarrow{}i$\;
        
        \If{$|supp_i(\pmb{x})=B_i|$}{
            $I\leftarrow{}I\setminus\{i\}$\;
        }
    }
\end{algorithm2e}

The proof techniques in this section use similar ideas from \cite{Ohsaka:2015aa}. 
However, the proofs in \cite{Ohsaka:2015aa} do not apply trivially and directly without our carefully designed lemmas and appropriate derivations. 

\subsection{Maximizing $\varepsilon$-AS and $\varepsilon$-ADR Functions with the TS Constraint} \label{section1} 
  
We consider the problem of maximizing $\varepsilon$-AS or $\varepsilon$-ADR function $F$ 
subject to the total size constraint $B$ using Algorithm \ref{algorithm1} on function $F$.  
We use the following notations as in \cite{Ohsaka:2015aa}. 
Let $\pmb{x}^{(j)}$ be the solution after the $j$-th iteration of Algorithm \ref{algorithm1}. 
For each $j$, we let $(e^{(j)}, i^{(j)})\in{}V\times[k]$ be the selected pair. 
Let $\pmb{o} \in \argmax\limits_{\pmb{x}:|supp(\pmb{x})|\leq{B}}F(\pmb{x})$ be an optimal solution.

Our goal is to compare the greedy solution $\pmb{x}$ and an optimal solution $\pmb{o}$. 
To begin, we define $\pmb{o}^{(0)}=\pmb{o}, \pmb{o}^{(\frac{1}{2})}, \pmb{o}^{(1)}, 
\cdots, \pmb{o}^{(B)}$ iteratively. 
Let $S^{(j)}=supp(\pmb{o}^{(j-1)})\setminus{}supp(\pmb{x}^{(j-1)})$ with $\pmb{x}^{(0)}=\pmb{0}$. 
We set $o^{(j)}$ to be an arbitrary element in $S^{(j)}$ if $e^{(j)}\notin{}S^{(j)}$, and set $o^{(j)}=e^{(j)}$ otherwise.
We construct $\pmb{o}^{(j-\frac{1}{2})}$ from $\pmb{o}^{(j-1)}$ by assigning $0$ to the $o^{(j)}$-th element. 
Next, we define $\pmb{o}^{(j)}$ from $\pmb{o}^{(j-\frac{1}{2})}$ by assigning $i^{(j)}$ to the $e^{(j)}$-th element. 
As such, we have $|supp(\pmb{o}^{(j)})|=B$ for all $j\in [B]$ and $\pmb{o}^{(B)}=\pmb{x}^{(B)}=\pmb{x}$. 
Finally, we note that $\pmb{x}^{(j-1)}\preceq\pmb{o}^{(j-\frac{1}{2})}$ for every $j\in[B]$.

\subsubsection{$\varepsilon$-AS Functions with the TS Constraint} 

\begin{theorem} \label{theorem:eASTS}
Suppose $F$ is $\varepsilon$-approximately $k$-submodular. The $k$-Greedy-TS algorithm 
provides an approximation ratio of 
$\frac{(1-\varepsilon)^2}{2(1-\varepsilon+\varepsilon{}B)(1+\varepsilon)}$ for the total size constrained maximization problem. \label{thm:greedyonF}
\end{theorem}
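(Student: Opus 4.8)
The plan is to adapt the chain argument of \cite{Ohsaka:2015aa} for monotone $k$-submodular maximization under a total size constraint — which yields ratio $\frac12$ in the exact case — by routing every diminishing-returns step through the $k$-submodular witness $f$ and converting between $f$ and $F$ at each such step via the $\varepsilon$-AS sandwich $(1-\varepsilon)f\le F\le(1+\varepsilon)f$. Concretely, I would use the auxiliary chain $\pmb{o}=\pmb{o}^{(0)},\pmb{o}^{(\frac12)},\pmb{o}^{(1)},\dots,\pmb{o}^{(B)}=\pmb{x}$ built above and telescope $F(\pmb{o})-F(\pmb{x})=\sum_{j=1}^{B}\big(F(\pmb{o}^{(j-1)})-F(\pmb{o}^{(j)})\big)$, so that it suffices to bound each term by a controlled multiple of the greedy gain $\Delta_{e^{(j)},i^{(j)}}F(\pmb{x}^{(j-1)})$ plus a small additive error.

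For a fixed $j$, writing $c=\pmb{o}^{(j-1)}(o^{(j)})$, I would split $F(\pmb{o}^{(j-1)})-F(\pmb{o}^{(j)})=\Delta_{o^{(j)},c}F(\pmb{o}^{(j-\frac12)})-\Delta_{e^{(j)},i^{(j)}}F(\pmb{o}^{(j-\frac12)})$ into a ``removal loss'' and an ``insertion gain''. Since $\pmb{o}^{(j-\frac12)}\preceq\pmb{o}^{(j)}$, monotonicity of $f$ combined with $\varepsilon$-AS lower-bounds the gain by $-2\varepsilon f(\pmb{o}^{(j-\frac12)})$. For the loss, the sequence of moves is: pass from $F$ to $f$ by $\varepsilon$-AS; use the diminishing-returns property of $f$ with $\pmb{x}^{(j-1)}\preceq\pmb{o}^{(j-\frac12)}$ and $o^{(j)}\notin supp(\pmb{o}^{(j-\frac12)})$ to move the marginal down from $\pmb{o}^{(j-\frac12)}$ to $\pmb{x}^{(j-1)}$; use that the greedy step maximizes $\Delta_{\cdot,\cdot}F(\pmb{x}^{(j-1)})$, so $\Delta_{o^{(j)},c}F(\pmb{x}^{(j-1)})\le\Delta_{e^{(j)},i^{(j)}}F(\pmb{x}^{(j-1)})$; and pass back from $f$ to $F$ by $\varepsilon$-AS. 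Each conversion costs a factor $\tfrac{1+\varepsilon}{1-\varepsilon}$ and/or an additive $O(\varepsilon)$ times an $f$- or $F$-value of $\pmb{x}^{(j-1)}$ or $\pmb{o}^{(j-\frac12)}$, collected into a per-step inequality in which $\Delta_{e^{(j)},i^{(j)}}F(\pmb{x}^{(j-1)})$ carries coefficient $\tfrac{1+\varepsilon}{1-\varepsilon}$. The intermediate vectors are then controlled by a supporting lemma: monotonicity of $f$ gives $f(\pmb{x}^{(j-1)})\le f(\pmb{x})$ and $f(\pmb{o}^{(j-\frac12)})\le f(\pmb{o}^{(j-1)})$, and — writing $\pmb{o}^{(j-1)}$ as the disjoint $\sqcup$-combination of $\pmb{x}^{(j-1)}$ and a restriction of $\pmb{o}$ — the $k$-submodular inequality $f(\pmb{a}\sqcup\pmb{b})+f(\pmb{a}\sqcap\pmb{b})\le f(\pmb{a})+f(\pmb{b})$ with $f\ge0$ gives $f(\pmb{o}^{(j-1)})\le f(\pmb{x}^{(j-1)})+f(\pmb{o})$; $\varepsilon$-AS then turns all of these into multiples of $F(\pmb{x})$ and $F(\pmb{o})$.

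Summing over $j=1,\dots,B$, the greedy-gain terms telescope via $\sum_j\Delta_{e^{(j)},i^{(j)}}F(\pmb{x}^{(j-1)})=F(\pmb{x})-F(\pmb{0})=F(\pmb{x})$, while the additive errors accumulate to $O(\varepsilon B)$ times $F(\pmb{x})$ (and $F(\pmb{o})$) — this is where the factor $(1-\varepsilon+\varepsilon B)$ appears — giving $F(\pmb{o})\le\big(1+\tfrac{1+\varepsilon}{1-\varepsilon}+O(\varepsilon B)\big)F(\pmb{x})$; isolating $F(\pmb{o})$ and tracking the constants exactly yields $F(\pmb{x})\ge\frac{(1-\varepsilon)^2}{2(1-\varepsilon+\varepsilon B)(1+\varepsilon)}F(\pmb{o})$. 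I expect the main difficulty to be precisely this $\varepsilon$-accounting: to obtain the stated expression rather than a looser one of the same shape, every $F\leftrightarrow f$ conversion must use the tightest constants, the additive error of each step must stay essentially proportional to $F(\pmb{x}^{(j-1)})$ so the accumulation is $O(\varepsilon B)F(\pmb{x})$ rather than larger, and the $F(\pmb{o})$-proportional terms must be grouped so the final rearrangement is clean. A secondary but necessary point is verifying the structural facts about the chain — that $\pmb{x}^{(j-1)}\preceq\pmb{o}^{(j-\frac12)}$, that $o^{(j)}\notin supp(\pmb{x}^{(j-1)})\cup supp(\pmb{o}^{(j-\frac12)})$, and that $\pmb{o}^{(j-1)}$ splits as a conflict-free $\sqcup$ of $\pmb{x}^{(j-1)}$ and a restriction of $\pmb{o}$ — uniformly across the cases $o^{(j)}=e^{(j)}$ and $o^{(j)}\neq e^{(j)}$, so that the diminishing-returns and $k$-submodular-inequality steps legitimately apply.
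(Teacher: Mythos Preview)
Your high-level plan—chain $\pmb{o}=\pmb{o}^{(0)},\dots,\pmb{o}^{(B)}=\pmb{x}$, bound each step via diminishing returns of $f$ plus the greedy choice, and sum—is the right skeleton, and the structural facts about the chain that you list are correct. But the decision to telescope $F(\pmb{o})-F(\pmb{x})$ and to shuttle between $F$ and $f$ \emph{inside each step} creates error terms that do not disappear, and these prevent you from reaching the stated constant.

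Concretely, your ``insertion gain'' bound $\Delta_{e^{(j)},i^{(j)}}F(\pmb{o}^{(j-\frac12)})\ge-2\varepsilon f(\pmb{o}^{(j-\frac12)})$ is forced on you because $F$ need not be monotone; this puts a $+2\varepsilon f(\pmb{o}^{(j-\frac12)})$ on the right of every step. Your supporting lemma then trades each such term for $f(\pmb{x}^{(j-1)})+f(\pmb{o})$, so after summing you carry an $O(\varepsilon B)f(\pmb{o})$ contribution that must be moved to the left side. The resulting ratio has the right shape but not the coefficients $\frac{(1-\varepsilon)^2}{2(1-\varepsilon+\varepsilon B)(1+\varepsilon)}$; you yourself flag the bookkeeping as the main difficulty, and indeed it does not close.

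The paper avoids all of this by telescoping in $f$ rather than in $F$. The key per-step inequality (their Lemma~\ref{lemma:ineq1}) is
\[
\frac{1+\varepsilon}{1-\varepsilon}\,f(\pmb{x}^{(j)})-f(\pmb{x}^{(j-1)})\ \ge\ f(\pmb{o}^{(j-1)})-f(\pmb{o}^{(j)}),
\]
obtained by: greedy gives $F(\pmb{x}^{(j)})\ge F(\pmb{x}^{(j-1)}{+}(o^{(j)},c))$, hence $(1+\varepsilon)f(\pmb{x}^{(j)})\ge(1-\varepsilon)f(\pmb{x}^{(j-1)}{+}(o^{(j)},c))$ by $\varepsilon$-AS, then diminishing returns of $f$ and monotonicity of $f$ (so the insertion gain in $f$ is simply $\ge0$, no $\pmb{o}$-term). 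Summing over $j$, the only ``leftover'' is $\sum_{j=1}^{B-1}\frac{2\varepsilon}{1-\varepsilon}f(\pmb{x}^{(j)})\le\frac{2\varepsilon(B-1)}{1-\varepsilon}f(\pmb{x})$, giving $f(\pmb{o})\le\frac{2(1-\varepsilon+\varepsilon B)}{1-\varepsilon}f(\pmb{x})$; a single $\varepsilon$-AS conversion at the end yields the stated ratio exactly. No $\sqcup$-decomposition lemma is needed, and no $f(\pmb{o})$-proportional error ever appears. The fix to your plan is thus simply to do the whole telescoping in $f$ and invoke $\varepsilon$-AS only twice: once to translate the greedy comparison into $f$, and once at the very end.
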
 

To prove the above theorem, we first need to prove the following key lemma 
(see Appendix \ref{sec:algandproofs}).

\begin{lemma} \label{lemma:ineq1}
For any $j \in [B]$, $ \frac{1+\varepsilon}{1-\varepsilon}f(\pmb{x}^{(j)})-f(\pmb{x}^{(j-1)})\geq{}f(\pmb{o}^{(j-1)})-f(\pmb{o}^{(j)})$. 
\end{lemma}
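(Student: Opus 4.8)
The plan is to fix an index $j \in [B]$ and relate the four quantities $f(\pmb{x}^{(j)})$, $f(\pmb{x}^{(j-1)})$, $f(\pmb{o}^{(j-1)})$, $f(\pmb{o}^{(j)})$ through the two ``one-step'' changes that define the interpolating sequence, namely deleting $o^{(j)}$ from $\pmb{o}^{(j-1)}$ (to reach $\pmb{o}^{(j-1/2)}$) and then adding $e^{(j)}$ with colour $i^{(j)}$ (to reach $\pmb{o}^{(j)}$). First I would write the telescoping identity $f(\pmb{o}^{(j-1)}) - f(\pmb{o}^{(j)}) = \bigl(f(\pmb{o}^{(j-1)}) - f(\pmb{o}^{(j-1/2)})\bigr) - \Delta_{e^{(j)}, i^{(j)}} f(\pmb{o}^{(j-1/2)})$. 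The second term is a marginal gain of $f$ that we can control: since $\pmb{x}^{(j-1)} \preceq \pmb{o}^{(j-1/2)}$ and $f$ is $k$-submodular (hence orthant/diminishing-returns), $\Delta_{e^{(j)}, i^{(j)}} f(\pmb{o}^{(j-1/2)}) \le \Delta_{e^{(j)}, i^{(j)}} f(\pmb{x}^{(j-1)})$, and this last quantity is exactly $f(\pmb{x}^{(j)}) - f(\pmb{x}^{(j-1)})$ in the $f$-world. So the $f$-marginal of the greedy step upper-bounds the contribution we need.

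The first term, $f(\pmb{o}^{(j-1)}) - f(\pmb{o}^{(j-1/2)})$, is the loss from deleting the single element $o^{(j)}$; by monotonicity of $f$ it is nonnegative, but I need an upper bound on it in terms of $f(\pmb{x}^{(j)})$. Here is where the greedy choice and the $\varepsilon$-AS/$\varepsilon$-ADR slack enter. The element $e^{(j)}$ was chosen by Algorithm~\ref{algorithm1} to maximize $\Delta_{e,i} F(\pmb{x}^{(j-1)})$ over all available $(e,i)$; in particular it beats the pair $(o^{(j)}, c)$ where $c$ is the colour of $o^{(j)}$ in $\pmb{o}^{(j-1)}$ (note $o^{(j)} \notin supp(\pmb{x}^{(j-1)})$ by construction of $S^{(j)}$). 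Passing between $F$ and $f$ with the $(1\pm\varepsilon)$ factors, this gives $\Delta_{o^{(j)}, c} f(\pmb{x}^{(j-1)}) \le \frac{1+\varepsilon}{1-\varepsilon} \Delta_{e^{(j)}, i^{(j)}} f(\pmb{x}^{(j-1)}) \le \frac{1+\varepsilon}{1-\varepsilon}\bigl(f(\pmb{x}^{(j)}) - f(\pmb{x}^{(j-1)})\bigr)$. Then I would bound the deletion loss $f(\pmb{o}^{(j-1)}) - f(\pmb{o}^{(j-1/2)})$ by the corresponding marginal gain at the smaller vector $\pmb{x}^{(j-1)}$: deleting $o^{(j)}$ from $\pmb{o}^{(j-1)}$ loses at most what adding it (with colour $c$) to $\pmb{x}^{(j-1)}$ gains, again by the diminishing-returns property applied to $\pmb{x}^{(j-1)} \preceq \pmb{o}^{(j-1/2)}$ and the fact that $\pmb{o}^{(j-1)} = \pmb{o}^{(j-1/2)}$ with $o^{(j)}$ set to colour $c$. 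Combining, $f(\pmb{o}^{(j-1)}) - f(\pmb{o}^{(j-1/2)}) \le \frac{1+\varepsilon}{1-\varepsilon}\bigl(f(\pmb{x}^{(j)}) - f(\pmb{x}^{(j-1)})\bigr)$.

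Adding the two bounds yields $f(\pmb{o}^{(j-1)}) - f(\pmb{o}^{(j)}) \le \frac{1+\varepsilon}{1-\varepsilon}\bigl(f(\pmb{x}^{(j)}) - f(\pmb{x}^{(j-1)})\bigr) + \bigl(f(\pmb{x}^{(j)}) - f(\pmb{x}^{(j-1)})\bigr)$, which is more than enough to deliver $\frac{1+\varepsilon}{1-\varepsilon} f(\pmb{x}^{(j)}) - f(\pmb{x}^{(j-1)}) \ge f(\pmb{o}^{(j-1)}) - f(\pmb{o}^{(j)})$ after I rearrange and use $f(\pmb{x}^{(j)}) \ge f(\pmb{x}^{(j-1)})$ (monotonicity) to absorb the leftover; I will need to be careful whether the clean statement requires only one factor of $\frac{1+\varepsilon}{1-\varepsilon}$ rather than two, in which case I would instead note that $o^{(j)} = e^{(j)}$ whenever $e^{(j)} \in S^{(j)}$ so that the deletion and insertion steps partially cancel, and otherwise handle the deletion loss using pairwise monotonicity (part (b) of Definition~\ref{def:ksubgain}) to trade the colour-$c$ marginal against the colour-$i^{(j)}$ marginal. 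The main obstacle is precisely this bookkeeping: making the $F$-to-$f$ conversions land on the correct side of each inequality (upper bound on $F$-gain $\Rightarrow$ via $1-\varepsilon$ lower factor on $f$, etc.) and ensuring the deletion term does not cost a second $\frac{1+\varepsilon}{1-\varepsilon}$ factor, since the stated lemma has only one. Everything else is monotonicity and orthant-submodularity of $f$, both available from Definition~\ref{def:ksubgain}.
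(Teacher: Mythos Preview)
Your decomposition and the orthant-submodularity step are fine, but there is a genuine gap in how you pass from $F$ to $f$, and it is exactly the place you flagged as worrisome.

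The claim ``passing between $F$ and $f$ with the $(1\pm\varepsilon)$ factors gives $\Delta_{o^{(j)},c} f(\pmb{x}^{(j-1)}) \le \frac{1+\varepsilon}{1-\varepsilon}\,\Delta_{e^{(j)},i^{(j)}} f(\pmb{x}^{(j-1)})$'' is not valid under $\varepsilon$-AS. The $\varepsilon$-AS bounds are on \emph{values}, not marginals; if you apply them termwise to $\Delta F = F(\cdot)-F(\pmb{x}^{(j-1)})$ you pick up an additive $2\varepsilon f(\pmb{x}^{(j-1)})$ that does not go away. (Your inequality would follow from $\varepsilon$-ADR, but this lemma is for $\varepsilon$-AS.) Moreover, even granting your bound, the conclusion you reach, $f(\pmb{o}^{(j-1)})-f(\pmb{o}^{(j)})\le \bigl(\tfrac{1+\varepsilon}{1-\varepsilon}+1\bigr)\bigl(f(\pmb{x}^{(j)})-f(\pmb{x}^{(j-1)})\bigr)$, does \emph{not} imply the stated lemma: for $j=1$ you would need $f(\pmb{x}^{(1)})\le \tfrac{1+\varepsilon}{1-\varepsilon}f(\pmb{x}^{(0)})=0$. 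So the ``absorb with monotonicity'' step cannot rescue the extra factor.

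The fix the paper uses is short and avoids both issues. From the greedy inequality $\Delta_{e^{(j)},i^{(j)}}F(\pmb{x}^{(j-1)})\ge \Delta_{o^{(j)},c}F(\pmb{x}^{(j-1)})$ the common term $F(\pmb{x}^{(j-1)})$ cancels, leaving $F(\pmb{x}^{(j)})\ge F\bigl(\pmb{x}^{(j-1)}\text{ with }o^{(j)}\text{ at colour }c\bigr)$. Now apply $\varepsilon$-AS to these two \emph{values} to get $\tfrac{1+\varepsilon}{1-\varepsilon}f(\pmb{x}^{(j)})\ge f\bigl(\pmb{x}^{(j-1)}\text{ with }o^{(j)}\text{ at }c\bigr)$, and subtract $f(\pmb{x}^{(j-1)})$ from both sides. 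This yields exactly $\tfrac{1+\varepsilon}{1-\varepsilon}f(\pmb{x}^{(j)})-f(\pmb{x}^{(j-1)})\ge \Delta_{o^{(j)},c}f(\pmb{x}^{(j-1)})$, with a single $\tfrac{1+\varepsilon}{1-\varepsilon}$ factor and no extra slack. For the right-hand side you only need the deletion piece: by monotonicity $f(\pmb{o}^{(j-\frac12)})\le f(\pmb{o}^{(j)})$, so $f(\pmb{o}^{(j-1)})-f(\pmb{o}^{(j)})\le f(\pmb{o}^{(j-1)})-f(\pmb{o}^{(j-\frac12)}) = \Delta_{o^{(j)},c}f(\pmb{o}^{(j-\frac12)}) \le \Delta_{o^{(j)},c}f(\pmb{x}^{(j-1)})$, the last step by diminishing returns since $\pmb{x}^{(j-1)}\preceq\pmb{o}^{(j-\frac12)}$. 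There is no need to bound the insertion term at all, and no pairwise-monotonicity trick is required.
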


\begin{proof}[Proof of Theorem \ref{theorem:eASTS}] We have
\begin{equation*}
{\small
\begin{aligned}
&f(\pmb{o})-f(\pmb{x})
=\sum_{j\in[B]}\left(f(\pmb{o}^{(j-1)})-f(\pmb{o}^{(j)})\right)\\ &\leq
\sum_{j\in[B]}\left(\frac{1+\varepsilon}{1-\varepsilon}f(\pmb{x}^{(j)})-f(\pmb{x}^{(j-1)})\right)\\
&=\frac{1+\varepsilon}{1-\varepsilon}f(\pmb{x})+\sum_{j=1}^{B-1}\left(\frac{2\varepsilon}{1-\varepsilon}f(\pmb{x}^{(j)})\right) \\
&\leq\frac{1+\varepsilon}{1-\varepsilon}f(\pmb{x})+\sum_{j=1}^{B-1}\left(\frac{2\varepsilon}{1-\varepsilon}f(\pmb{x})\right)=\frac{1-\varepsilon+2\varepsilon{}B}{1-\varepsilon}f(\pmb{x}), 
\end{aligned}}
\end{equation*}
where the first inequality is by Lemma \ref{lemma:ineq1}. Thus,  $f(\pmb{x})\geq\frac{1-\varepsilon}{2-2\varepsilon+2\varepsilon{}B}f(\pmb{o})$. 
We have $F(\pmb{x})\geq\frac{(1-\varepsilon)^2}{2(1-\varepsilon+\varepsilon{}B)(1+\varepsilon)}F(\pmb{o})$ 
after applying the $\varepsilon$-AS definition. 
\end{proof}

\subsubsection{$\varepsilon$-ADR Functions with the TS Constraint} 

\begin{lemma} \label{lemma:ineq2}
For any $j \in [B]$, it holds that\\ $\frac{1}{1-\varepsilon}\left[F(\pmb{x}^{(j)})-F(\pmb{x}^{(j-1)})\right]
    \geq{}\frac{1}{1+\varepsilon}\left[F(\pmb{o}^{(j-1)})-F(\pmb{o}^{(j)})\right]$. 
\end{lemma}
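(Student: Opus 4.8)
The plan is to mirror the structure of the proof of Lemma \ref{lemma:ineq1} (the $\varepsilon$-AS analogue), but to work directly with $F$ and exploit the stronger, per-marginal-gain control that the $\varepsilon$-ADR hypothesis provides. Recall the construction of the interpolating sequence $\pmb{o}^{(0)},\pmb{o}^{(1/2)},\pmb{o}^{(1)},\dots$: passing from $\pmb{o}^{(j-1)}$ to $\pmb{o}^{(j-1/2)}$ removes the element $o^{(j)}$ from whatever block it sits in, and passing from $\pmb{o}^{(j-1/2)}$ to $\pmb{o}^{(j)}$ adds $e^{(j)}$ into block $i^{(j)}$. Hence the telescoped difference $F(\pmb{o}^{(j-1)}) - F(\pmb{o}^{(j)})$ decomposes as a single ``deletion'' gain plus a single ``insertion'' gain, i.e. as $\bigl[F(\pmb{o}^{(j-1)}) - F(\pmb{o}^{(j-1/2)})\bigr] - \bigl[F(\pmb{o}^{(j)}) - F(\pmb{o}^{(j-1/2)})\bigr]$, and each bracket is (up to sign) a term of the form $\Delta_{u,\ell}F(\cdot)$ evaluated at an appropriate point. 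The first step, then, is to write out this decomposition carefully, being attentive to the case split on whether $e^{(j)} \in S^{(j)}$ or not, since in the former case $o^{(j)} = e^{(j)}$ and the deletion/insertion may partially cancel or occur in the same block.

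The second step is to convert each $\Delta F$ term into a $\Delta f$ term at cost of the factors $(1\pm\varepsilon)$ coming from the $\varepsilon$-ADR definition, choosing the direction of the inequality so that everything points the right way: upper-bound the positive contributions using $\Delta_{u,\ell}F \le (1+\varepsilon)\Delta_{u,\ell}f$ and lower-bound the negative ones using $\Delta_{u,\ell}F \ge (1-\varepsilon)\Delta_{u,\ell}f$. Then invoke the $k$-submodularity of $f$ — specifically the diminishing-returns property (Definition \ref{def:ksubgain}(a)) together with $\pmb{x}^{(j-1)} \preceq \pmb{o}^{(j-1/2)}$ — exactly as in the proof of Lemma \ref{lemma:ineq1}, to bound $\Delta_{e^{(j)},i^{(j)}} f(\pmb{o}^{(j-1/2)}) \le \Delta_{e^{(j)},i^{(j)}} f(\pmb{x}^{(j-1)})$, and use the greedy choice of $(e^{(j)}, i^{(j)})$ as the maximizer of the marginal gain of $F$ (equivalently, after an $\varepsilon$-ADR translation, a near-maximizer for $f$) to relate the insertion gain to $f(\pmb{x}^{(j)}) - f(\pmb{x}^{(j-1)})$. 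Finally, translate the resulting $f$-inequality back into an $F$-inequality of the claimed shape $\frac{1}{1-\varepsilon}[F(\pmb{x}^{(j)}) - F(\pmb{x}^{(j-1)})] \ge \frac{1}{1+\varepsilon}[F(\pmb{o}^{(j-1)}) - F(\pmb{o}^{(j)})]$, collecting the $(1\pm\varepsilon)$ factors; the asymmetry in the final bound (a $\frac{1}{1-\varepsilon}$ on the greedy side and a $\frac{1}{1+\varepsilon}$ on the optimal side) is precisely what one gets from bounding the greedy increment from below and the optimal decrement from above.

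I expect the main obstacle to be bookkeeping rather than any deep idea: one must be careful that when $f$ is only $k$-submodular (not submodular) the deletion step $\pmb{o}^{(j-1)} \to \pmb{o}^{(j-1/2)}$ removes an element that may be the same as or different from $e^{(j)}$ and may lie in a different coordinate block, so the pairwise-monotonicity clause (Definition \ref{def:ksubgain}(b)) or monotonicity of $f$ may need to be invoked to control the sign of the deletion gain; and one must verify that $\pmb{x}^{(j-1)} \preceq \pmb{o}^{(j-1/2)}$ really does hold in the relevant block so that diminishing returns applies. A secondary subtlety is that, unlike in the $\varepsilon$-AS proof where the bound on $f$ was derived first and then converted, here the $\varepsilon$-ADR hypothesis lets us avoid introducing the additive $\tfrac{2\varepsilon}{1-\varepsilon} f(\pmb{x}^{(j)})$ slack entirely — so one should be careful to keep the per-step inequality clean, since it is exactly this cleanliness that will later yield the better ratio $\frac{1-\varepsilon}{2}$ instead of $\frac{(1-\varepsilon)^2}{2(1-\varepsilon+\varepsilon B)(1+\varepsilon)}$ after telescoping.
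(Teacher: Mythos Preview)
Your overall architecture is the same as the paper's --- decompose $F(\pmb{o}^{(j-1)}) - F(\pmb{o}^{(j)})$ through the intermediate $\pmb{o}^{(j-1/2)}$, pass between $F$ and $f$ via $\varepsilon$-ADR, invoke diminishing returns of $f$ using $\pmb{x}^{(j-1)}\preceq\pmb{o}^{(j-1/2)}$, and close with the greedy choice --- but you have assigned the tools to the wrong terms, and as written the plan does not close. In the decomposition
\[
F(\pmb{o}^{(j-1)}) - F(\pmb{o}^{(j)}) \;=\; \underbrace{\Delta_{o^{(j)},\,\ell}\,F\bigl(\pmb{o}^{(j-1/2)}\bigr)}_{\text{deletion, }\ell=\pmb{o}^{(j-1)}(o^{(j)})} \;-\; \underbrace{\Delta_{e^{(j)},\,i^{(j)}}\,F\bigl(\pmb{o}^{(j-1/2)}\bigr)}_{\text{insertion}},
\]
the insertion term carries a minus sign, so upper-bounding the right-hand side requires a \emph{lower} bound on it. Your proposed inequality $\Delta_{e^{(j)},i^{(j)}} f(\pmb{o}^{(j-1/2)}) \le \Delta_{e^{(j)},i^{(j)}} f(\pmb{x}^{(j-1)})$ is true but points the wrong way and is not usable here.

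In the paper (and in Lemma~\ref{lemma:ineq1}, which you say you are mirroring) the insertion term is simply discarded by monotonicity, $\Delta_{e^{(j)},i^{(j)}}F(\pmb{o}^{(j-1/2)})\ge 0$, and diminishing returns is applied to the \emph{deletion} term: $\Delta_{o^{(j)},\ell} f(\pmb{o}^{(j-1/2)}) \le \Delta_{o^{(j)},\ell} f(\pmb{x}^{(j-1)})$. The greedy choice then enters as $\Delta_{e^{(j)},i^{(j)}}F(\pmb{x}^{(j-1)}) \ge \Delta_{o^{(j)},\ell}F(\pmb{x}^{(j-1)})$, comparing the greedy pair against the deleted optimal pair \emph{at the same base point} $\pmb{x}^{(j-1)}$; two applications of $\varepsilon$-ADR (one at $\pmb{x}^{(j-1)}$, one at $\pmb{o}^{(j-1/2)}$) then yield the stated factors $\frac{1}{1-\varepsilon}$ and $\frac{1}{1+\varepsilon}$. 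Once you swap which term receives the diminishing-returns/greedy treatment and which is dropped by monotonicity, your plan is correct and coincides with the paper's. Your concern about a case split and pairwise monotonicity is unnecessary in the TS setting: monotonicity of $f$ alone controls the discarded insertion term regardless of whether $o^{(j)}=e^{(j)}$.
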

\begin{proof}
See Appendix~\ref{sec:algandproofs}.
\end{proof}

\begin{theorem} \label{theorem3}
Suppose $F$ is $\varepsilon$-approximately diminishing returns. The $k$-Greedy-TS algorithm provides an approximation ratio of 
$\frac{1-\varepsilon}{2}$ for the total size constrained maximization problem.
\end{theorem}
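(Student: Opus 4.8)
The plan is to obtain the bound directly from Lemma~\ref{lemma:ineq2} by a telescoping-sum argument, in the same spirit as the proof of Theorem~\ref{theorem:eASTS} but exploiting the cleaner form of the $\varepsilon$-ADR inequality. First I would sum the inequality of Lemma~\ref{lemma:ineq2} over $j = 1, \dots, B$. On the left-hand side the sum telescopes to $\frac{1}{1-\varepsilon}\bigl(F(\pmb{x}^{(B)}) - F(\pmb{x}^{(0)})\bigr) = \frac{1}{1-\varepsilon}F(\pmb{x})$, using $\pmb{x}^{(0)} = \pmb{0}$ and the normalization $F(\pmb{0}) = 0$. On the right-hand side it telescopes to $\frac{1}{1+\varepsilon}\bigl(F(\pmb{o}^{(0)}) - F(\pmb{o}^{(B)})\bigr) = \frac{1}{1+\varepsilon}\bigl(F(\pmb{o}) - F(\pmb{x})\bigr)$, using $\pmb{o}^{(0)} = \pmb{o}$ and $\pmb{o}^{(B)} = \pmb{x}^{(B)} = \pmb{x}$ from the construction of the intermediate solutions.

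It then remains to rearrange. From $\frac{1}{1-\varepsilon}F(\pmb{x}) \geq \frac{1}{1+\varepsilon}\bigl(F(\pmb{o}) - F(\pmb{x})\bigr)$ we get $(1+\varepsilon)F(\pmb{x}) \geq (1-\varepsilon)\bigl(F(\pmb{o}) - F(\pmb{x})\bigr)$, hence $\bigl((1+\varepsilon) + (1-\varepsilon)\bigr)F(\pmb{x}) \geq (1-\varepsilon)F(\pmb{o})$, i.e. $2F(\pmb{x}) \geq (1-\varepsilon)F(\pmb{o})$, which gives $F(\pmb{x}) \geq \frac{1-\varepsilon}{2}F(\pmb{o})$, the claimed ratio. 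Note that, unlike the $\varepsilon$-AS case, no residual $\sum_j f(\pmb{x}^{(j)})$ terms survive the telescoping, so monotonicity is not needed at this step and the guarantee does not degrade with $B$; this is the structural reason the $\varepsilon$-ADR ratio is $\frac{1-\varepsilon}{2}$ rather than $\frac{(1-\varepsilon)^2}{2(1-\varepsilon+\varepsilon B)(1+\varepsilon)}$.

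The substance — and the main obstacle — is Lemma~\ref{lemma:ineq2} itself, which is where the argument must actually work; I may assume it here, but it is the nontrivial ingredient deferred to the appendix. To establish it one would write $F(\pmb{o}^{(j-1)}) - F(\pmb{o}^{(j)})$ as the removal gain of $o^{(j)}$ from $\pmb{o}^{(j-\frac{1}{2})}$ minus the insertion gain of $(e^{(j)}, i^{(j)})$ into $\pmb{o}^{(j-\frac{1}{2})}$, pass from $F$ to the underlying $k$-submodular $f$ via the two-sided $\varepsilon$-ADR bounds (taking care which direction of the bound hits which marginal), use diminishing returns of $f$ together with $\pmb{x}^{(j-1)} \preceq \pmb{o}^{(j-\frac{1}{2})}$ to move the removal gain onto $\pmb{x}^{(j-1)}$, and finally invoke optimality of the greedy pick, $\Delta_{e^{(j)}, i^{(j)}}F(\pmb{x}^{(j-1)}) \geq \Delta_{o^{(j)}, i}F(\pmb{x}^{(j-1)})$, together with pairwise monotonicity of $f$ to close the chain. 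Given Lemma~\ref{lemma:ineq2}, the theorem is essentially immediate.
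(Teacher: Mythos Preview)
Your proposal is correct and is essentially identical to the paper's proof: both sum Lemma~\ref{lemma:ineq2} over $j\in[B]$, telescope each side using $\pmb{o}^{(0)}=\pmb{o}$, $\pmb{o}^{(B)}=\pmb{x}$, $\pmb{x}^{(0)}=\pmb{0}$, and rearrange to obtain $F(\pmb{x})\ge\frac{1-\varepsilon}{2}F(\pmb{o})$. Your sketch of how Lemma~\ref{lemma:ineq2} is established also matches the appendix argument.
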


\begin{proof}
We have 
{\small
\begin{align*}
F(\pmb{o})-F(\pmb{x})&=\sum_{j\in[B]}\left[F(\pmb{o}^{(j-1)})-F(\pmb{o}^{j})\right]\\
&\leq{}\frac{1+\varepsilon}{1-\varepsilon}\sum_{j\in[B]}\left[F(\pmb{x}^{(j)})-F(\pmb{x}^{j-1})\right]\leq\frac{1+\varepsilon}{1-\varepsilon}F(\pmb{x}), 
\end{align*}}
\noindent where the first inequality is due to Lemma \ref{lemma:ineq2}. 
Thus, we have $F(\pmb{x})\geq{}\frac{1-\varepsilon}{2}F(\pmb{o})$.
\end{proof}

\subsection{Maximizing $\varepsilon$-AS and $\varepsilon$-ADR Functions with the IS Constraints} 
We consider the problem of maximizing an $\varepsilon$-AS or $\varepsilon$-ADR function $F$ 
subject to the individual size constraints $B_1, ..., B_k$ using Algorithm \ref{algorithm2} on the function $F$.  
In the individual size constraints maximization problem, we are given $B_1, ..., B_k$ 
restricting the maximum number of elements one can select for each subset.  
We define $B = \sum_{j \in [k] } B_j$. 
We simply state our main results here 
(see Appendix \ref{sec:ISconstraint} for key lemmas and proofs). 

\begin{theorem}
Suppose $F$ is $\varepsilon$-approximately $k$-submodular. The $k$-Greedy-IS algorithm provides an approximation ratio of 
$\frac{(1-\varepsilon)^2}{(3-3\varepsilon+2\varepsilon{}B)(1+\varepsilon)}$ for the individual size constrained maximization problem.   \label{theorem:ASIS}
\end{theorem}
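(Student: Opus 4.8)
The plan is to run Algorithm~\ref{algorithm2} on $F$, write $\pmb{x}=\pmb{x}^{(B)}$ for its output, and fix an optimal solution $\pmb{o}$ of the individual-size instance; following the convention of Section~\ref{section1} we take $|supp(\pmb{o})|=B$, which (with $\sum_i|supp_i(\pmb{o})|=B=\sum_iB_i$ and $|supp_i(\pmb{o})|\le B_i$) forces $|supp_i(\pmb{o})|=B_i$ for every $i\in[k]$. As in the proof of Theorem~\ref{theorem:eASTS}, the statement reduces to showing, for the underlying monotone $k$-submodular function $f$, that $f(\pmb{x})\ge\frac{1-\varepsilon}{3-3\varepsilon+2\varepsilon B}f(\pmb{o})$: two applications of the $\varepsilon$-AS inequalities then give $F(\pmb{x})\ge(1-\varepsilon)f(\pmb{x})\ge\frac{(1-\varepsilon)^2}{3-3\varepsilon+2\varepsilon B}f(\pmb{o})\ge\frac{(1-\varepsilon)^2}{(3-3\varepsilon+2\varepsilon B)(1+\varepsilon)}F(\pmb{o})$, and $F(\pmb{o})$ is the optimum.

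To get the bound on $f$ I would reuse the interpolation template of Section~\ref{section1}: build a chain $\pmb{o}=\pmb{o}^{(0)},\pmb{o}^{(\frac{1}{2})},\pmb{o}^{(1)},\dots,\pmb{o}^{(B)}=\pmb{x}$ where, at step $j$, the greedy pair $(e^{(j)},i^{(j)})$ with $i^{(j)}$ in the active set $I$ is known, we match it to a still-unused optimal element $o^{(j)}$ (taking $o^{(j)}=e^{(j)}$ when possible), delete $o^{(j)}$ to form $\pmb{o}^{(j-\frac{1}{2})}\succeq\pmb{x}^{(j-1)}$, and reinstall $e^{(j)}$ in coordinate $i^{(j)}$ --- recoloring one element if necessary --- to form $\pmb{o}^{(j)}$ with $|supp(\pmb{o}^{(j)})|=B$. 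The feature absent in the TS case is that the matched element $o^{(j)}$, or an element that has to make way for $e^{(j)}$, may be colored by a coordinate $i'\ne i^{(j)}$ in the current vector (a coordinate can already sit at its budget $B_{i}$), so the matching $j\mapsto o^{(j)}$ has to be arranged so that every per-coordinate budget is respected along the whole chain; some optimal elements then have to be recolored to fit rather than simply carried over.

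The core of the argument is a per-step inequality of (essentially) the shape
\begin{align*}
&\tfrac{1+\varepsilon}{1-\varepsilon}f(\pmb{x}^{(j)})-f(\pmb{x}^{(j-1)})+\bigl(f(\pmb{x}^{(j)})-f(\pmb{x}^{(j-1)})\bigr)\\
&\qquad\ge\ f(\pmb{o}^{(j-1)})-f(\pmb{o}^{(j)}),\qquad j\in[B],
\end{align*}
namely Lemma~\ref{lemma:ineq1} together with one extra greedy marginal gain $f(\pmb{x}^{(j)})-f(\pmb{x}^{(j-1)})=\Delta_{e^{(j)},i^{(j)}}f(\pmb{x}^{(j-1)})$ (which, for the bookkeeping above, is really one extra gain charged across the run rather than strictly stepwise). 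I would prove it by decomposing $f(\pmb{o}^{(j-1)})-f(\pmb{o}^{(j)})$ into a deletion part, an insertion part ($\le0$, since adding $e^{(j)}$ cannot decrease $f$ by monotonicity), and a recoloring part: the deletion part reproduces the two terms of Lemma~\ref{lemma:ineq1} via diminishing returns ($\pmb{x}^{(j-1)}\preceq\pmb{o}^{(j-\frac{1}{2})}$), greedy's choice of $(e^{(j)},i^{(j)})$ as a maximizer of $\Delta_{e,i}F(\pmb{x}^{(j-1)})$, and the $\varepsilon$-AS slack, exactly as in Lemma~\ref{lemma:ineq1}, while pairwise monotonicity together with diminishing returns bounds each recoloring move by a greedy marginal gain. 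Summing over $j$, the extra terms telescope to $\sum_j\bigl(f(\pmb{x}^{(j)})-f(\pmb{x}^{(j-1)})\bigr)=f(\pmb{x})$, and the rest is identical to Theorem~\ref{theorem:eASTS}: $\sum_j\bigl(\tfrac{1+\varepsilon}{1-\varepsilon}f(\pmb{x}^{(j)})-f(\pmb{x}^{(j-1)})\bigr)=\tfrac{1+\varepsilon}{1-\varepsilon}f(\pmb{x})+\sum_{j=1}^{B-1}\tfrac{2\varepsilon}{1-\varepsilon}f(\pmb{x}^{(j)})\le\tfrac{1-\varepsilon+2\varepsilon B}{1-\varepsilon}f(\pmb{x})$ by $f(\pmb{x}^{(j)})\le f(\pmb{x})$. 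Hence $f(\pmb{o})-f(\pmb{x})\le\tfrac{2-2\varepsilon+2\varepsilon B}{1-\varepsilon}f(\pmb{x})$, which rearranges to the claimed bound on $f$.

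The main obstacle is not the algebra but the combinatorics of the chain: one must define the matching $j\mapsto o^{(j)}$ and the vectors $\pmb{o}^{(j)}$ so that simultaneously (i) $\pmb{x}^{(j-1)}\preceq\pmb{o}^{(j-\frac{1}{2})}$ at every step, (ii) no coordinate ever exceeds its budget $B_i$ along the chain, and (iii) each recoloring forced by a coordinate mismatch is chargeable to some greedy marginal gain $\Delta_{e^{(j')},i^{(j')}}f(\pmb{x}^{(j'-1)})$ rather than lost --- all while threading the $\varepsilon$-AS multiplicative slack through the deletion estimate without inflating its constant past $\tfrac{1+\varepsilon}{1-\varepsilon}$. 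This is precisely the kind of carefully designed lemma the authors flag as the work needed beyond \cite{Ohsaka:2015aa}; once it is in place, the telescoping and the two uses of the $\varepsilon$-AS definition finish the theorem.
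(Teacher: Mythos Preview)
Your overall strategy---build an interpolating chain $\pmb{o}^{(0)},\dots,\pmb{o}^{(B)}$, prove a per-step inequality on $f$, telescope, then sandwich with the $\varepsilon$-AS bounds---is exactly the paper's. The gap is in the per-step inequality itself. You claim the recoloring contribution can be bounded by the raw $f$-gain $f(\pmb{x}^{(j)})-f(\pmb{x}^{(j-1)})=\Delta_{e^{(j)},i^{(j)}}f(\pmb{x}^{(j-1)})$ using only pairwise monotonicity and diminishing returns, so that the $\varepsilon$-slack appears on just one of the two terms. This does not go through: after you push a term like $\Delta_{e^{(j)},i'}f(\pmb{o}^{(j-\frac12)})$ or $\Delta_{o^{(j)},i^{(j)}}f(\pmb{o}^{(j-\frac12)}_{\,\cdot\,})$ down to $\pmb{x}^{(j-1)}$ by diminishing returns, you still must compare $\Delta_{e^{(j)},i'}f(\pmb{x}^{(j-1)})$ or $\Delta_{o^{(j)},i^{(j)}}f(\pmb{x}^{(j-1)})$ against $\Delta_{e^{(j)},i^{(j)}}$, and the only maximality available is greedy's choice on $F$, not on $f$. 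Passing from $F$ back to $f$ costs the $\tfrac{1+\varepsilon}{1-\varepsilon}$ factor for \emph{both} surviving terms. (Pairwise monotonicity adds nothing here beyond ordinary monotonicity of $f$.) The paper's key step, Lemma~\ref{lemma:ineq3}, is accordingly
$2\bigl[\tfrac{1+\varepsilon}{1-\varepsilon}f(\pmb{x}^{(j)})-f(\pmb{x}^{(j-1)})\bigr]\ge f(\pmb{o}^{(j-1)})-f(\pmb{o}^{(j)})$,
with the factor $2$ multiplying the full slacked expression; summing this and applying $\varepsilon$-AS twice is how the paper finishes.

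Your chain sketch is also looser than the one the paper takes from \cite{Ohsaka:2015aa}. In the case where $e^{(j)}$ already lies in $\pmb{o}^{(j-1)}$ at some $i'\ne i^{(j)}$, one removes \emph{two} elements to form $\pmb{o}^{(j-\frac12)}$---both $e^{(j)}$ and a chosen $o^{(j)}\in S_{i^{(j)}}^{(j)}$---and then reinserts $e^{(j)}$ at $i^{(j)}$ and $o^{(j)}$ at $i'$. This explicit swap is what simultaneously keeps $|supp_i(\pmb{o}^{(j)})|=B_i$ for every $i$, preserves $\pmb{x}^{(j-1)}\preceq\pmb{o}^{(j-\frac12)}$, and produces exactly the two marginals $\Delta_{e^{(j)},i'}f(\pmb{o}^{(j-\frac12)})$ and $\Delta_{o^{(j)},i^{(j)}}f(\pmb{o}^{(j-\frac12)}_{(e^{(j)},i')})$ that survive after dropping nonnegative terms in $f(\pmb{o}^{(j-1)})-f(\pmb{o}^{(j)})$. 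Your single-deletion-plus-recolor description does not clearly maintain these invariants or yield that decomposition.
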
    

\begin{theorem} \label{theorem:ADRIS}
Suppose $F$ is $\varepsilon$-approximately diminishing returns. The $k$-Greedy-IS algorithm provides an approximation ratio of 
$\frac{1-\varepsilon}{3+\varepsilon}$ for the individual size constrained maximization problem. 
\end{theorem}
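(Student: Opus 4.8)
The plan is to combine the interpolation-and-telescoping template of the total-size analysis (Lemma~\ref{lemma:ineq2} and Theorem~\ref{theorem3}) with the coordinate-saturation charging argument of \cite{Ohsaka:2015aa} for the individual-size constraint, carrying the two-sided $(1\pm\varepsilon)$ distortion of the $\varepsilon$-ADR definition through each inequality. Fix an optimal solution $\pmb{o}$, and let $\pmb{x}^{(j)}$ and $(e^{(j)},i^{(j)})$ be the partial solution and the pair selected by $k$-Greedy-IS after its $j$-th iteration, $j\in[B]$ with $B=\sum_{i\in[k]}B_i$. As in Appendix~\ref{sec:ISconstraint}, I would build a sequence $\pmb{o}^{(0)}=\pmb{o},\pmb{o}^{(1/2)},\pmb{o}^{(1)},\dots,\pmb{o}^{(B)}=\pmb{x}$: at step $j$, pick $o^{(j)}\in supp(\pmb{o}^{(j-1)})\setminus supp(\pmb{x}^{(j-1)})$, preferring an element whose coordinate in $\pmb{o}^{(j-1)}$ is not yet saturated in $\pmb{x}^{(j-1)}$ (and taking $o^{(j)}=e^{(j)}$ when $e^{(j)}$ is eligible); delete $o^{(j)}$ to obtain $\pmb{o}^{(j-1/2)}$; then set the $e^{(j)}$-th coordinate to $i^{(j)}$ to obtain $\pmb{o}^{(j)}$. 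A counting argument over the budgets $B_1,\dots,B_k$ shows that every $\pmb{o}^{(j)}$ stays feasible for the IS constraint, that $\pmb{x}^{(j-1)}\preceq\pmb{o}^{(j-1/2)}$, and that the endpoints are $\pmb{o}^{(0)}=\pmb{o}$, $\pmb{o}^{(B)}=\pmb{x}$.

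For each $j$ I would start from
\begin{equation*}
F(\pmb{o}^{(j-1)})-F(\pmb{o}^{(j)})=\Delta_{o^{(j)},p}F(\pmb{o}^{(j-1/2)})-\Delta_{e^{(j)},i^{(j)}}F(\pmb{o}^{(j-1/2)}),
\end{equation*}
where $p$ is the coordinate of $o^{(j)}$ in $\pmb{o}^{(j-1)}$. The subtracted term is nonnegative, since $\Delta_{e^{(j)},i^{(j)}}F(\pmb{o}^{(j-1/2)})\ge(1-\varepsilon)\Delta_{e^{(j)},i^{(j)}}f(\pmb{o}^{(j-1/2)})\ge0$ by the $\varepsilon$-ADR lower bound and monotonicity of $f$, so dropping it gives $F(\pmb{o}^{(j-1)})-F(\pmb{o}^{(j)})\le\Delta_{o^{(j)},p}F(\pmb{o}^{(j-1/2)})$. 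I then bound the right-hand side by the chain: $\varepsilon$-ADR upper bound to pass from $F$ to $f$; diminishing returns of $f$ (Definition~\ref{def:ksubgain}(a)) to descend from $\pmb{o}^{(j-1/2)}$ down to $\pmb{x}^{(j-1)}$ (or further, to an earlier greedy state); the selection rule of Algorithm~\ref{algorithm2} to compare with a greedy marginal gain; and the $\varepsilon$-ADR lower bound to return to $F$. If coordinate $p$ is still available at step $j$, then $(o^{(j)},p)$ was a valid candidate, and this yields $F(\pmb{o}^{(j-1)})-F(\pmb{o}^{(j)})\le\tfrac{1+\varepsilon}{1-\varepsilon}(F(\pmb{x}^{(j)})-F(\pmb{x}^{(j-1)}))$, exactly as in Lemma~\ref{lemma:ineq2}. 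If $p$ is already saturated in $\pmb{x}^{(j-1)}$, let $j_p^{(1)}<\dots<j_p^{(B_p)}$ be the greedy steps that filled coordinate $p$; since any deletion from coordinate $p$ of the optimal part occurs only after $p$ is full, diminishing returns and the greedy rule at step $j_p^{(\ell)}$ (applied to the $\ell$-th such deletion) give $F(\pmb{o}^{(j-1)})-F(\pmb{o}^{(j)})\le\tfrac{1+\varepsilon}{1-\varepsilon}(F(\pmb{x}^{(j_p^{(\ell)})})-F(\pmb{x}^{(j_p^{(\ell)}-1)}))$, and distinct saturated-coordinate deletions charge distinct greedy steps $j_p^{(\ell)}$.

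Summing over $j\in[B]$ and using that $\{j_p^{(\ell)}:p\in[k],\ell\in[B_p]\}$ is exactly $[B]$, the ``available'' and the ``saturated'' charges each telescope to at most $\sum_{j\in[B]}(F(\pmb{x}^{(j)})-F(\pmb{x}^{(j-1)}))=F(\pmb{x})$, where we used $F(\pmb{0})=0$ and $\pmb{x}^{(B)}=\pmb{x}$. Hence $F(\pmb{o})-F(\pmb{x})\le\tfrac{2(1+\varepsilon)}{1-\varepsilon}F(\pmb{x})$, i.e.\ $F(\pmb{o})\le\tfrac{3+\varepsilon}{1-\varepsilon}F(\pmb{x})$, which rearranges to the claimed bound $F(\pmb{x})\ge\tfrac{1-\varepsilon}{3+\varepsilon}F(\pmb{o})$.

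The main obstacle is the saturated-coordinate case: one must order the optimal deletions so that the $\ell$-th deletion from a coordinate $p$ can be charged to the $\ell$-th greedy insertion into $p$ (this requires $p$ to be full in the greedy solution before any such deletion happens, so that $\pmb{x}^{(j_p^{(\ell)}-1)}\preceq\pmb{x}^{(j-1)}$, and the two index sequences to align), and then argue these charges are globally injective so their total does not exceed $F(\pmb{x})$. A second delicate point is keeping the $\varepsilon$-ADR distortion tight along the chain ``$\varepsilon$-ADR upper $\to$ diminishing returns $\to$ greedy optimality $\to$ $\varepsilon$-ADR lower'' so that the inflation is exactly the factor $2$ producing the denominator $3+\varepsilon$ rather than something larger, together with the budget-counting lemma that guarantees $o^{(j)}$ can always be chosen in an unsaturated coordinate whenever the first case is meant to apply.
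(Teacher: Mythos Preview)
Your approach is correct and reaches the same bound, but it takes a genuinely different route from the paper. The paper (following \cite{Ohsaka:2015aa}) uses a \emph{swap} interpolation: when $e^{(j)}$ sits in $\pmb{o}^{(j-1)}$ at some coordinate $i'\neq i^{(j)}$ (Case~C1 of Appendix~\ref{sec:ISconstraint}), it removes \emph{two} elements---both $e^{(j)}$ from $i'$ and some $o^{(j)}\in S^{(j)}_{i^{(j)}}$---and reinserts them with swapped coordinates. This keeps $|supp_i(\pmb{o}^{(j)})|=B_i$ throughout and yields a clean per-step inequality (Lemma~\ref{lemma:ineq4}), $\tfrac{1-\varepsilon}{1+\varepsilon}\bigl[F(\pmb{o}^{(j-1)})-F(\pmb{o}^{(j)})\bigr]\le 2\bigl[F(\pmb{x}^{(j)})-F(\pmb{x}^{(j-1)})\bigr]$, where the factor~$2$ arises because the four-term identity for $F(\pmb{o}^{(j-1)})-F(\pmb{o}^{(j)})$ is bounded by two positive marginals, each dominated by the greedy increment at step~$j$ (both $i'$ and $i^{(j)}$ are provably in $I$ there). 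By contrast, you keep the TS-style single-delete/single-insert interpolation and recover the factor~$2$ \emph{globally}, via the available/saturated two-bucket charging in which each greedy increment can be hit once per bucket. Your bookkeeping is heavier, but the interpolation is simpler; the paper's swap is more work to set up but gives the cleaner per-step lemma and a straight telescoping.

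One correction: your claim that ``every $\pmb{o}^{(j)}$ stays feasible for the IS constraint'' is false. With $k=2$, $B_1=B_2=1$, $\pmb{o}=(\{a\},\{b\})$, and greedy choosing $(b,1)$ first, your $\pmb{o}^{(1)}$ has two elements in coordinate~$1$. Fortunately you never actually use feasibility of the intermediate $\pmb{o}^{(j)}$; only $\pmb{x}^{(j-1)}\preceq\pmb{o}^{(j-1/2)}$ and $\pmb{o}^{(B)}=\pmb{x}$ matter, and those do hold. Also, the ``preference for unsaturated coordinates'' and the attendant ``budget-counting lemma'' you flag as obstacles are unnecessary: your charging works regardless of how $o^{(j)}$ is chosen within $supp(\pmb{o}^{(j-1)})\setminus supp(\pmb{x}^{(j-1)})$ (subject to $o^{(j)}=e^{(j)}$ when eligible), because the total number of deletions from any coordinate~$p$ over all $B$ steps is exactly $B_p$, so the saturated ones are at most $B_p$ and the injection into $\{j_p^{(\ell)}:\ell\le B_p\}$ is automatic.
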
        

\section{Improved Greedy Approximation Ratios When $f$ is Known}\label{sec:paper:improvedgreedy}

In this section, we consider the case where $f$ is a known monotone function that  can be constructed directly. We investigate the question of 
whether we can use such information to obtain alternative, possibly better, approximation ratios for maximizing an $\varepsilon$-AS or $\varepsilon$-ADR function $F$ 
subject to the total size or individual size constraints.  
We answer the question affirmatively via the following result 
 (see Appendix~\ref{sec:improvedgreedy} for details).

\vspace{-1mm}
\begin{theorem} \label{theo:ftoFpaper}
Let $f$ be a $k$-submodular function and $F$ be an $\varepsilon$-approximately $k$-submodular function that is bounded by $f$. 
If there is an algorithm that provides an approximation ratio of $\alpha$ for maximizing $f$ subject to constraint $\mathbb{X}$, 
then the same solution yields an approximation ratio of $\frac{1-\varepsilon}{1+\varepsilon} \alpha$ for maximizing $F$ subject to constraint $\mathbb{X}$. 
\end{theorem}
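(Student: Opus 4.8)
The plan is to chain the two defining inequalities of $\varepsilon$-approximate $k$-submodularity with the assumed $\alpha$-approximation guarantee for $f$, exploiting the fact that both optimization problems range over the \emph{same} feasible region $\mathbb{X}$. Let $\pmb{x}$ denote the solution returned by the given algorithm when it is run to maximize $f$ subject to $\mathbb{X}$, let $\pmb{o}_f \in \argmax_{\pmb{y} \in \mathbb{X}} f(\pmb{y})$ be an optimal solution for $f$, and let $\pmb{o}_F \in \argmax_{\pmb{y} \in \mathbb{X}} F(\pmb{y})$ be an optimal solution for $F$. Since $F$ is $\varepsilon$-AS with witness $f$, for every feasible $\pmb{y}$ we have $(1-\varepsilon)f(\pmb{y}) \le F(\pmb{y}) \le (1+\varepsilon)f(\pmb{y})$.

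First I would apply the lower bound of the $\varepsilon$-AS definition at the point $\pmb{x}$, giving $F(\pmb{x}) \ge (1-\varepsilon) f(\pmb{x})$. Next I invoke the approximation guarantee for the algorithm on $f$, namely $f(\pmb{x}) \ge \alpha\, f(\pmb{o}_f)$. Then comes the only step that needs explicit justification: because $\pmb{o}_F \in \mathbb{X}$ is feasible for the $f$-maximization problem as well, optimality of $\pmb{o}_f$ yields $f(\pmb{o}_f) \ge f(\pmb{o}_F)$. Finally I apply the upper bound of the $\varepsilon$-AS definition at $\pmb{o}_F$, which gives $f(\pmb{o}_F) \ge \frac{1}{1+\varepsilon} F(\pmb{o}_F)$. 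Composing these four inequalities produces $F(\pmb{x}) \ge \frac{1-\varepsilon}{1+\varepsilon}\,\alpha\, F(\pmb{o}_F)$, which is exactly the claimed ratio, so the same solution $\pmb{x}$ witnesses a $\frac{1-\varepsilon}{1+\varepsilon}\alpha$-approximation for maximizing $F$ over $\mathbb{X}$.

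There is no genuinely hard step here; the entire argument is a four-term inequality chain. The point I would be most careful about is stating that $\mathbb{X}$ is common to both problems, so that the $f$-optimum dominates the $F$-optimum in $f$-value, and that the witness $k$-submodular function $f$ is the very same $f$ to which the algorithm is applied (this is what ``bounded by $f$'' is being used to mean). I would also note that the conclusion immediately extends to $\varepsilon$-ADR functions $F$ via Theorem~\ref{lemma1}, and that specializing to the concrete constraints recovers the ``$f$'s Solution'' column of Table~\ref{table:results}: plugging in $\alpha = 1-1/e$ for $k=1$, $\alpha = \tfrac12$ for the TS constraint, and $\alpha = \tfrac13$ for the IS constraint, using the greedy guarantees of \cite{Ohsaka:2015aa}.
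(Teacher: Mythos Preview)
Your proposal is correct and follows essentially the same four-inequality chain as the paper's proof: apply the $\varepsilon$-AS lower bound at the algorithm's output, the $\alpha$-approximation guarantee for $f$, the optimality of $\pmb{o}_f$ over the common feasible set $\mathbb{X}$, and the $\varepsilon$-AS upper bound at $\pmb{o}_F$. Your added remarks on the shared feasible region and the specializations to the Table~\ref{table:results} entries are accurate and match the paper's corollaries.
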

\vspace{-1mm}

This theorem implies the following: (1) By applying the greedy algorithm~\cite{Nemhauser:1978aa} on a submodular function $f$ subject to a size constraint, we obtain a solution providing an approximation 
ratio of $\frac{1-\varepsilon}{1+\varepsilon} (1-\frac{1}{e})$ to the size constrained maximization problem of an approximately submodular function $F$. (2) By applying $k$-Greedy-TS on a $k$-submodular function $f$ subject to a total size constraint, we obtain a solution providing an approximation ratio of $\frac{1-\varepsilon}{2(1+\varepsilon)}$ to the total size constrained maximization problem of an $\epsilon$-AS function $F$. (3) By applying $k$-Greedy-IS on a $k$-submodular function $f$ subject to individual size constraints, we obtain a solution providing  an approximation 
ratio of $\frac{1-\varepsilon}{3(1+\varepsilon)}$ to the individual size constrained maximization problem of an $\epsilon$-AS function $F$.
For $F$ that is $\varepsilon$-ADR, since it is also $\varepsilon$-AS $F$, these three results apply immediately.

\section{Experiments} \label{section:experiment}

We evaluate the real-world performance of the algorithms for maximizing $F$ directly, 
or indirectly through the use of a bounding $k$-submodular function $f$, by applying them to a 
variant of the $k$-sensor placement problem with IS constraints and a variant of the $k$-topic influence maximization problem with the TS constraint. In both problems, we used $\varepsilon$-AS $k$-submodular functions. 
Appendix~\ref{sec:addexp} contains additional experiments 
with $\varepsilon$-AS and with $\varepsilon$-ADR $k$-submodular functions, whose results are similar to those reported here. 

\vspace{-1mm}
\subsection{Experimental Setup}

In our experiments, we consider an $\varepsilon$-AS function $F$ and  generate the value of $F$ according to the bounding $k$-submodular function $f$, 
which is given explicitly in our application domains.
In particular, for each $\pmb{x} \in(k+1)^{V}$, the value of $F$ of $\pmb{x}$ 
should be generated such that $(1-\varepsilon) f(\pmb{x}) \le F(\pmb{x}) \le (1+\varepsilon) f(\pmb{x})$. 
When $f$ is known, we can compute the value of $f$ for a given $\pmb{x}$ directly and 
then generate the value of $F$ according to some predefined generation methods. 
To highlight the performance or solution quality of $F$ and $f$ under the greedy algorithms, we considered the following three types of generation methods. 
The functions $F$ and $f$ will be defined later for the appropriate domains. 

\vspace{+1mm}
\noindent \textbf{Adversarial Generation (AG).} 
To stress-test the greedy algorithms applied to $f$, we identify the worst-case values of $F$ 
where, when applying these algorithms, $f$'s solution obtains a better approximation ratio than the solution generated when applying the greedy algorithms to $F$. 
We first run a greedy algorithm on $f$ and obtain its solution $\pmb{x_f}$. 
We let $F(\pmb{x_f}) = (1+\varepsilon)f(\pmb{x_f})$ which yields higher weight to $f$'s solution. 
For the remaining $\pmb{x}$, we let $F(\pmb{x}) = \xi(\pmb x)\cdot f(\pmb x)$ where $\xi(\pmb x)$ is selected uniformly at random in $[1-\varepsilon,1]$. 

\vspace{+1mm}
\noindent \textbf{Max and Mean Generation (MaxG and MeanG).} 
Our goal is to consider a more structured error/noise generation setting 
where each (selected) element contributes some uncertainty to the value of $F$. 
In MaxG, $F(\pmb{x}) = \xi(\pmb x)\cdot f(\pmb x)$,  
where $\xi(\pmb x) = \max_{x\in {supp(\pmb x)}}\xi(x)$ and $\xi(x) \in  [1-\varepsilon, 1]$. Thus, we weigh $f$ with the maximum value of noise over the elements of ${\pmb x}$. 
In MeanG, $\xi({\pmb x})=\frac{\sum_{x\in supp(\pmb x)}\xi(x)}{|supp(\pmb x)|}$ and $\xi(x) \in  [1-\varepsilon, 1]$. Thus, we weigh $f$ with the expected value of noise over 
${\pmb x}$.  

Clearly, $F$ is $\varepsilon$-AS. It is also non  $k$-submodular for all  generation methods (see Appendix~\ref{sec:appendixnon}). 

We implemented $k$-Greedy-TS, $k$-Greedy-IS, as well as baselines (details below) in C++ and executed them on an Intel Cascade Lake @ 2.6GhZ with 48GB RAM. These algorithms employed the lazy evaluation technique  \cite{minoux:1978} (i.e., we maintain an upper bound on the gain of inserting each element in each dimension w.r.t. $F$ or $f$, to efficiently select the element in each iteration). We report results of the quality of these algorithms with respect to function $F$ (specifically, the mean and standard deviation of results for $10$ different runs of value generations; each with a different seed). We do not report runtime because the choice of function $F$ or $f$ did not substantially affect the runtime of the  algorithms (for efficiency results of the algorithms see~\cite{Ohsaka:2015aa}). Our source code and the datasets that we used are available at: \url{https://github.com/55199789/approx_kSubmodular.git}.

\vspace{-2mm}
\subsection{Sensor Placement with Approximately $k$-Submodular Functions and the IS Constraints}\label{sec:spexp}

The objective is to install a sufficiently large number of sensors of $k$ types into locations, so that each sensor is installed in a single location and all installed sensors together collect measurements of low uncertainty. We first define the entropy of a vector ${\pmb x}$ of sensors, following~\cite{Ohsaka:2015aa}. Let $\Omega=\{X^u_i\}_{i \in [k], u\in V}$ be the set of random variables for each sensor type $i\in[k]$ and each location $u\in V$. Each $X^u_i$ is the random variable representing the measurement collected from a sensor of type $i$ that is installed at location $u$. Thus, $X_i=\{X^u_i\} \subseteq \Omega$ is the set representing the measurements for all locations at which a sensor of type $i\in[k]$ is installed. The entropy of a vector ${\pmb x}=(X_1, \ldots, X_k)\in (k+1)^V$ is given by the monotone $k$-submodular function $H({\pmb x})=H(\cup_{i\in [k]}X_i)=-\sum_{\mathbf{s}\in \text{dom}~\cup_{i\in [k]}X_i}Pr[\mathbf{s}]\cdot \log Pr[\mathbf{s}]$, where $\text{dom}~\cup_{i\in [k]}X_i$ is the domain of $\cup_{i\in [k]}X_i$~\cite{Ohsaka:2015aa}. 

The work of~\cite{Ohsaka:2015aa} considered the problem of maximizing $H({\pmb x})$ subject to individual size constraints for ${\pmb x}$, using $H({\pmb x})$  to capture the uncertainty of measurements. 
Sensor measurements often have random noise due to hardware issues, environmental effects, and imprecision in measurement~\cite{cleaningsensors}, and weighted entropy functions are used to capture such errors~\cite{weightedentropy}. 
As such, we consider a noisy variant of the problem of~\cite{Ohsaka:2015aa}, where the uncertainty of measurements is captured by a weighted function $\tilde{H}({\pmb x})=\xi(\pmb x)\cdot H({\pmb x})$ and $\xi()$ is generated by our AG, MaxG, or MeanG generation method.  

\begin{figure*}[!ht]
  \centering
    \begin{subfigure}{0.23\textwidth}
        \centering
        \captionsetup{justification=centering}
        \includegraphics[width = \linewidth]{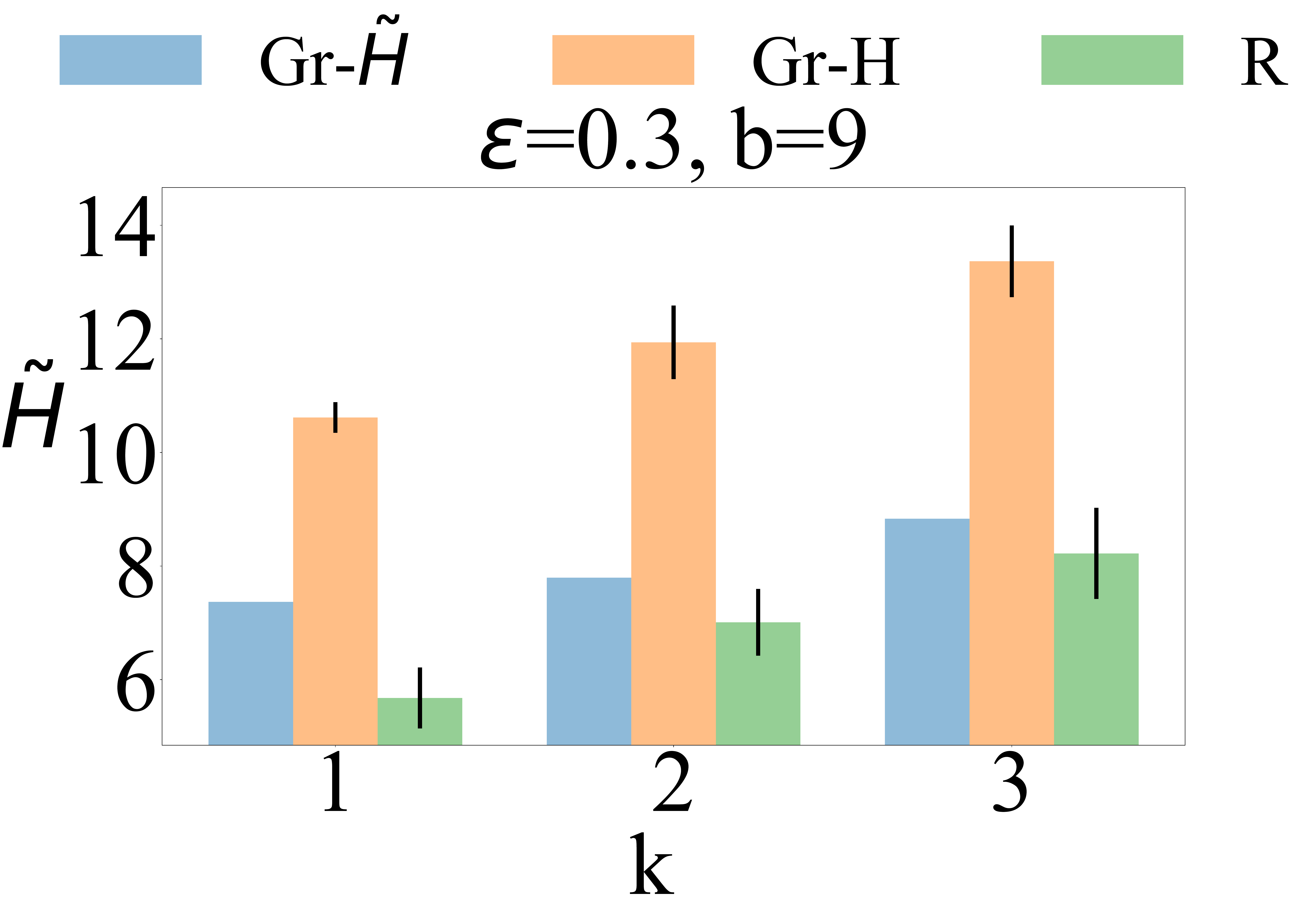}
        \caption{}\label{fig:spAG_IS_k}
    \end{subfigure}%
    \begin{subfigure}{0.23\textwidth}
        \centering
        \captionsetup{justification=centering}
        \includegraphics[width = \linewidth]{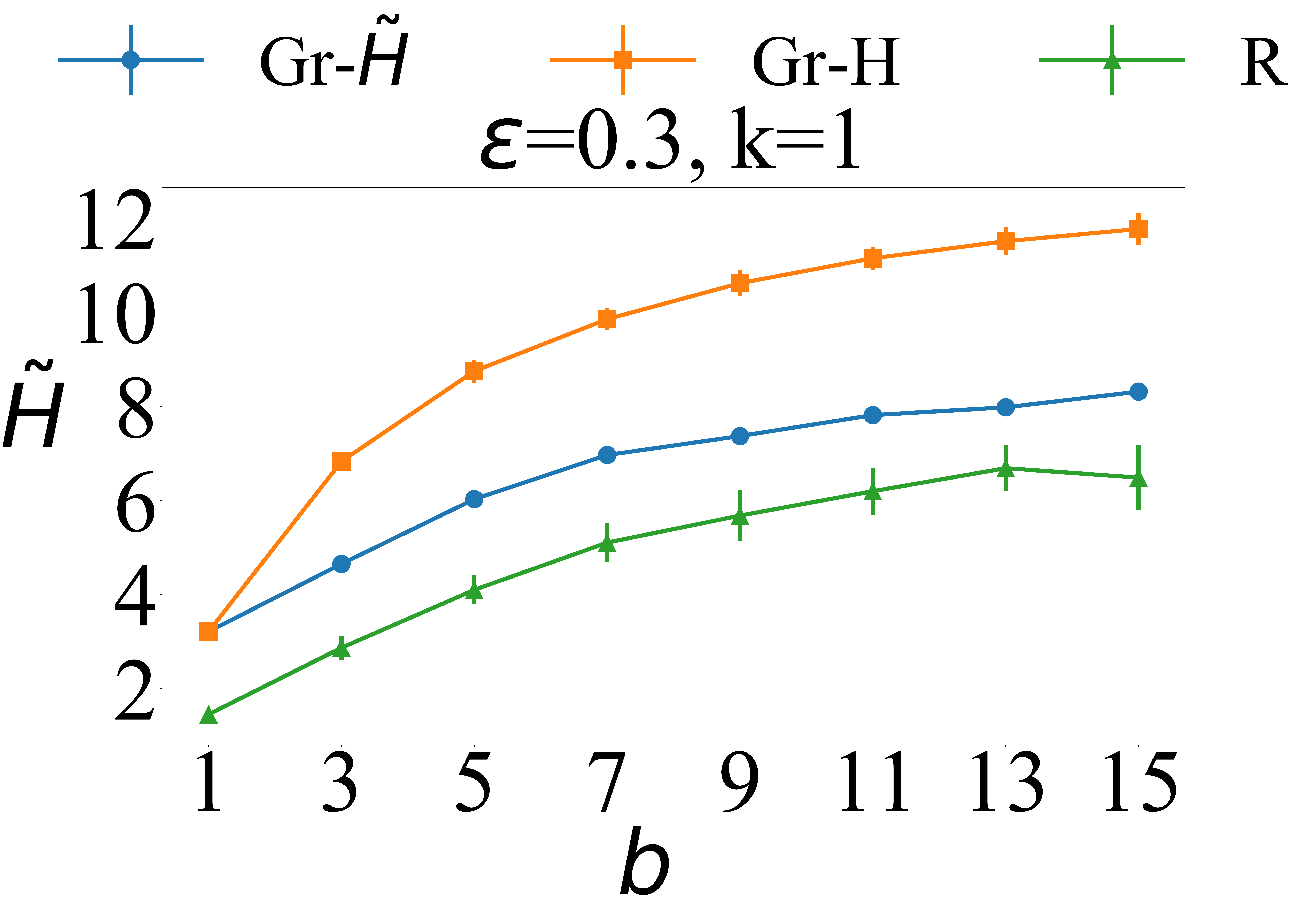}
        \caption{}\label{fig:spAG_IS_B}
    \end{subfigure}%
    \begin{subfigure}{0.23\textwidth}
        \centering
        \captionsetup{justification=centering}
        \includegraphics[width = \linewidth]{sp_new/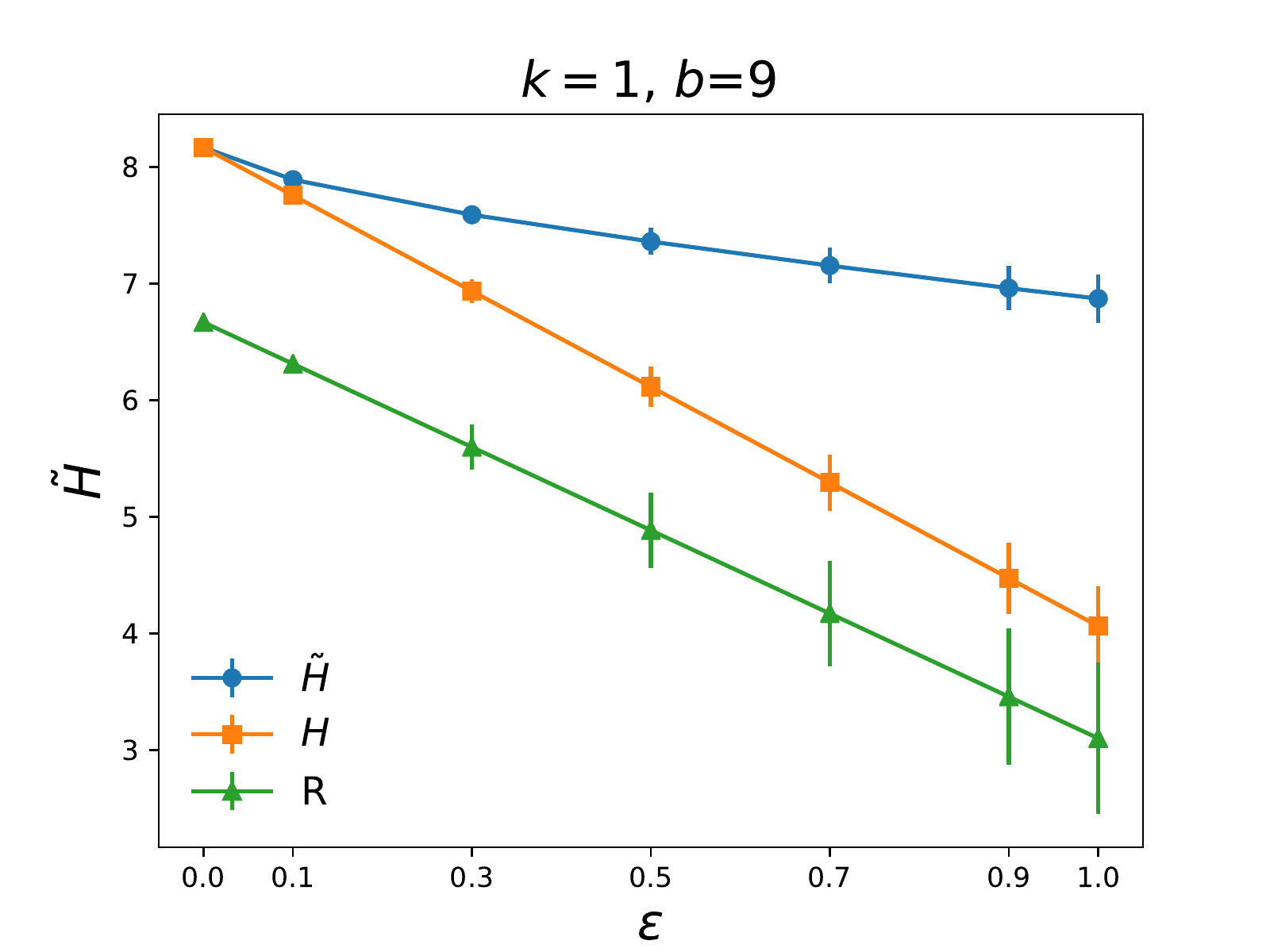}
        \caption{}\label{fig:spAG_IS_eps}
    \end{subfigure}%
    \begin{subfigure}{0.23\textwidth}
        \centering
        \captionsetup{justification=centering}
        \includegraphics[width = \linewidth]{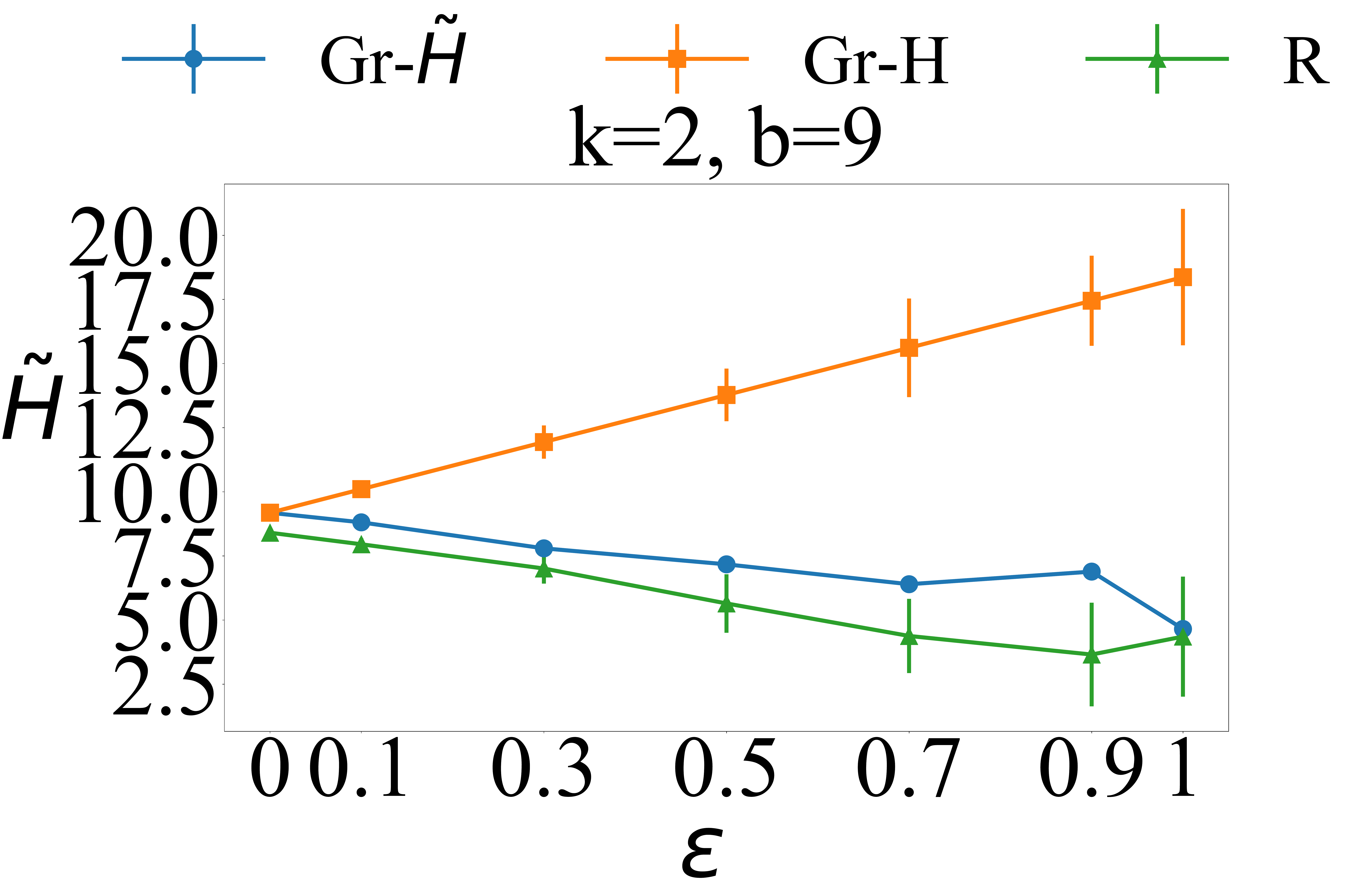}
        \caption{}\label{fig:spAG_IS_eps2}
    \end{subfigure}%
    \vspace{-2mm}
\caption{$\tilde{H}$ in AG setting vs: (a) number of dimensions $k$, (b) individual size threshold $b$, (c,d) $\varepsilon$.}\label{fig:spAG_IS}
\end{figure*}

\begin{figure*}[!ht]
\centering
    \begin{subfigure}{0.23\textwidth}
        \centering
        \captionsetup{justification=centering}
        \includegraphics[width = \linewidth]{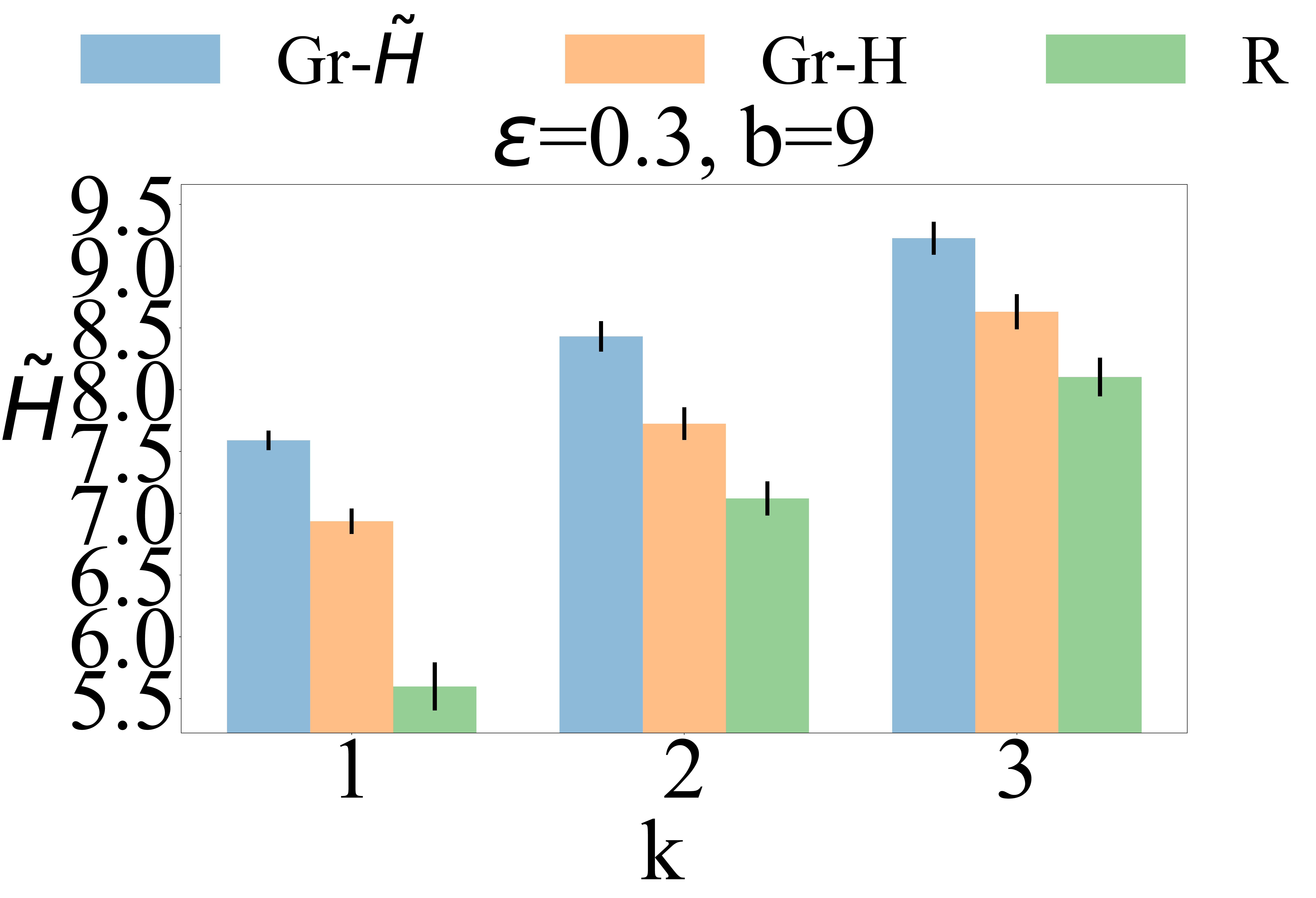}
        \caption{}\label{fig:spMean1_IS_k}
    \end{subfigure}\hspace{+3mm}%
    \begin{subfigure}{0.23\textwidth}
        \centering
        \captionsetup{justification=centering}
        \includegraphics[width = \linewidth]{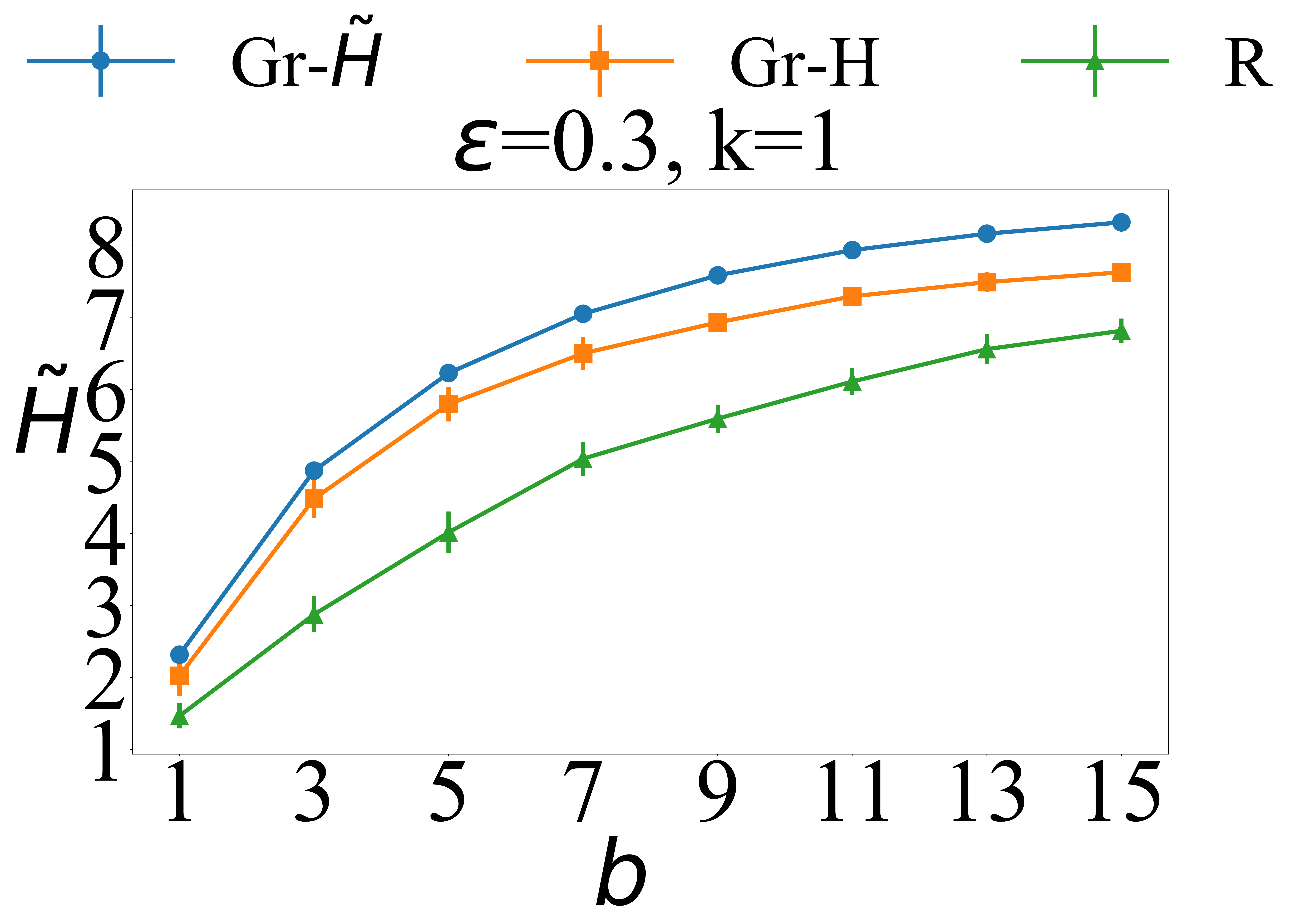}
        \caption{}\label{fig:spMean1_IS_B}
    \end{subfigure}\hspace{+3mm}%
    \begin{subfigure}{0.23\textwidth}
        \centering
        \captionsetup{justification=centering}
        \includegraphics[width = \linewidth]{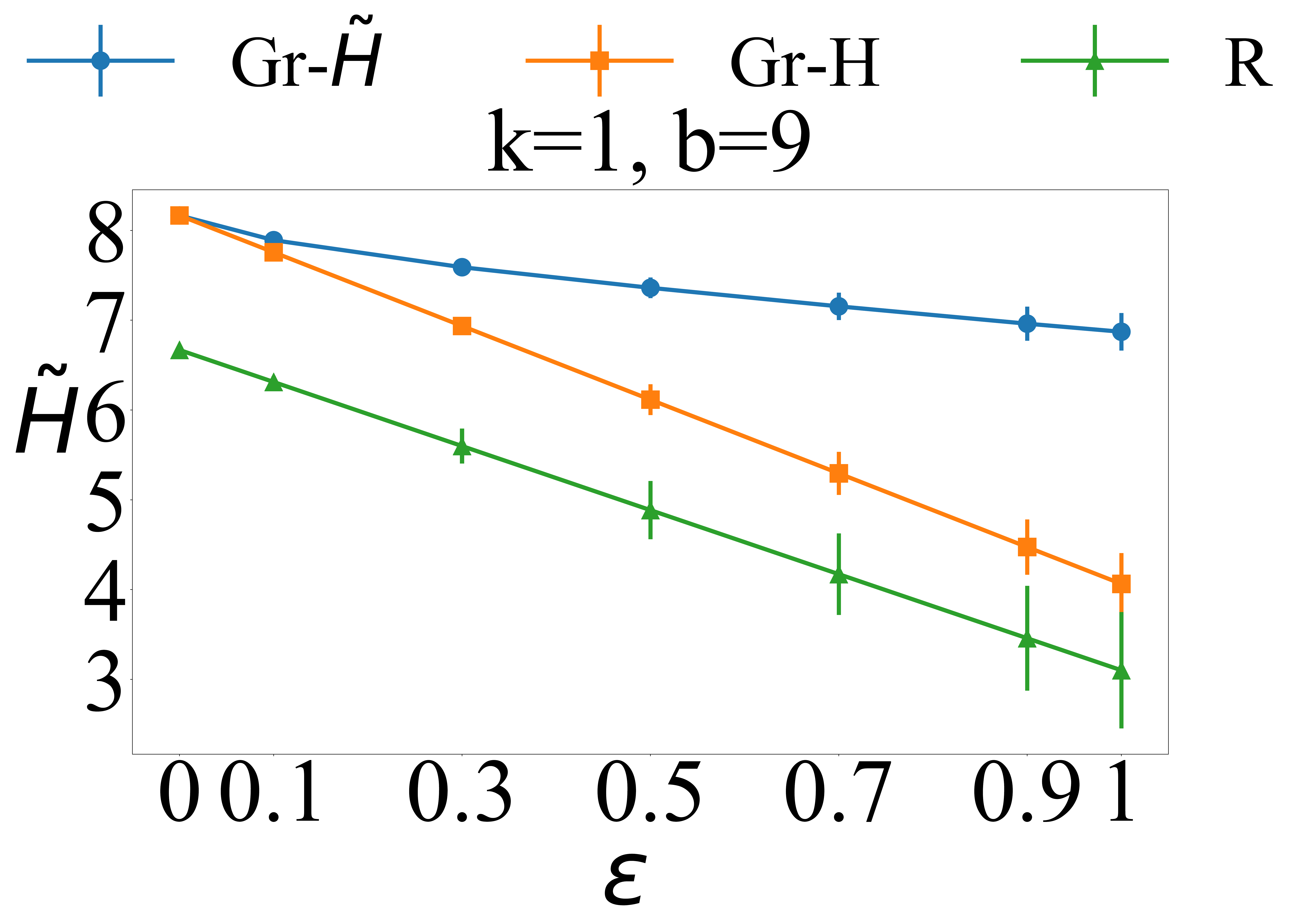}
        \caption{}\label{fig:spMean1_IS_B}
    \end{subfigure}%
    \begin{subfigure}{0.23\textwidth}
        \centering
        \captionsetup{justification=centering}
        \includegraphics[width = \linewidth]{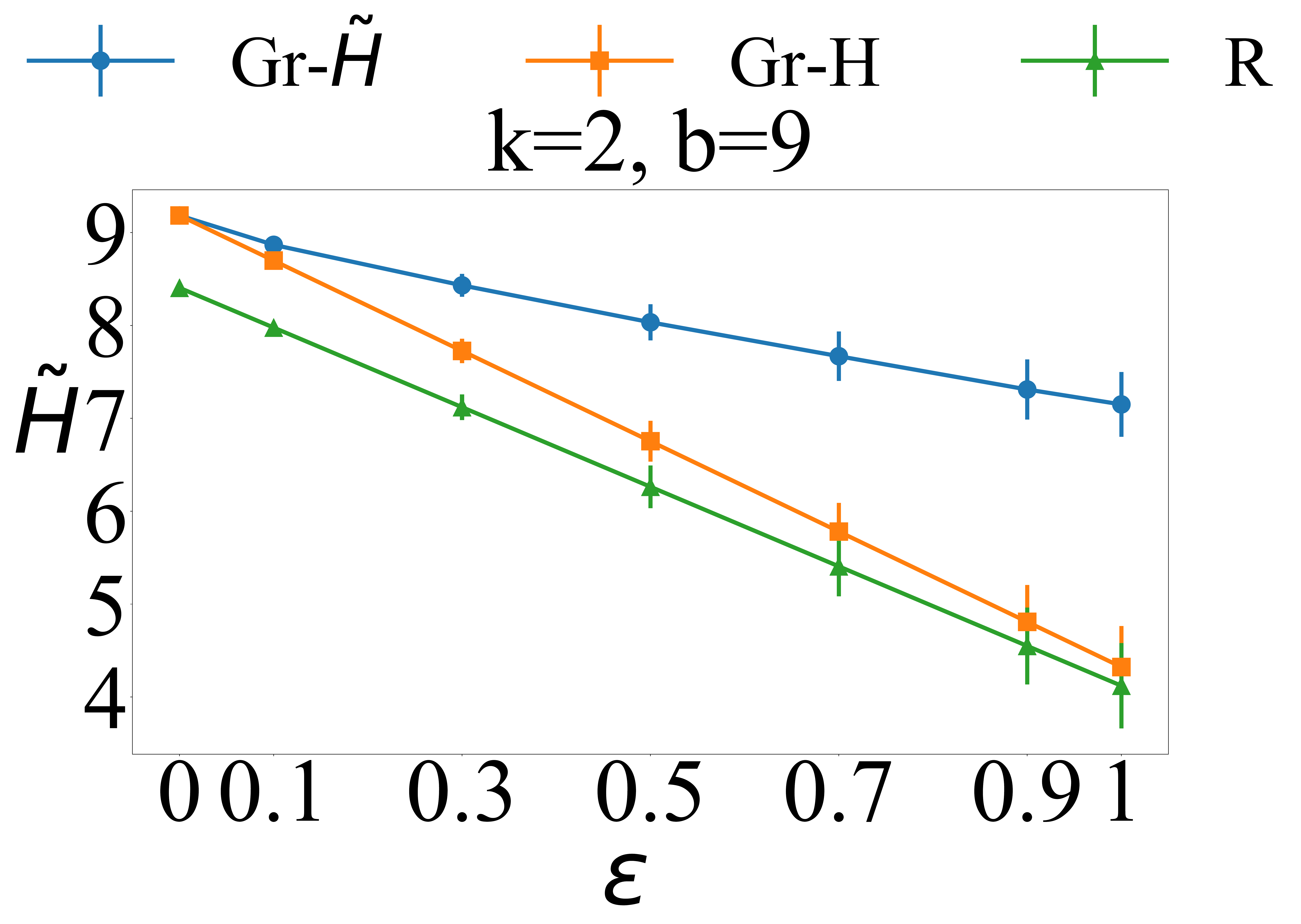}
        \caption{}\label{fig:spMean1_IS_B2}
    \end{subfigure}%
    \vspace{-2mm}
\caption{$\tilde{H}$ in MeanG setting vs: (a) number of dimensions $k$, (b) individual size threshold $b$, (c,d) $\varepsilon$.}\label{fig:spMean}
   \end{figure*}
\begin{figure*}[!ht]
   \vspace{-3mm}
   \centering
    \begin{subfigure}{0.23\textwidth}
        \centering
        \captionsetup{justification=centering}
        \includegraphics[width = \linewidth]{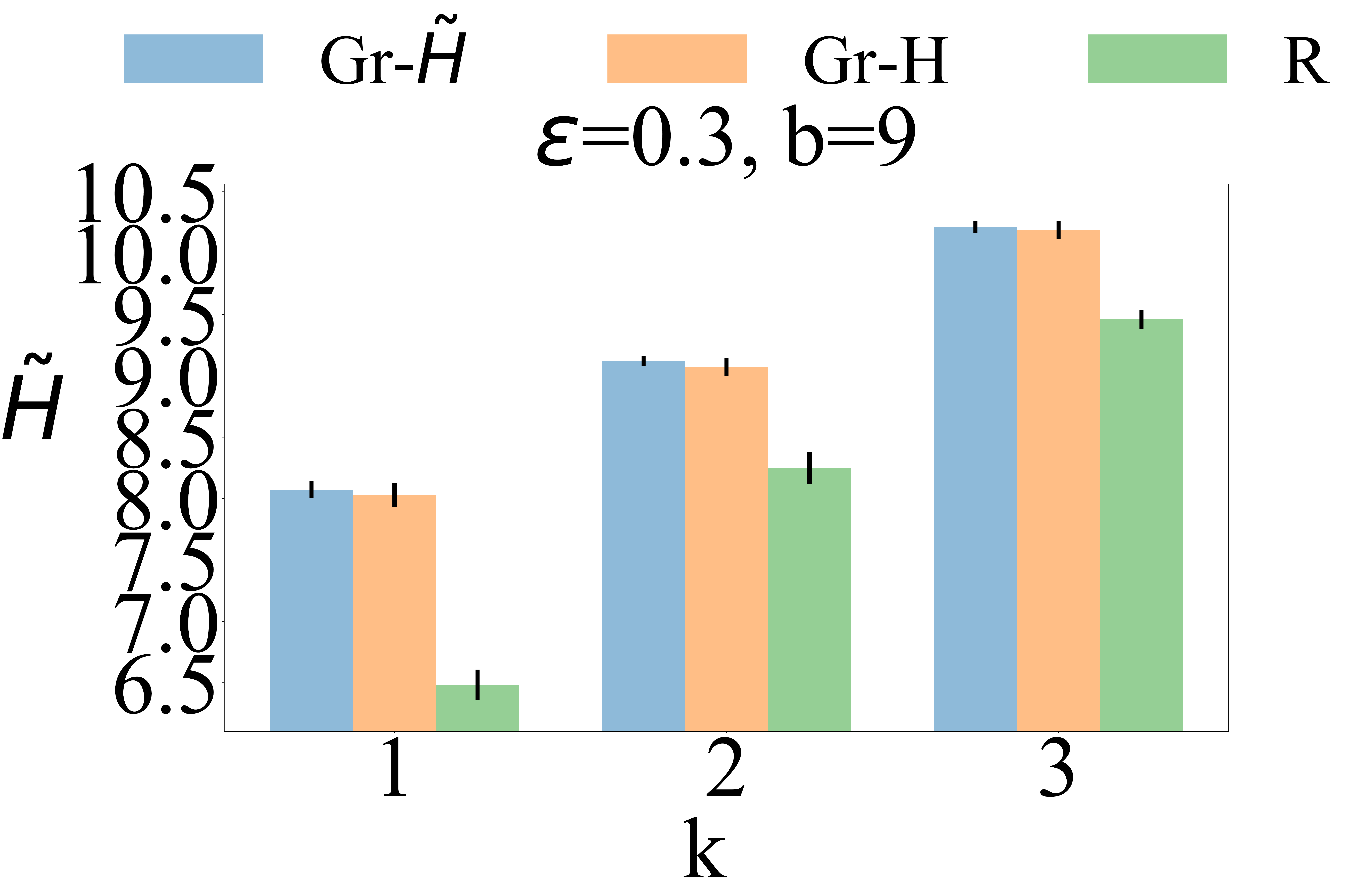}
        \caption{}\label{fig:spMax1_IS_k}
    \end{subfigure}%
    \begin{subfigure}{0.23\textwidth}
        \centering
        \captionsetup{justification=centering}
        \includegraphics[width = \linewidth]{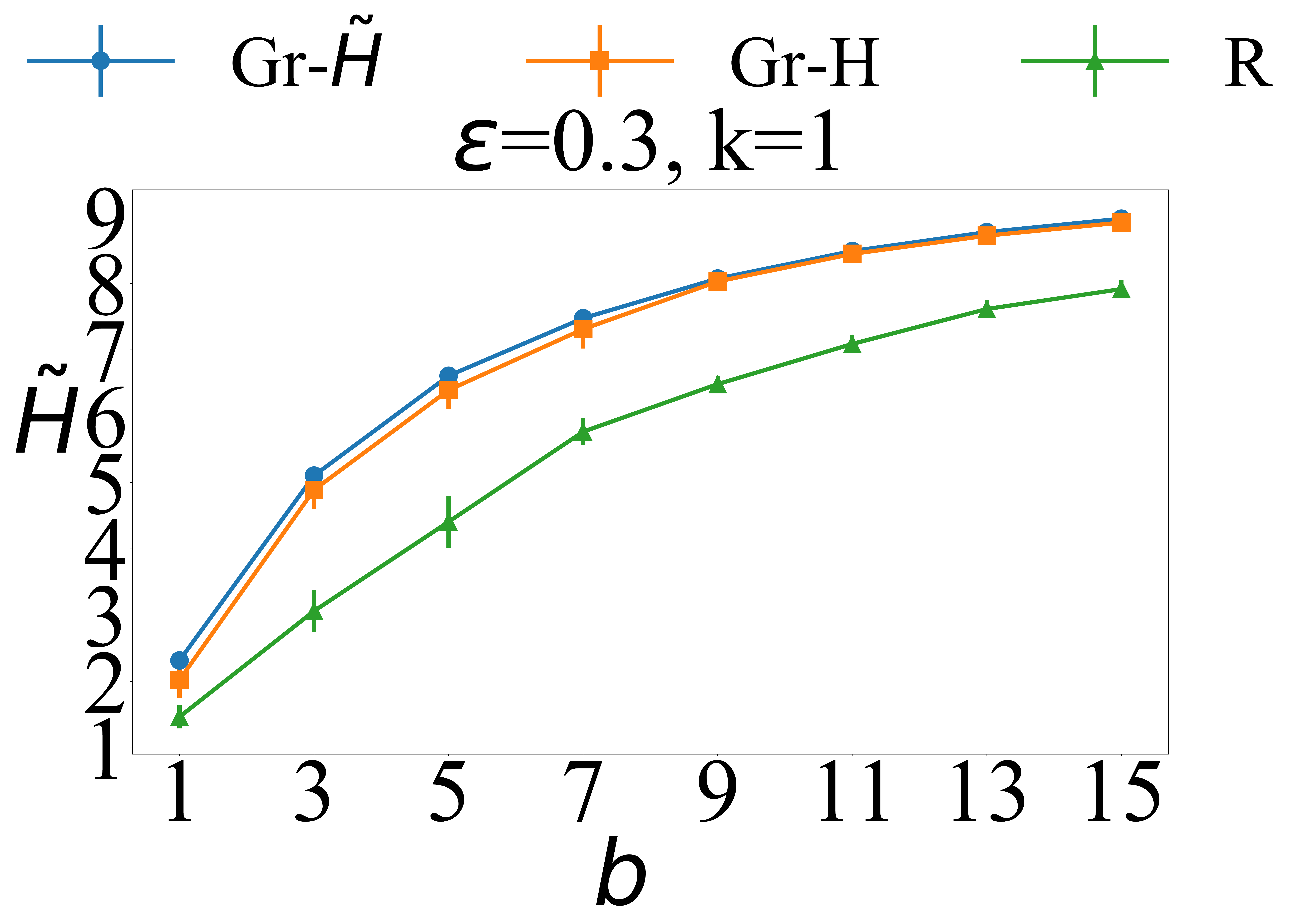}
        \caption{}\label{fig:spMax1_IS_B}
    \end{subfigure}%
    \begin{subfigure}{0.23\textwidth}
        \centering
        \captionsetup{justification=centering}
        \includegraphics[width = \linewidth]{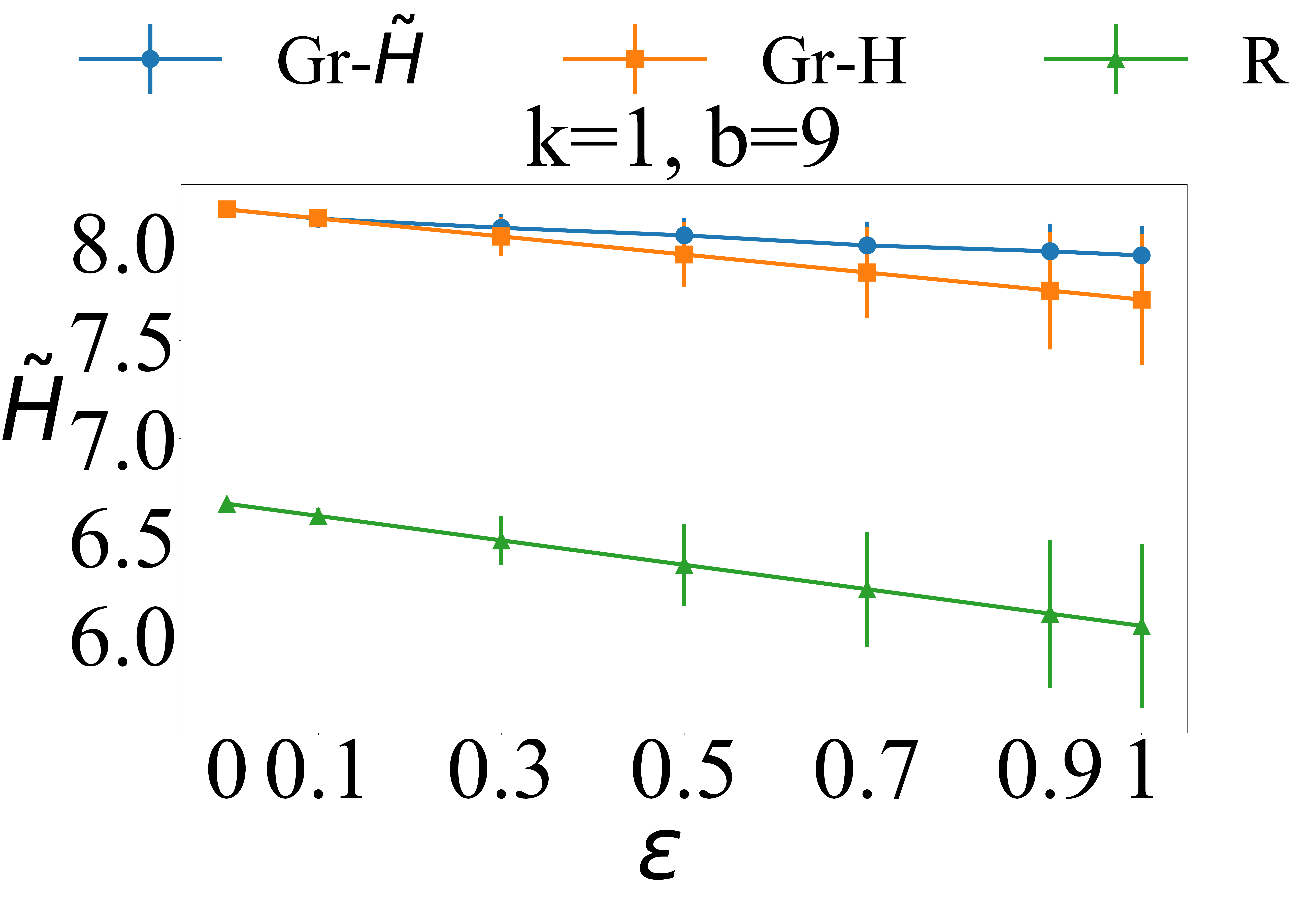}
        \caption{}\label{fig:spMax1_IS_B}
    \end{subfigure}%
    \begin{subfigure}{0.23\textwidth}
        \centering
        \captionsetup{justification=centering}
        \includegraphics[width = \linewidth]{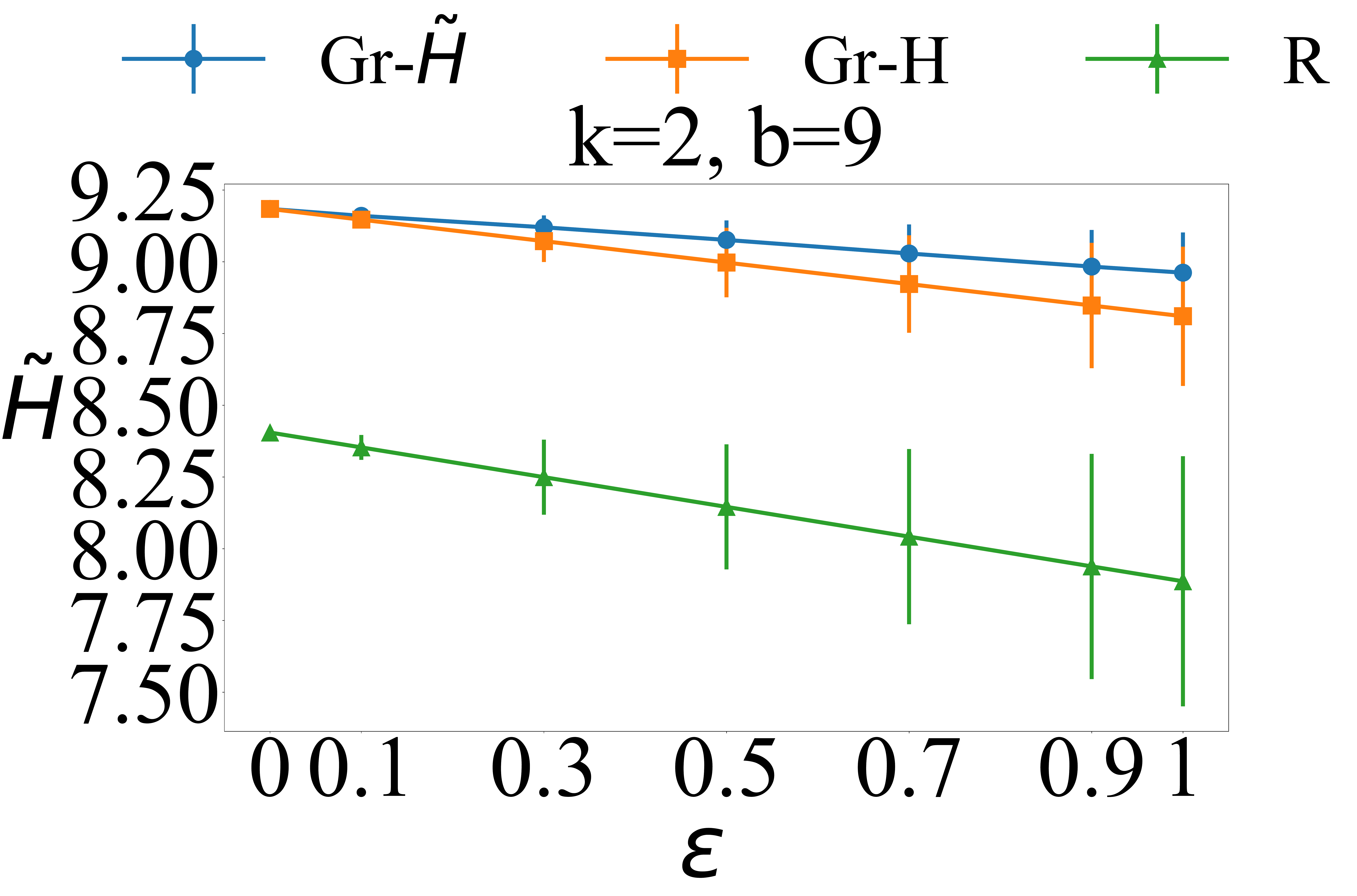}
        \caption{}\label{fig:spMax1_IS_B2}
    \end{subfigure}%
    \vspace{-2mm}
\caption{$\tilde{H}$ in MaxG setting vs: (a) number of dimensions $k$, (b) individual size threshold $b$, (c,d) $\varepsilon$.}\label{fig:spMax}
\vspace{-0.5cm}
  \end{figure*}

\vspace{+1mm}
\noindent\textbf{Algorithms.}~ We first apply $k$-Greedy-IS using $H$ as $f$ and then using $\tilde{H}$ as $F$, and we compare their solutions in terms of $\tilde{H}$. We refer to these algorithms as $Gr$-$H$ and $Gr$-$\tilde{H}$ respectively. 
Since no algorithms can address our problem, we compare against a baseline, Random (referred to as $R$), which outputs as a solution a vector ${\pmb x}$ with $B_i$ randomly selected elements in each dimension. Random was also used in~\cite{Ohsaka:2015aa}. In our experiments, each $B_i$ has the same value $b$. We configured the  algorithms with $k\in\{1,2,3\}$, $\varepsilon \in \{0, 0.1, \ldots, 1\}$, and $B_i\in \{1,3, \ldots, 15\}$. Unless otherwise stated, $k=1$, $\varepsilon=0.3$, and $b=9$. 
Other parameter settings showed similar behaviors.

\vspace{+1mm}
\noindent\textbf{Dataset.}~ We used the Intel Lab dataset which is available at  \url{http://db.csail.mit.edu/labdata/labdata.html} and is  preprocessed as in \cite{Ohsaka:2015aa}. The dataset is a log of approximately 2.3 million values that are collected from 54 sensors installed in 54 locations in the Intel Berkeley research lab. There are three types of sensors. Sensors of type 1, 2, and 3 collect temperature, humidity, and light values, respectively. $F$ or $f$ take as argument a vector: (1) $\pmb{x}=(X_1)$ of sensors of type 1, when $k=1$; (2) $\pmb{x}=(X_1, X_2)$ of sensors of type 1 and of type 2, when $k=2$, or (3) $\pmb{x}=(X_1, X_2, X_3)$ of sensors of type 1 and of type 2 and of type 3, when $k=3$.

\begin{figure*}[!ht]\centering
    \begin{subfigure}{0.23\textwidth}
        \centering
        \captionsetup{justification=centering}
        \includegraphics[width = \linewidth]{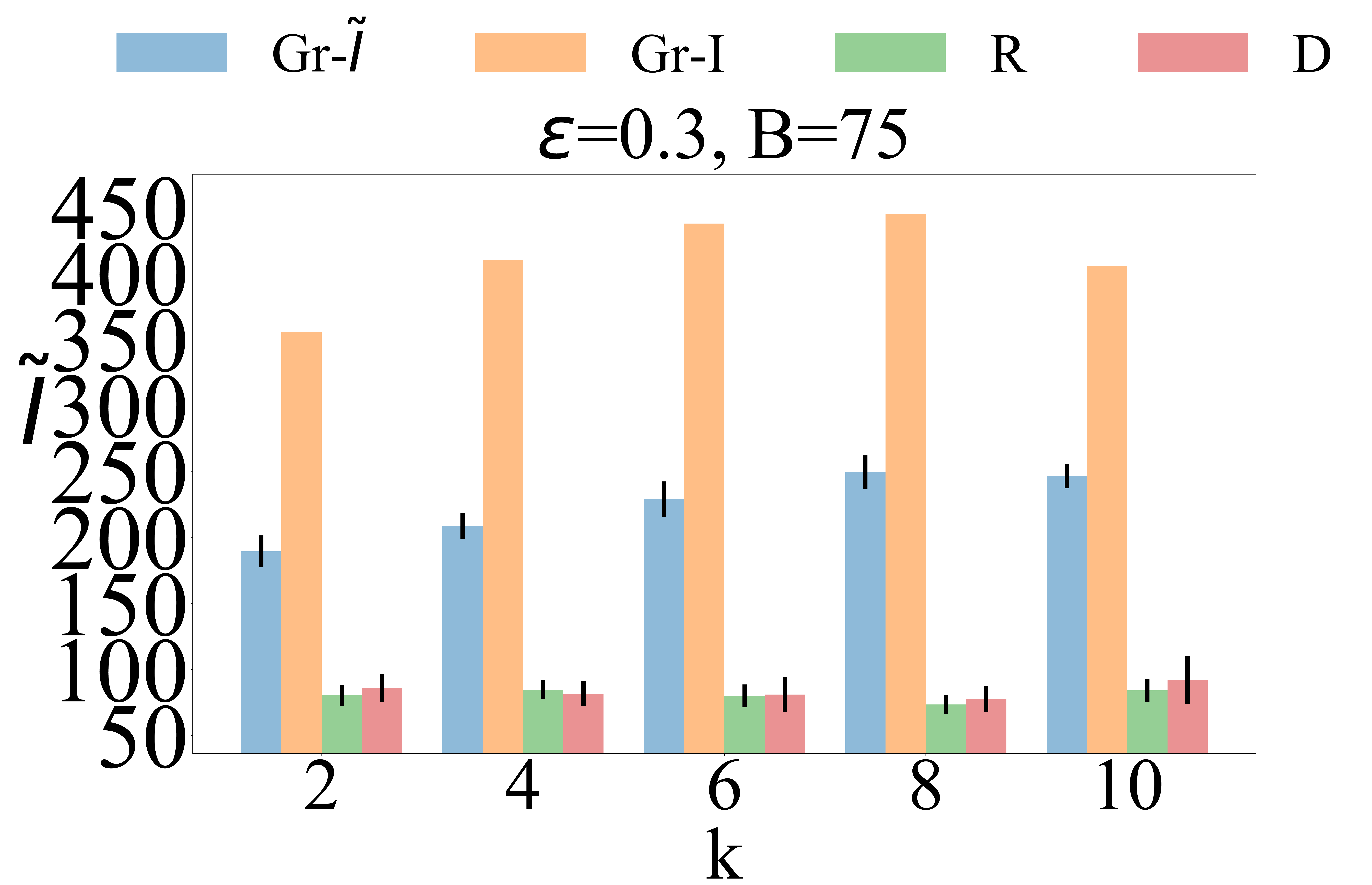}
        \caption{}\label{fig:IMk1_paper}
    \end{subfigure}%
    \begin{subfigure}{0.23\textwidth}
        \centering
        \captionsetup{justification=centering}
        \includegraphics[width = \linewidth]{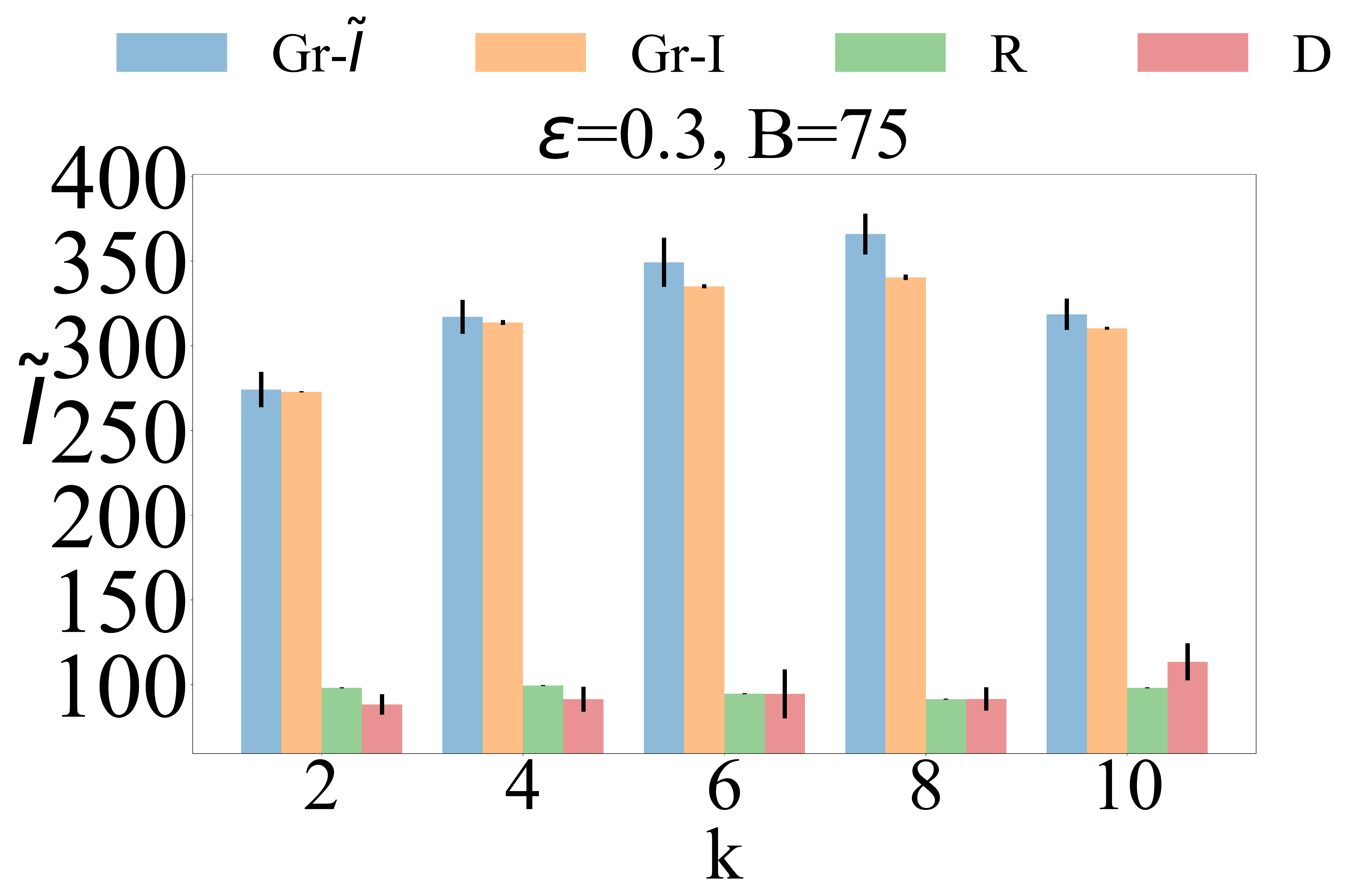}
        \caption{}\label{fig:IMk2_paper}
    \end{subfigure}%
    \begin{subfigure}{0.23\textwidth}
        \centering
        \captionsetup{justification=centering}
        \includegraphics[width = \linewidth]{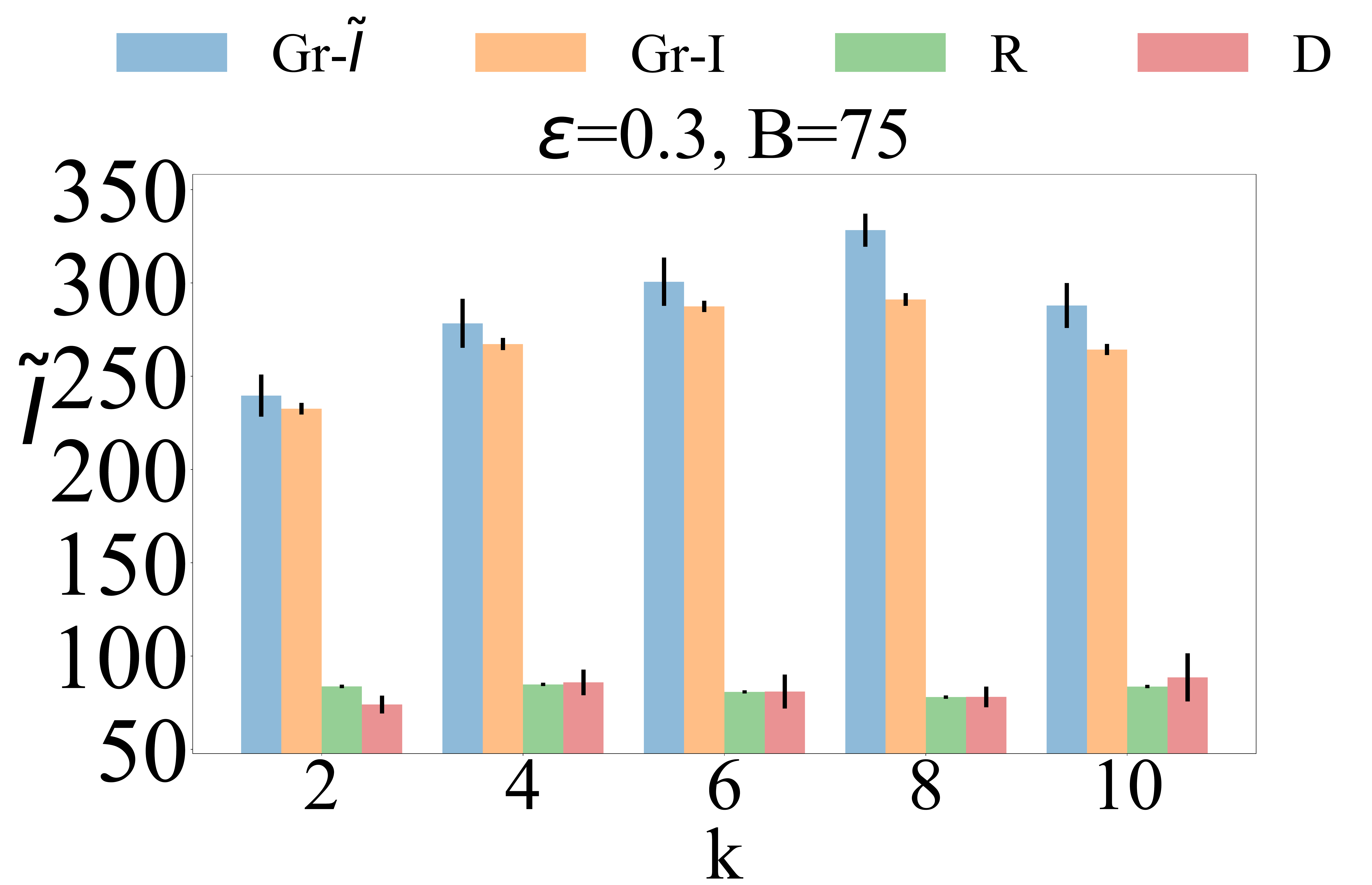}
        \caption{}\label{fig:IMk3_paper}
    \end{subfigure}%
\caption{$\tilde{I}$ for varying $k$ in: (a) AG, (b) MeanG, and (c) MaxG setting.}\label{fig:imkall_paper}
\vspace{-5mm}
   \end{figure*}

\vspace{+1mm}
\noindent\textbf{Results.}~ Fig.~\ref{fig:spAG_IS} shows that, in the AG setting, $Gr$-$H$ substantially outperformed $Gr$-$\tilde{H}$ across all $k$, $b$, and $\varepsilon$ values. 
This can be explained by the fact that, in the AG setting, $Gr$-$H$ is favored by the construction of $F$, because $F$ is based on its solution $\pmb{x_f}$.   
Both $Gr$-$H$ and $Gr$-$\tilde{H}$ outperformed $R$ (the latter by a smaller margin due to the adversarial noise construction). This happened even when $\varepsilon=1$, the case in which they do not offer approximation guarantees. 
The performance of $Gr$-$\tilde{H}$ suffers as $\varepsilon$ goes to 1, due to the uniform range $[1-\varepsilon, 1]$ used to generate $F$'s values. 

Figs. \ref{fig:spMean} and \ref{fig:spMax} show that, in the MaxG and MeanG settings, $Gr$-$\tilde{H}$ outperformed $Gr$-$H$ in almost all tested cases on average, and especially for larger  $b$ and $\varepsilon$. This is because 
the noise is more structured and suggests that $Gr$-$\tilde{H}$ may be a practical algorithm 
(e.g., in applications where the maximum or expected noise of sensors is taken as an aggregate of  the noise of sensors). 
We also observe that $Gr$-$\tilde{H}$ has a larger  performance gain and less variability (low standard deviation bars) over $Gr$-$H$ under MeanG than under MaxG. 

\vspace{-2mm}
\subsection{Influence Maximization with Approximately $k$-Submodular Functions and the TS Constraint} \label{exp:IM}

The objective is to select a sufficiently large number of users in a social network who would influence the largest expected number of users in the social network through word-of-mouth effects. The selected users are called \emph{seeds}. To measure influence, we adapt the $k$-IC influence diffusion model proposed in~\cite{Ohsaka:2015aa}. In the $k$-IC model, $k$ different topics spread through a social network independently. 
At 
$t=0$, there is a vector  $\pmb{x}=(X_1, \ldots, X_k)$ of seeds who are influenced.  Each $u$ in $X_i$, $i\in[k]$, is influenced about topic $i$ and has a single chance to influence its out-neighbor $v$, if $v$ is not already influenced. The node $v$ is influenced at $t=1$ by $u$ on topic $i$ with probability $p_{u,v}^i$. Also, $v$ is influenced  by any of its in-neighbors (other seeds) at time $t=1$ on topic $i$. When $v$ becomes influenced, it stays influenced and has a single chance to influence each of its out-neighbors that is not already influenced. The process proceeds until no new nodes are influenced. The expected number of influenced users (\emph{spread}) is $I(\pmb{x})=\mathrm{E}[|\cup_{i\in[k]}A_i(X_i)|]$, where $A_i(X_i)$ is a random variable representing the set of users influenced about topic $i$ through $X_i$. 

Our adapted $k$-IC model differs from the $k$-IC model in that we measure spread by $\tilde{I}(\pmb{x})=\xi(\pmb x) \cdot I({\pmb x})$, instead of $I(\pmb{x})$,  where $\xi()$ is the noise function in the AG, MaxG, or MeanG setting. The noise models empirical evidence that the spread may be non-submodular and difficult to quantify accurately. This happens because users find information diffused by many in-neighbors as already known and less interesting, in which case the noise depends on the data~\cite{noveltydecay}. 
Furthermore, it happens because the combined influence of subsets of influenced in-neighbors of a node also affects the influence probability of the node and hence the spread, in which case the  noise depends on the subset of influenced in-neighbor of a  user~\cite{hypergraph}. $\tilde{I}({\pmb x})$ is an $\varepsilon$-AS function, since $(1-\varepsilon)\cdot I({\pmb x}) \leq \tilde{I}({\pmb x})\leq  (1+\varepsilon)\cdot I({\pmb x})$ and $I(\pmb x)$ is monotone $k$-submodular~\cite{Ohsaka:2015aa}.

\vspace{+1mm}
\noindent \textbf{Algorithms.}~ We first apply $k$-Greedy-TS using $I$ as $f$ and then using $\tilde{I}$ as $F$, and we compare their solutions in terms of $\tilde{I}$. We refer to them as $Gr$-$I$ and $Gr$-$\tilde{I}$. 
We also evaluate $Gr$-$I$ and $Gr$-$\tilde{I}$ against two baselines, also used in~\cite{Ohsaka:2015aa}: (1) Random (R), which outputs a random vector ${\pmb x}$ of size $B$, and (2) Degree (D), which sorts all nodes in decreasing order based on their out-degree and then assigns each of them to a random topic (dimension). We simulated the influence process based on Monte Carlo simulation as in \cite{Ohsaka:2015aa}. We configured the  algorithms with $k\in\{2, 4, \ldots, 10\}$, $\varepsilon \in \{0, 0.1, \ldots, 1\}$, and $B\in \{5,10 , \ldots,  100\}$. By default, $k=8$, $\varepsilon=0.3$, and $B=75$.

\vspace{+1mm}
\noindent \textbf{Dataset.}~ We used the Digg social news dataset that is available at  \url{http://www.isi.edu/~lerman/downloads/digg2009.html},  following the setup of \cite{Ohsaka:2015aa}.
The dataset consists of a graph and a log of user votes for stories. Each node represents a user and each edge $(u,v)$ represents that user $u$ can watch the activity of node $v$.
The edge probabilities $p_{u,v}^i$ for each edge $(u,v)$ and topic $i$ were obtained from~\cite{Ohsaka:2015aa}.

\vspace{+1mm}
\noindent \textbf{Results.}~The results in  Fig.~\ref{fig:imkall_paper} are similar to those 
of Section~\ref{sec:spexp}. That is, in the AG setting  $Gr$-$I$ outperformed $Gr$-$\tilde{I}$ (see  Fig.~\ref{fig:IMk1_paper}). 
This is because $F$ is based on the solution of $Gr$-$I$ and thus this algorithm is favored over $Gr$-$\tilde{I}$. On the other hand,  in the MaxG and MeanG setting  $Gr$-$\tilde{I}$ outperformed $Gr$-$I$ in all tested cases (see Figs. \ref{fig:IMk2_paper} and \ref{fig:IMk3_paper}). This is because the noise is more structured and suggests that $Gr$-$\tilde{I}$ may be a practical algorithm, when the noise is structured and not adversarially chosen. In all tested cases, as expected, both $Gr$-$I$ and $Gr$-$\tilde{I}$ outperformed $R$ and $D$. We observed similar trends, when we varied the parameters $B$ and $\varepsilon$ (see  Appendix~\ref{appendix:E1}).

\section{Conclusion and Discussion} \label{section:discussion}
In this paper, we show that simple greedy algorithms  
can obtain reasonable approximation ratios for an $\varepsilon$-AS or $\varepsilon$-ADR function $F$ 
subject to total size and individual size constraints. 
The analysis (i.e., proofs of Theorem \ref{theorem:ASIS} and Theorem \ref{theorem:ADRIS}) 
for the individual size constraint can be extended to capture  a \emph{group} size constraint. 
Let $G_1, ..., G_m$ be a partition of $\{1, ..., k\}$ and $B_1, ..., B_m$ be some positive integer numbers. 
The maximization problem of an $\varepsilon$-AS or $\varepsilon$-ADR function $F$ subject to the group size constraint is defined to be 
$\max_{\pmb x \in (k+1)^V : \sum_{j \in G_i} |supp_j(\pmb{x})|  \le B_i \; \forall i \in [m]} F(\pmb x)$, 
where the total size of all of subsets within a group $G_i$ is at most $B_i$.  
The same approximation ratios from Theorem \ref{theorem:ASIS} and Theorem \ref{theorem:ADRIS} can be obtained for $\varepsilon$-AS and $\varepsilon$-ADR function $F$, respectively, 
using a modified greedy algorithm (similar to Algorithm \ref{algorithm2}) where the condition in line 5 can be changed to account for the group constraint. 


Our definitions for $\varepsilon$-AS and $\varepsilon$-ADR depend on the lower bound constant, $(1-\varepsilon)$ and upper bound constant, $(1+\varepsilon)$. 
Similar approximation ratio results can be derived when replacing $(1-\varepsilon)$ with some lower bound constant $a$ and upper bound constant $b$ where $0 < a \le b$. The approximation ratios will depend on $a$ and $b$ and can be obtained following the same proof ideas. 

Finally, it would be interesting to evaluate our algorithms using datasets that are inherently noisy. 

\bibliographystyle{plain}
\bibliography{short_references} 

\newpage

\appendix
\appendixpage

\section{Proofs in Section \ref{section:prelim}}\label{appendA}

\begin{proof}[Proof of Theorem \ref{lemma1}]
To prove the theorem, it is sufficient to show that the claim holds for any $\pmb{x} \in (k+1)^V$. 
For any $\pmb{x} = (X_1, ..., X_k) \in (k+1)^V$, for each $i \in [k]$, we order the elements of $X_ i$ such that 
$X_i = \{e_{i1}, e_{i2}, ..., e_{i |X_i|} \}$. 
It follows that 
\begin{align*}
F(\pmb{x}) &= F(X_1, \hdots, X_k) \\ 
&= \sum_{i \in [k]} \sum_{j = 1}^{|X_i|}  \Delta_{e_j,i} F(X_1, ..., X_i \setminus \{e_{i1}, ..., e_{ij}\}, ..., X_k), 
\end{align*}
where the equality is from adding/subtracting common terms. 
Since $F$ is $\varepsilon$-ADR, we have 
{\small
\begin{align*}
&(1-\varepsilon) f(\pmb{x}) \\
&= (1- \varepsilon) \cdot  \sum_{i \in [k]} \sum_{j = 1}^{|X_i|}  \Delta_{e_j,i} f(X_1, ..., X_i \setminus \{e_{i1}, ..., e_{ij}\}, ..., X_k) \\ 
& \le  F(\pmb{x}) \\
&\le (1+\varepsilon)\cdot \sum_{i \in [k]} \sum_{j = 1}^{|X_i|}  \Delta_{e_j,i} f(X_1, ..., X_i \setminus \{e_{i1}, ..., e_{ij}\}, ..., X_k) \\
&\leq (1+\varepsilon) f(\pmb{x}), 
\end{align*}}
\noindent where the inequality is due to the application of $\varepsilon$-ADR definition to each summation term. 
Thus, we have shown that $F$ is $\varepsilon$-AS under the same $k$-submodular function $f$. 
\end{proof}

\section{Proofs in Section \ref{section:kg1}} \label{sec:algandproofs}


\begin{proof}[Proof of Lemma \ref{lemma:ineq1}]
First note that $\Delta_{e^{(j)},i^{(j)}}F(\pmb{x}^{(j-1)})\geq{}\Delta_{o^{(j)},\pmb{o}^{(j-1)}(o^{(j)})}F(\pmb{x}^{(j-1)})$ as $(e^{(j)}, i^{(j)})$ 
is selected by the greedy algorithm, which must provide as much marginal gain as $(o^{(j)},\pmb{o}^{(j-1)}(o^{(j)}))$. 
By using the definition of $\varepsilon$-AS on the above inequality, we have that
$(1+\varepsilon)f(\pmb{x}^{(j)}) \geq 
(1-\varepsilon)f\left(\left(X^{(j-1)}_1,\cdots, X^{(j-1)}_{\pmb{o}^{(j-1)}(o^{(j)})}\cup\{o^{(j)}\},\cdots,X^{(j-1)}_k\right)\right)$. 
By submodularity and $\pmb{x}^{(j-1)}\preceq\pmb{o}^{(j-1)}$, 
we have that $\Delta_{o^{(j)},\pmb{o}^{(j-1)}(o^{(j)})}f(\pmb{x}^{(j-1)})
\geq{}\Delta_{o^{(j)},\pmb{o}^{(j-1)}(o^{(j)})}f(\pmb{o}^{(j-\frac{1}{2})}) = f(\pmb{o}^{(j-1)}) - f(\pmb{o}^{(j-\frac{1}{2})})
\ge f(\pmb{o}^{(j-1)})-f(\pmb{o}^{(j)})$. 
Our result follows immediately after combining the above inequalities. 
\end{proof}

\begin{proof}[Proof of Lemma \ref{lemma:ineq2}]
First note that $\Delta_{e^{(j)},i^{(j)}}F(\pmb{x}^{(j-1)})\geq{}\Delta_{o^{(j)},\pmb{o}^{(j-1)}(o^{(j)})}F(\pmb{x}^{(j-1)})$ as $(e^{(j)}, i^{(j)})$ 
is selected by the greedy algorithm, which must provide as much marginal gain as $(o^{(j)},\pmb{o}^{(j-1)}(o^{(j)}))$. 
By submodularity and $\pmb{x}^{(j-1)}\preceq\pmb{o}^{(j-1)}$, 
we have 
$\Delta_{o^{(j)},\pmb{o}^{(j-1)}(o^{(j)})}f(\pmb{x}^{(j-1)}) \geq \Delta_{o^{(j)},\pmb{o}^{(j-1)}(o^{(j)})}f(\pmb{o}^{(j-\frac{1}{2})})
\ge \frac{1}{1+\varepsilon} \Delta_{o^{(j)},\pmb{o}^{(j-1)}(o^{(j)})}F(\pmb{o}^{(j-\frac{1}{2})})$ (the last inequality is by the $\varepsilon$-ADR definition)
and  
$\Delta_{o^{(j)},\pmb{o}^{(j-1)}(o^{(j)})} F(\pmb{o}^{(j-\frac{1}{2})}) - \Delta_{e^{(j)},i^{(j)}} F(\pmb{o}^{(j-\frac{1}{2})}) = F(\pmb{o}^{(j-1)}) - F(\pmb{o}^{(j)})$. 
It follows that $\frac{1}{1-\varepsilon}\Delta_{o^{(j)},\pmb{o}^{(j-1)}(o^{(j)})}F(\pmb{x}^{(j-1)})\geq{}
\frac{1}{1+\varepsilon}\left[F(\pmb{o}^{(j-1)})-F(\pmb{o}^{(j)})\right]$ by the definition of $\varepsilon$-ADR. 
Our claim follows immediately from the last inequalities. 
\end{proof}

\subsection{Maximizing $\varepsilon$-AS and $\varepsilon$-ADR Functions with the Individual Size Constraints} \label{sec:ISconstraint}

In this subsection, we consider the problem of maximizing $\varepsilon$-AS or $\varepsilon$-ADR function $F$ 
subject to the individual size constraint $B_1, ..., B_k$ using Algorithm \ref{algorithm2} on the function $F$.  
Recall that in the individual size constraint maximization problem, we are given $B_1, ..., B_k$ 
restricting the maximum number of elements one can select for each subset.  
We define $B = \sum_{j \in [k] } B_j$. 
We use the following notations as in \cite{Ohsaka:2015aa} 
and the same notations $e^{(j)}$, $i^{(j)}$, and $\pmb{x}^{(j)}$ from  Section \ref{section1}.

        

As before, we iteratively define $\pmb{o}^{(0)}=\pmb{o}, \pmb{o}^{(1)}, \cdots, \pmb{o}^{(B)}$ as follows. 
For each $j\in[B]$, we let $S^{(j)}_i=supp_i(\pmb{o}^{(j-1)})\setminus{}supp_i(\pmb{x}^{(j-1)})$. We consider the following two cases.

\begin{itemize}
\item[\bf C1:] Suppose there exists $i'\neq i^{(j)}$ such that $e^{(j)}\in{}S_{i'}^{(j)}$. In this case, we set $o^{(j)}$ to be an arbitrary element in $S_{i^{(j)}}^{(j)}$.
We construct $\pmb{o}^{(j-\frac{1}{2})}$ from $\pmb{o}^{(j-1)}$ by assigning $0$ to the $e^{(j)}$-th element and the $o^{(j)}$-th element.
Then we construct $\pmb{o}^{(j)}$ from $\pmb{o}^{(j-\frac{1}{2})}$ by assigning $i^{(j)}$ to the $e^{(j)}$-th element and $i'$ to the $o^{(j)}$-th element. 
We may use $\pmb{o}^{(j-\frac{1}{2})}_{(e,i)}$ to denote $\pmb{o}^{(j-\frac{1}{2})}$ with $i$ assigned to the $e$-th element. 

\item[\bf C2:]  Suppose, for any $i'\neq i^{(j)}$, we have $e^{(j)}\notin{}S_{i'}^{(j)}$. 
In this case, we let $o^{(j)}=e^{(j)}$ if $e^{(j)}\in S_{i^{(j)}}^{(j)}$,
and let $o^{(j)}$ be an arbitrary element in $S_{i^{(j)}}^{(j)}$ otherwise.
We construct $\pmb{o}^{(j-\frac{1}{2})}$ from $\pmb{o}^{(j-1)}$ by assigning $0$ to the $o^{(j)}$-th element. 
We then construct $\pmb{o}^{(j)}$ from $\pmb{o}^{(j-\frac{1}{2})}$ by assigning $i^{(j)}$ to the $e^{(j)}$-th element.
\end{itemize}
By construction, we have $|supp_i(\pmb{o}^{(j)}|=B_i$ for each $i\in[k]$ and $j\in\{0, 1, \cdots, B\}$. 
Moreover, $\pmb{x}^{(j-1)}\preceq\pmb{o}^{(j-\frac{1}{2})}$ for each $j\in[B]$.

    
We first consider $\varepsilon$-AS Functions with the Individual Size Constraints in Section~\ref{b11}. Then, we consider $\varepsilon$-ADR functions with the Individual Size constraints in Section~\ref{b12}. 

\subsubsection{$\varepsilon$-AS Functions with the Individual Size Constraints}\label{b11}

    
\begin{lemma} \label{lemma:ineq3}
For any $j \in [B]$, $2\left[\frac{1+\varepsilon}{1-\varepsilon}f(\pmb{x}^{(j)})-f(\pmb{x}^{(j-1)})\right]\geq{}f(\pmb{o}^{(j-1)})-f(\pmb{o}^{(j)})$. 
\end{lemma}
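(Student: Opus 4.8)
The plan is to mirror the proof of Lemma \ref{lemma:ineq1}, but now accounting for the two cases C1 and C2 in the construction of the auxiliary sequence $\pmb{o}^{(j-1)}, \pmb{o}^{(j-\frac12)}, \pmb{o}^{(j)}$ under individual size constraints. The factor of $2$ on the left-hand side is exactly what is needed to absorb the extra ``swap'' that happens in case C1, where two coordinates of $\pmb{o}^{(j-1)}$ are modified rather than one. First I would record the greedy optimality inequality: since $(e^{(j)}, i^{(j)})$ is the pair selected at step $j$, it holds that $\Delta_{e^{(j)}, i^{(j)}} F(\pmb{x}^{(j-1)}) \ge \Delta_{o^{(j)}, \pmb{o}^{(j-1)}(o^{(j)})} F(\pmb{x}^{(j-1)})$, and, crucially, also $\Delta_{e^{(j)}, i^{(j)}} F(\pmb{x}^{(j-1)}) \ge \Delta_{e^{(j)}, i'} F(\pmb{x}^{(j-1)})$ in case C1, because $i' \in I$ is still an admissible dimension at step $j$ (it must be, since $o^{(j)}$ or $e^{(j)}$ can still be assigned to it). Translating both via the $\varepsilon$-AS sandwich $(1-\varepsilon) f \le F \le (1+\varepsilon) f$ gives lower bounds of the form $(1+\varepsilon) f(\pmb{x}^{(j)}) \ge (1-\varepsilon) f(\pmb{x}^{(j-1)} \text{ with the relevant element added})$.

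Next I would handle the two cases separately. In case C2, only $o^{(j)}$ is removed from $\pmb{o}^{(j-1)}$ to form $\pmb{o}^{(j-\frac12)}$, and then $e^{(j)}$ is added to coordinate $i^{(j)}$; the argument is essentially identical to Lemma \ref{lemma:ineq1}, using $\pmb{x}^{(j-1)} \preceq \pmb{o}^{(j-\frac12)}$ and diminishing returns to get $\Delta_{o^{(j)}, \pmb{o}^{(j-1)}(o^{(j)})} f(\pmb{x}^{(j-1)}) \ge \Delta_{o^{(j)}, \pmb{o}^{(j-1)}(o^{(j)})} f(\pmb{o}^{(j-\frac12)}) = f(\pmb{o}^{(j-1)}) - f(\pmb{o}^{(j-\frac12)}) \ge f(\pmb{o}^{(j-1)}) - f(\pmb{o}^{(j)})$, so even the weaker bound $\frac{1+\varepsilon}{1-\varepsilon} f(\pmb{x}^{(j)}) - f(\pmb{x}^{(j-1)}) \ge f(\pmb{o}^{(j-1)}) - f(\pmb{o}^{(j)})$ holds and multiplying by $2$ is trivially fine. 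In case C1, I would write $f(\pmb{o}^{(j-1)}) - f(\pmb{o}^{(j)})$ as a telescoping sum through $\pmb{o}^{(j-\frac12)}$: removing $e^{(j)}$ (from $i'$) and $o^{(j)}$ (from $i^{(j)}$) to get down to $\pmb{o}^{(j-\frac12)}$, then adding $e^{(j)}$ to $i^{(j)}$ and $o^{(j)}$ to $i'$ to get to $\pmb{o}^{(j)}$. Each of the two ``down'' steps is bounded, via diminishing returns and $\pmb{x}^{(j-1)} \preceq \pmb{o}^{(j-\frac12)}$, by the corresponding marginal of $f$ at $\pmb{x}^{(j-1)}$ (one bounded using the $(o^{(j)}, i^{(j)})$ greedy inequality, the other using the $(e^{(j)}, i')$ greedy inequality); each of the two ``up'' steps is nonpositive by... no, monotonicity makes them nonnegative, so they help the wrong way — instead I should bound them by $0$ from above only if they were negative, which they are not, so I need to be more careful: actually the additions only increase $f$, hence $f(\pmb{o}^{(j-\frac12)}) \le f(\pmb{o}^{(j)})$, which means $f(\pmb{o}^{(j-1)}) - f(\pmb{o}^{(j)}) \le f(\pmb{o}^{(j-1)}) - f(\pmb{o}^{(j-\frac12)})$, and the latter difference is the sum of the two ``down'' marginals, each $\le$ the corresponding $f$-marginal at $\pmb{x}^{(j-1)}$, each of which is $\le \frac{1+\varepsilon}{1-\varepsilon} f(\pmb{x}^{(j)}) - f(\pmb{x}^{(j-1)})$ by the translated greedy inequality. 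Summing the two gives $2\left[\frac{1+\varepsilon}{1-\varepsilon} f(\pmb{x}^{(j)}) - f(\pmb{x}^{(j-1)})\right]$, as desired.

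The main obstacle is the bookkeeping in case C1: one must verify that \emph{both} removal marginals at $\pmb{o}^{(j-\frac12)}$ are correctly dominated — one by the greedy choice against $(o^{(j)}, \pmb{o}^{(j-1)}(o^{(j)})) = (o^{(j)}, i^{(j)})$ and the other by the greedy choice against $(e^{(j)}, i')$ — and that applying diminishing returns from $\pmb{x}^{(j-1)}$ up to $\pmb{o}^{(j-\frac12)}$ is legitimate for each of them, i.e. that the element being added lies outside the support of $\pmb{o}^{(j-\frac12)}$ and that $\pmb{x}^{(j-1)} \preceq \pmb{o}^{(j-\frac12)}$ holds (which is guaranteed by the construction, since $S_i^{(j)}$ is defined as the support difference and $\pmb{o}^{(j-\frac12)}$ is obtained by deleting elements of $\pmb{o}^{(j-1)}$ that are in those difference sets). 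Once the case split and the index juggling are pinned down, each individual inequality is exactly the $\varepsilon$-AS / diminishing-returns manipulation already used in Lemma \ref{lemma:ineq1}, so no genuinely new idea is required beyond the factor-$2$ accounting.
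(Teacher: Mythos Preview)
Your proposal is correct and follows essentially the same route as the paper. Both arguments split into cases C1 and C2, use the two greedy inequalities $\Delta_{e^{(j)},i^{(j)}}F(\pmb{x}^{(j-1)})\ge\Delta_{o^{(j)},i^{(j)}}F(\pmb{x}^{(j-1)})$ and $\Delta_{e^{(j)},i^{(j)}}F(\pmb{x}^{(j-1)})\ge\Delta_{e^{(j)},i'}F(\pmb{x}^{(j-1)})$, translate them through the $\varepsilon$-AS sandwich, and then apply orthant submodularity from $\pmb{x}^{(j-1)}\preceq\pmb{o}^{(j-\frac12)}$. The only cosmetic difference is that the paper writes out a four-term identity for $f(\pmb{o}^{(j-1)})-f(\pmb{o}^{(j)})$ and drops the two negative terms by monotonicity, whereas you first invoke monotonicity to pass to $f(\pmb{o}^{(j-1)})-f(\pmb{o}^{(j-\frac12)})$ and then telescope that into two marginals; these are the same calculation. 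One small point to tighten: your parenthetical justification that $i'\in I$ (``it must be, since $o^{(j)}$ or $e^{(j)}$ can still be assigned to it'') is a bit loose---the clean reason is the invariant $supp_i(\pmb{x}^{(j-1)})\subseteq supp_i(\pmb{o}^{(j-1)})$ maintained by the Ohsaka--Yoshida construction, which together with $e^{(j)}\in S_{i'}^{(j)}$ forces $|supp_{i'}(\pmb{x}^{(j-1)})|<B_{i'}$.
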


\begin{proof}[Proof of Lemma \ref{lemma:ineq3}]
%
The arguments for {\bf Case 1} and {\bf Case 2} are similar. We begin with {\bf Case 1}. 

{\bf Case 1.}  First note that  $\Delta_{e^{(j)},i^{(j)}}F(\pmb{x}^{(j-1)})\geq{}\Delta_{o^{(j)},i^{(j)}}F(\pmb{x}^{(j-1)})$ and 
$\Delta_{e^{(j)},i^{(j)}}F(\pmb{x}^{(j-1)})\geq{}\Delta_{e^{(j)},i'}F(\pmb{x}^{(j-1)})$ because $(e^{(j)}, i^{(j)})$ 
is selected by the greedy algorithm. 

It follows that $(1+\varepsilon)f(\pmb{x}^{(j)})\geq \\ (1-\varepsilon)f\left(\left(X^{(j-1)}_1,\cdots, X^{(j-1)}_{i^{(j)}}\cup\{o^{(j)}\},\cdots,X^{(j-1)}_k\right)\right)$  and
$(1+\varepsilon)f(\pmb{x}^{(j)})\geq \\ (1-\varepsilon)f\left(\left(X^{(j-1)}_1,\cdots, X^{(j-1)}_{i'}\cup\{e^{(j)}\},\cdots,X^{(j-1)}_k\right)\right)$ 
by using the definition of approximate submodularity.
Since $\pmb{x}^{(j-1)}\preceq\pmb{o}^{(j-\frac{1}{2})} \preceq\pmb{o}^{(j-\frac{1}{2})}_{(e^{(j)},i')} $, 
we have that $\Delta_{o^{(j)},i^{(j)}}f(\pmb{x}^{(j-1)}) \geq \Delta_{o^{(j)},i^{(j)}}f(\pmb{o}^{(j-\frac{1}{2})})$, 
$\Delta_{o^{(j)},i^{(j)}}f(\pmb{x}^{(j-1)})\geq \Delta_{o^{(j)},i^{(j)}}f(\pmb{o}^{(j-\frac{1}{2})}_{(e^{(j)},i')})$, 
and $\Delta_{e^{(j)},i'}f(\pmb{x}^{(j-1)})\geq\Delta_{e^{(j)},i'}f(\pmb{o}^{(j-\frac{1}{2})})$ from orthant submodularity. 
From the above inequalities, we have that $\frac{1+\varepsilon}{1-\varepsilon}f(\pmb{x}^{(j)})-f(\pmb{x}^{(j-1)})$ is greater than or equal to 
$\Delta_{o^{(j)},i^{(j)}}f(\pmb{o}^{(j-\frac{1}{2})})$, $\Delta_{o^{(j)},i^{(j)}}f(\pmb{o}^{(j-\frac{1}{2})}_{(e^{(j)},i')})$, and $\Delta_{e^{(j)},i'}f(\pmb{o}^{(j-\frac{1}{2})})$. 
As a result, we have 
{\small
\begin{align*}
f(\pmb{o}^{(j-1)})-f(\pmb{o}^{(j)})&=\Delta_{e^{(j)},i'}f(\pmb{o}^{(j-\frac{1}{2})})-\Delta_{o^{(j)},i'}f(\pmb{o}^{(j-\frac{1}{2})}_{(e^{(j)},i^{(j)})})\\
&+\Delta_{o^{(j)},i^{(j)}}f(\pmb{o}^{(j-\frac{1}{2})}_{(e^{(j)},i')}) - \Delta_{e^{(j)},i^{(j)}}f(\pmb{o}^{(j-\frac{1}{2})})\\
&\leq\Delta_{e^{(j)},i'}f(\pmb{o}^{(j-\frac{1}{2})})+\Delta_{o^{(j)},i^{(j)}}f(\pmb{o}^{(j-\frac{1}{2})}_{(e^{(j)},i^{(j)})})\\
&\leq 2\left[\frac{1+\varepsilon}{1-\varepsilon}f(\pmb{x}^{(j)})-f(\pmb{x}^{(j-1)})\right]. 
\end{align*}}

{\bf Case 2.} The argument follows similarly as {\bf Case 1} 
where one can show $\frac{1+\varepsilon}{1-\varepsilon}f(\pmb{x}^{(j)})-f(\pmb{x}^{(j-1)})$ is greater than or equal to 
$\Delta_{o^{(j)},i^{(j)}}f(\pmb{o}^{(j-\frac{1}{2})})$. Therefore, we have 
{\small
\begin{align*}
f(\pmb{o}^{(j-1)})-f(\pmb{o}^{(j)})&=\Delta_{o^{(j)},i^{(j)}}f(\pmb{o}^{(j-\frac{1}{2})}) -\Delta_{e^{(j)},i^{(j)}}f(\pmb{o}^{(j-\frac{1}{2})})\\
&\leq\Delta_{o^{(j)},i^{(j)}}f(\pmb{o}^{(j-\frac{1}{2})})\\ &\leq 2\left[\frac{1+\varepsilon}{1-\varepsilon}f(\pmb{x}^{(j)})-f(\pmb{x}^{(j-1)})\right].\end{align*}
}
\end{proof}

\begin{proof}[Proof of Theorem \ref{theorem:ASIS}]
We have 
\begin{align*}
f(\pmb{o})-f(\pmb{x}) &=\sum_{j\in[B]}\left(f(\pmb{o}^{(j-1)})-f(\pmb{o}^{(j)})\right) \\
&\leq 2\sum_{j\in[B]}\left(\frac{1+\varepsilon}{1-\varepsilon}f(\pmb{x}^{(j)})-f(\pmb{x}^{j-1})\right)\\
&\leq\frac{2-2\varepsilon+2\varepsilon{}B}{1-\varepsilon}f(\pmb{x}),
\end{align*}
where the first inequality is due to Lemma \ref{lemma:ineq3}. 
Hence, we have $F(\pmb{x})\geq\frac{(1-\varepsilon)^2}{(3-3\varepsilon+2\varepsilon{}B)(1+\varepsilon)}F(\pmb{o})$.
\end{proof}

%
%
%


\vspace{+3mm}
\subsubsection{$\varepsilon$-ADR Functions with the Individual Size Constraints}\label{b12}

    

\begin{lemma} \label{lemma:ineq4}
For any $j \in [B]$, $2\left[F(\pmb{x}^{(j)})-F(\pmb{x}^{(j-1)})\right] \geq \frac{1-\varepsilon}{1+\varepsilon}\left[F(\pmb{o}^{(j-1)})-F(\pmb{o}^{(j)})\right]$.
\end{lemma}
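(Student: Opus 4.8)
\textbf{Proof proposal for Lemma \ref{lemma:ineq4}.} The plan is to mimic the structure of the proof of Lemma \ref{lemma:ineq3} (the $\varepsilon$-AS individual-size case), but work directly with $F$ and its marginal gains via the $\varepsilon$-ADR inequalities, exactly as the proof of Lemma \ref{lemma:ineq2} adapts the proof of Lemma \ref{lemma:ineq1}. So I would split into the two cases {\bf C1} and {\bf C2} introduced in the construction of $\pmb{o}^{(0)}, \ldots, \pmb{o}^{(B)}$, and in each case bound $F(\pmb{o}^{(j-1)}) - F(\pmb{o}^{(j)})$ by a telescoping sum of marginal gains of $F$ evaluated at $\pmb{o}^{(j-1/2)}$ (and, in Case 1, at $\pmb{o}^{(j-1/2)}_{(e^{(j)},i')}$), then transfer each such marginal gain down to $\pmb{x}^{(j-1)}$.

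First, in {\bf Case 1}, I would record the two greedy inequalities $\Delta_{e^{(j)},i^{(j)}}F(\pmb{x}^{(j-1)}) \ge \Delta_{o^{(j)},i^{(j)}}F(\pmb{x}^{(j-1)})$ and $\Delta_{e^{(j)},i^{(j)}}F(\pmb{x}^{(j-1)}) \ge \Delta_{e^{(j)},i'}F(\pmb{x}^{(j-1)})$. Then, for a marginal gain $\Delta_{u,i}F$ at a point $\pmb{z}$ with $\pmb{x}^{(j-1)} \preceq \pmb{z}$ and $u \notin \mathrm{supp}(\pmb{z})$, the key chain is: $\frac{1}{1+\varepsilon}\Delta_{u,i}F(\pmb{z}) \le \Delta_{u,i}f(\pmb{z}) \le \Delta_{u,i}f(\pmb{x}^{(j-1)}) \le \frac{1}{1-\varepsilon}\Delta_{u,i}F(\pmb{x}^{(j-1)})$, where the outer inequalities are $\varepsilon$-ADR and the middle one is orthant submodularity of $f$; hence $\Delta_{u,i}F(\pmb{z}) \le \frac{1+\varepsilon}{1-\varepsilon}\Delta_{u,i}F(\pmb{x}^{(j-1)})$. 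Applying this with $(u,i) = (o^{(j)}, i^{(j)})$ at $\pmb{z} = \pmb{o}^{(j-1/2)}$ and at $\pmb{z} = \pmb{o}^{(j-1/2)}_{(e^{(j)},i')}$, and with $(u,i) = (e^{(j)}, i')$ at $\pmb{z} = \pmb{o}^{(j-1/2)}$, each of the three relevant marginals of $F$ is $\le \frac{1+\varepsilon}{1-\varepsilon}\Delta_{e^{(j)},i^{(j)}}F(\pmb{x}^{(j-1)}) = \frac{1+\varepsilon}{1-\varepsilon}\bigl[F(\pmb{x}^{(j)}) - F(\pmb{x}^{(j-1)})\bigr]$. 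Writing $F(\pmb{o}^{(j-1)}) - F(\pmb{o}^{(j)})$ as the telescoping identity used in Lemma \ref{lemma:ineq3} (two removed elements $e^{(j)}$ at coordinate $i'$, $o^{(j)}$ at coordinate $i'$, re-added as $e^{(j)}$ at $i^{(j)}$ and $o^{(j)}$ at $i^{(j)}$), the negative marginal terms are dropped and what remains is $\le 2 \cdot \frac{1+\varepsilon}{1-\varepsilon}\bigl[F(\pmb{x}^{(j)}) - F(\pmb{x}^{(j-1)})\bigr]$, i.e. the claimed inequality after multiplying through by $\frac{1-\varepsilon}{1+\varepsilon}$.

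Then {\bf Case 2} is the easier single-removal version: $F(\pmb{o}^{(j-1)}) - F(\pmb{o}^{(j)}) = \Delta_{o^{(j)},i^{(j)}}F(\pmb{o}^{(j-1/2)}) - \Delta_{e^{(j)},i^{(j)}}F(\pmb{o}^{(j-1/2)}) \le \Delta_{o^{(j)},i^{(j)}}F(\pmb{o}^{(j-1/2)}) \le \frac{1+\varepsilon}{1-\varepsilon}\bigl[F(\pmb{x}^{(j)}) - F(\pmb{x}^{(j-1)})\bigr] \le 2\cdot\frac{1+\varepsilon}{1-\varepsilon}\bigl[F(\pmb{x}^{(j)}) - F(\pmb{x}^{(j-1)})\bigr]$, using the same transfer chain and the greedy choice. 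I expect the main obstacle to be bookkeeping rather than mathematics: one must be careful that at each point where a marginal gain of $F$ is evaluated, the element being added is genuinely absent and the base point dominates $\pmb{x}^{(j-1)}$ (so that both $\varepsilon$-ADR and orthant submodularity are legitimately applicable), and that the telescoping decomposition of $F(\pmb{o}^{(j-1)}) - F(\pmb{o}^{(j)})$ matches the one in Lemma \ref{lemma:ineq3} exactly, since that is what makes the negative terms discardable. The factor $2$ (versus the factor $1$ in the TS Lemma \ref{lemma:ineq2}) enters precisely because Case 1 requires bounding two separate positive marginal terms, just as in the $\varepsilon$-AS analysis.
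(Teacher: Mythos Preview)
Your proposal is correct and follows essentially the same route as the paper's proof: the two-case split, the greedy inequalities, the $\varepsilon$-ADR $\to$ orthant-submodularity $\to$ $\varepsilon$-ADR transfer chain, and the telescoping identity from Lemma~\ref{lemma:ineq3} with the nonnegative $F$-marginals dropped. One cosmetic slip: in your parenthetical for Case~1 the coordinates of $o^{(j)}$ are swapped (in $\pmb{o}^{(j-1)}$ it sits at $i^{(j)}$ and in $\pmb{o}^{(j)}$ at $i'$, not the other way around), but since you defer to the identity of Lemma~\ref{lemma:ineq3} this does not affect the argument.
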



\begin{proof}[Proof of Lemma \ref{lemma:ineq4}]
The arguments for {\bf Case 1} and {\bf Case 2} are similar. We begin with {\bf Case 1}. 

{\bf Case 1.} First note that since $(e^{(j)}, i^{(j)})$ is selected by the greedy algorithm, 
we have $\Delta_{e^{(j)},i^{(j)}}F(\pmb{x}^{(j-1)})\geq{}\Delta_{o^{(j)},i^{(j)}}F(\pmb{x}^{(j-1)})$ and 
$\Delta_{e^{(j)},i^{(j)}}F(\pmb{x}^{(j-1)})\geq{}\Delta_{e^{(j)},i'}F(\pmb{x}^{(j-1)})$. 
Since $\pmb{x}^{(j-1)}\preceq\pmb{o}^{(j-\frac{1}{2})} \preceq\pmb{o}^{(j-\frac{1}{2})}_{(e^{(j)},i')}$, 
we have that $\Delta_{o^{(j)},i^{(j)}}f(\pmb{x}^{(j-1)})\geq \Delta_{o^{(j)},i^{(j)}}f(\pmb{o}^{(j-\frac{1}{2})}) \ge \frac{1}{1+\varepsilon} \Delta_{o^{(j)},i^{(j)}} F(\pmb{o}^{(j-\frac{1}{2})}) $,
$\Delta_{o^{(j)},i^{(j)}}f(\pmb{x}^{(j-1)})\geq \Delta_{o^{(j)},i^{(j)}}f(\pmb{o}^{(j-\frac{1}{2})}_{(e^{(j)},i')}) \ge \frac{1}{1+\varepsilon} \Delta_{o^{(j)},i^{(j)}}F(\pmb{o}^{(j-\frac{1}{2})}_{(e^{(j)},i')})$, 
and 
$\Delta_{e^{(j)},i'}f(\pmb{x}^{(j-1)}) \geq\Delta_{e^{(j)},i'}f(\pmb{x}^{(j-1)}) \ge \frac{1}{1+\varepsilon} \geq\Delta_{e^{(j)},i'}F(\pmb{x}^{(j-1)}) $ 
from orthant submodularity. 

Therefore, we obtain that $\frac{1}{1-\varepsilon}\Delta_{e^{(j)},i^{(j)}}F(\pmb{x}^{(j-1)})$ is greater than or equal to 
$\frac{1}{1+\varepsilon} \Delta_{o^{(j)},i^{(j)}} F(\pmb{o}^{(j-\frac{1}{2})})$, $\frac{1}{1+\varepsilon}  \Delta_{o^{(j)},i^{(j)}} F(\pmb{o}^{(j-\frac{1}{2})})$, 
and $\frac{1}{1+\varepsilon} \Delta_{e^{(j)},i'} F(\pmb{x}^{(j-1)})$ by the definition of $\varepsilon$-ADR. 
\begin{align*}
&\frac{1}{1+\varepsilon} \left[ F(\pmb{o}^{(j-1)}) - F(\pmb{o}^{(j)}) \right] \\
=&\frac{1}{1+\varepsilon} \left[ \Delta_{e^{(j)},i'} F(\pmb{o}^{(j-\frac{1}{2})})-\Delta_{o^{(j)},i'} F(\pmb{o}^{(j-\frac{1}{2})}_{(e^{(j)},i^{(j)})})\right.\\
&~~\quad\quad\quad \left.+\Delta_{o^{(j)},i^{(j)}} F(\pmb{o}^{(j-\frac{1}{2})}_{(e^{(j)},i')}) - \Delta_{e^{(j)},i^{(j)}} F(\pmb{o}^{(j-\frac{1}{2})}) \right]\\
\leq&\frac{1}{1+\varepsilon}  \left [ \Delta_{e^{(j)},i'} F(\pmb{o}^{(j-\frac{1}{2})}) + \Delta_{o^{(j)},i^{(j)}} F(\pmb{o}^{(j-\frac{1}{2})}_{(e^{(j)},i^{(j)})}) \right]\\
\leq&\frac{2}{1-\varepsilon}\left[F(\pmb{x}^{(j)})-F(\pmb{x}^{(j-1)})\right]. 
\end{align*}

{\bf Case 2.} The argument follows similarly as {\bf Case 1} where one can show $\frac{1}{1-\varepsilon}\Delta_{e^{(j)},i^{(j)}}F(\pmb{x}^{(j-1)})$ is greater than or equal to 
$\frac{1}{1+\varepsilon} \Delta_{o^{(j)},i^{(j)}} F(\pmb{o}^{(j-\frac{1}{2})})$. Therefore, we have 

\begin{align*}
&\frac{1}{1+\varepsilon} \left[F(\pmb{o}^{(j-1)}) - F(\pmb{o}^{(j)}) \right] \\
=&\frac{1}{1+\varepsilon} \left[ \Delta_{o^{(j)},i^{(j)}} F(\pmb{o}^{(j-\frac{1}{2})}) - \Delta_{e^{(j)},i^{(j)}} F(\pmb{o}^{(j-\frac{1}{2})}) \right]\\
\leq&\frac{1}{1+\varepsilon}\Delta_{o^{(j)},i^{(j)}} F(\pmb{o}^{(j-\frac{1}{2})}) \\
\leq&\frac{2}{1-\varepsilon}\left[F(\pmb{x}^{(j)})-F(\pmb{x}^{(j-1)})\right]. 
\end{align*}
\end{proof}
    
\begin{proof}[Proof of Theorem \ref{theorem:ADRIS}]     
We have
\begin{align*}
F(\pmb{o})-F(\pmb{x})&=\sum_{j\in[B]}\left[F(\pmb{o}^{(j-1)})-F(\pmb{o}^{(j)})\right]\\
&\leq\frac{2(1+\varepsilon)}{1-\varepsilon}\sum_{j\in[B]}\left[F(\pmb{x}^{(j)})-F(\pmb{x}^{(j-1)})\right]\\
&\leq\frac{2(1+\varepsilon)}{1-\varepsilon}F(\pmb{x})
\end{align*}
where the first inequality is due to Lemma \ref{lemma:ineq4}. 
Hence, we have $F(\pmb{x})\geq{}\frac{1-\varepsilon}{3+\varepsilon}F(\pmb{o})$.
\end{proof}


\section{Improved Greedy Approximation Ratios When $f$ is Known: Additional material}\label{sec:improvedgreedy}


We start by restating the theorem presented in Section~\ref{sec:paper:improvedgreedy}.   

\begin{theorem} \label{theo:ftoF}
Let $f$ be a $k$-submodular function and $F$ be an $\varepsilon$-approximately $k$-submodular function that is bounded by $f$. 
If there is an algorithm that provides an approximation ratio of $\alpha$ for maximizing $f$ subject to constraint $\mathbb{X}$, 
then the same solution yields an approximation ratio of $\frac{1-\varepsilon}{1+\varepsilon} \alpha$ for maximizing $F$ subject to constraint $\mathbb{X}$. 
\end{theorem}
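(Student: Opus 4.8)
The plan is to chain four elementary inequalities, using only the two‑sided bound that defines $\varepsilon$‑AS together with the optimality of the solution returned by the assumed algorithm. Write $\pmb{x}$ for the solution produced by running the $\alpha$‑approximation algorithm on $f$ subject to $\mathbb{X}$, and let $\pmb{x}^*_f \in \argmax_{\pmb{y}\in\mathbb{X}} f(\pmb{y})$ and $\pmb{x}^*_F \in \argmax_{\pmb{y}\in\mathbb{X}} F(\pmb{y})$ be optima of the two maximization problems. First I would observe that $\pmb{x}$ is automatically feasible for the $F$‑maximization problem as well: the constraint $\mathbb{X}$ is a purely combinatorial restriction (a total‑size, individual‑size, group‑size, or matroid constraint) that never references the objective, so $\pmb{x}$ is a legitimate candidate whose quality we may compare against $\pmb{x}^*_F$.

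Next I would lower‑bound $F(\pmb{x})$ step by step. From the lower side of the $\varepsilon$‑AS definition applied at $\pmb{x}$, $F(\pmb{x}) \ge (1-\varepsilon)f(\pmb{x})$. From the approximation guarantee of the algorithm on $f$, $f(\pmb{x}) \ge \alpha\, f(\pmb{x}^*_f)$. Since $\pmb{x}^*_f$ maximizes $f$ over $\mathbb{X}$ and $\pmb{x}^*_F$ is feasible, $f(\pmb{x}^*_f) \ge f(\pmb{x}^*_F)$. Finally, from the upper side of the $\varepsilon$‑AS definition applied at $\pmb{x}^*_F$, $f(\pmb{x}^*_F) \ge \tfrac{1}{1+\varepsilon} F(\pmb{x}^*_F)$. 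Concatenating the four bounds yields $F(\pmb{x}) \ge \tfrac{(1-\varepsilon)\alpha}{1+\varepsilon} F(\pmb{x}^*_F)$, which is exactly the claimed ratio $\tfrac{1-\varepsilon}{1+\varepsilon}\alpha$.

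There is essentially no technical obstacle; the only point requiring care is the \emph{order} in which the two halves of the $\varepsilon$‑AS inequality are invoked — the bound $(1-\varepsilon)f \le F$ is used at the algorithm's output $\pmb{x}$, while $F \le (1+\varepsilon)f$ is used at the $F$‑optimum $\pmb{x}^*_F$ — together with the (loss‑free) observation that replacing the $F$‑optimum by the $f$‑optimum can only increase the value of $f$. The three corollaries stated after the theorem then follow by plugging in $\alpha = 1-1/e$ for the greedy algorithm on a submodular $f$ under a size constraint, $\alpha = 1/2$ for $k$‑Greedy‑TS on a $k$‑submodular $f$, and $\alpha = 1/3$ for $k$‑Greedy‑IS; and the extension to an $\varepsilon$‑ADR function $F$ is immediate, since $\varepsilon$‑ADR implies $\varepsilon$‑AS by Theorem~\ref{lemma1}.
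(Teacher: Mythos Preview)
Your proposal is correct and follows essentially the same four-step chain as the paper's own proof: apply the lower $\varepsilon$-AS bound at the algorithm's output, the $\alpha$-approximation guarantee for $f$, the optimality of $\pmb{x}^*_f$ over the feasible $\pmb{x}^*_F$, and the upper $\varepsilon$-AS bound at $\pmb{x}^*_F$. Your additional remarks on feasibility and on the corollaries are accurate but go slightly beyond what the paper writes out.
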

\begin{proof}
Let $\pmb{o}_f$ and $\pmb{o}_F$ be the optimal solutions of $f$ and $F$, respectively, subject to constraint $\mathbb{X}$. 
Let $\pmb{x}_f$ be a solution of $f$ returned by an algorithm with an approximation ratio of $\alpha$. 
We have
{\small
\begin{align*}
\frac{1}{1 - \varepsilon} F(\pmb{x}_f) \ge f(\pmb{x}_f) \ge \alpha f(\pmb{o}_f) \ge \alpha f(\pmb{o}_F) \ge \frac{1}{1+\varepsilon} \alpha F(\pmb{o}_F), 
\end{align*}
}
where the first inequality is by applying the definition of $\varepsilon$-AS, 
the second inequality is by the definition of approximation ratios, 
the third inequality is by replacing $\pmb{o}_f$ with a less optimal solution $\pmb{o}_F$, 
and the last inequality is by the definition of $\varepsilon$-AS. 
\end{proof}

The above theorem provides a set of results for our settings. 

 \begin{corollary}
 \label{5.1again}
 Suppose $F$ is approximately submodular.  
 By applying the greedy algorithm on $f$ subject to a size constraint, we obtain a solution providing an approximation 
 ratio of $\frac{1-\varepsilon}{1+\varepsilon} (1-\frac{1}{e})$ to the size constrained maximization problem of $F$. 
 \end{corollary}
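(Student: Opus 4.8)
The plan is to obtain this as an immediate instantiation of Theorem~\ref{theo:ftoF}. First I would recall the classical guarantee of Nemhauser, Wolsey and Fisher~\cite{Nemhauser:1978aa}: the greedy algorithm, which at each step adds the element of largest marginal gain, achieves an approximation ratio of $\alpha = 1 - \tfrac{1}{e}$ for maximizing a monotone submodular function subject to a cardinality constraint. Since we are in the case $k=1$, the size constraint with parameter $B$ is exactly the cardinality constraint $|X| \le B$, and "monotone $k$-submodular" coincides with "monotone submodular"; hence the greedy algorithm applied to the bounding function $f$ is precisely the algorithm to which this $(1-\tfrac{1}{e})$ bound applies.

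Next I would unpack the hypothesis that $F$ is approximately submodular: by definition there exists a monotone submodular function $f$ with $(1-\varepsilon) f(\pmb{x}) \le F(\pmb{x}) \le (1+\varepsilon) f(\pmb{x})$ for all $\pmb{x}$; in particular $F$ is bounded by $f$ in exactly the two-sided sense required by Theorem~\ref{theo:ftoF}. Applying Theorem~\ref{theo:ftoF} with $\mathbb{X}$ the size constraint and $\alpha = 1 - \tfrac{1}{e}$ then shows that the greedy solution $\pmb{x}_f$ computed on $f$ satisfies $F(\pmb{x}_f) \ge \frac{1-\varepsilon}{1+\varepsilon}\bigl(1-\tfrac{1}{e}\bigr) F(\pmb{o}_F)$, where $\pmb{o}_F$ is an optimal solution of $F$ under the size constraint. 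This is exactly the claimed ratio $\frac{1-\varepsilon}{1+\varepsilon}\bigl(1-\tfrac{1}{e}\bigr)$.

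There is essentially no obstacle; the corollary is a direct specialization of Theorem~\ref{theo:ftoF}, whose proof already carries out the only non-trivial step (chaining the two sandwich inequalities with the $\alpha$-optimality of $\pmb{x}_f$ on $f$ and the suboptimality of $\pmb{o}_F$ for $f$). The only things worth stating explicitly are that (i) the size constraint for $k=1$ is a cardinality constraint, so the standard greedy of~\cite{Nemhauser:1978aa} attains $\alpha = 1-\tfrac1e$ on the monotone submodular $f$, and (ii) the $\varepsilon$-AS sandwich is precisely the boundedness hypothesis of Theorem~\ref{theo:ftoF}. Both are immediate, so the proof is short.
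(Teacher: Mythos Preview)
Your proposal is correct and follows essentially the same approach as the paper: the corollary is obtained as an immediate instantiation of Theorem~\ref{theo:ftoF} with $\alpha = 1 - \tfrac{1}{e}$ coming from the classical greedy guarantee of~\cite{Nemhauser:1978aa} for monotone submodular functions under a cardinality constraint. The paper's justification is even terser than yours, simply citing the known approximation ratio and leaving the specialization implicit.
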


The above result follows from the known approximation ratio of greedy algorithms for monotone submodular functions \cite{Cornuejols:1977aa,Nemhauser:1978aa}. 
Comparing to the $\varepsilon$-AS result of \cite{Horel:2016aa}, when $B\geq\frac{1}{4(e-1)\varepsilon}+\frac{1}{2}$ \footnote{$B\geq\frac{1}{4(e-1)\varepsilon}+\frac{1}{2}$ implies $B\geq\frac{1}{4(1-1/e)}\left(\frac{1}{\varepsilon}-\varepsilon-\left(1-\frac{1}{e}\right)\left(\frac{1}{\varepsilon}-2+\varepsilon\right)\right)$. Thus, $B\geq\frac{1-\varepsilon^2-(1-1/e)(1-\varepsilon)^2}{4(1-1/e)\varepsilon}$. Then we have $\frac{1}{1+\frac{4B\varepsilon}{(1-\varepsilon)^2}}\leq\frac{1-\varepsilon}{1+\varepsilon}(1-\frac{1}{e})$}, 
Corollary \ref{5.1again} yields a better approximation ratio. Specifically, we obtain the corollaries of Section~\ref{sec:paper:improvedgreedy} which we copy below for convenience. 

\begin{corollary}
Suppose $F$ is approximately $k$-submodular. 
By applying the $k$-Greedy-TS algorithm on $f$ subject to a total size constraint, we obtain a solution providing an approximation 
ratio of $\frac{1-\varepsilon}{2(1+\varepsilon)}$ to total size constrained maximization problem of $F$. 
\end{corollary}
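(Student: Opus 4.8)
The plan is to derive this corollary as an immediate consequence of Theorem~\ref{theo:ftoF} (equivalently Theorem~\ref{theo:ftoFpaper}) combined with the known performance guarantee of $k$-Greedy-TS on genuinely $k$-submodular functions. First I would recall that, since $F$ is $\varepsilon$-approximately $k$-submodular, there is by definition a monotone $k$-submodular function $f$ with $(1-\varepsilon)f(\pmb{x}) \le F(\pmb{x}) \le (1+\varepsilon)f(\pmb{x})$ for every $\pmb{x}\in(k+1)^V$; in particular $F$ is bounded by $f$ in the sense required by Theorem~\ref{theo:ftoF}, so that theorem applies with this $f$ and with the constraint $\mathbb{X}$ taken to be the total size constraint $|supp(\pmb{x})| \le B$.

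Next I would invoke the result of \cite{Ohsaka:2015aa}: for a monotone $k$-submodular function $f$, the $k$-Greedy-TS algorithm run on $f$ subject to the total size constraint $B$ returns a feasible solution $\pmb{x}_f$ with $f(\pmb{x}_f) \ge \frac{1}{2} f(\pmb{o}_f)$, where $\pmb{o}_f$ is an optimal solution of $f$ under the same constraint. Hence $k$-Greedy-TS is an algorithm achieving approximation ratio $\alpha = \frac{1}{2}$ for maximizing $f$ subject to $\mathbb{X}$.

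Finally I would apply Theorem~\ref{theo:ftoF} with $\alpha = \frac{1}{2}$: the same solution $\pmb{x}_f$ then satisfies $F(\pmb{x}_f) \ge \frac{1-\varepsilon}{1+\varepsilon}\cdot\frac{1}{2}\,F(\pmb{o}_F) = \frac{1-\varepsilon}{2(1+\varepsilon)} F(\pmb{o}_F)$, where $\pmb{o}_F$ is an optimal solution of $F$ under the total size constraint, which is exactly the claimed ratio.

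As for obstacles, there is essentially no hard step: the content lies entirely in Theorem~\ref{theo:ftoF} and in the $\frac{1}{2}$-guarantee of \cite{Ohsaka:2015aa}, both of which are already available. The only points requiring care are that Theorem~\ref{theo:ftoF} needs $F$ to be sandwiched by $f$ (which holds automatically, since ``$F$ is $\varepsilon$-approximately $k$-submodular'' is precisely the $\varepsilon$-AS condition with witness $f$) and that the $\frac{1}{2}$-bound of \cite{Ohsaka:2015aa} presupposes $f$ monotone $k$-submodular (also guaranteed by the $\varepsilon$-AS definition). Thus the proof is just the chaining of these two facts.
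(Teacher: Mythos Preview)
Your proposal is correct and matches the paper's own argument: the corollary is stated immediately after Theorem~\ref{theo:ftoF} and is justified by plugging in the $\alpha = \tfrac{1}{2}$ guarantee of \cite{Ohsaka:2015aa} for $k$-Greedy-TS on monotone $k$-submodular $f$. There is no additional content beyond this chaining, exactly as you describe.
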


\begin{corollary}
Suppose $F$ is approximately $k$-submodular. 
By applying the $k$-Greedy-IS algorithm on $f$ subject to an individual size constraint, we obtain a solution providing  an approximation 
ratio of $\frac{1-\varepsilon}{3(1+\varepsilon)}$ to individual size constrained maximization problem of $F$. 
\end{corollary}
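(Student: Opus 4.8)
The plan is to read this corollary off Theorem~\ref{theo:ftoF} by plugging in the classical greedy guarantee for genuinely $k$-submodular maximization. Let $f$ be the monotone $k$-submodular function witnessing that $F$ is $\varepsilon$-approximately $k$-submodular (and bounding $F$, as required by Theorem~\ref{theo:ftoF}). First I would invoke the result of Ohsaka and Yoshida~\cite{Ohsaka:2015aa} stating that $k$-Greedy-IS returns a feasible solution that is a $\tfrac13$-approximation for maximizing a monotone $k$-submodular function subject to the individual size constraints $|supp_i(\pmb{x})| \le B_i$, $i \in [k]$. Applied to $f$, this yields a solution $\pmb{x}_f$ with $f(\pmb{x}_f) \ge \tfrac13 f(\pmb{o}_f)$, where $\pmb{o}_f$ maximizes $f$ over the same feasible region.

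Next I would instantiate Theorem~\ref{theo:ftoF} with $\alpha = \tfrac13$ and with $\mathbb{X}$ taken to be the individual size constraint family $\{\pmb{x} \in (k+1)^V : |supp_i(\pmb{x})| \le B_i \ \forall i \in [k]\}$; the theorem is stated for an arbitrary constraint family, so this is legitimate. It then gives directly that the same solution $\pmb{x}_f$ satisfies $F(\pmb{x}_f) \ge \tfrac{1-\varepsilon}{1+\varepsilon}\,\alpha\,F(\pmb{o}_F)$ for an optimal solution $\pmb{o}_F$ of $F$ under the individual size constraints. Substituting $\alpha = \tfrac13$ gives the stated ratio $\tfrac{1-\varepsilon}{3(1+\varepsilon)}$.

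There is essentially no obstacle: the corollary is a direct specialization of Theorem~\ref{theo:ftoF}. The only things to check are bookkeeping --- that $f$ is monotone and $k$-submodular (part of the standing hypothesis on $F$), that the $\tfrac13$-guarantee of~\cite{Ohsaka:2015aa} is stated exactly for $k$-Greedy-IS under individual size constraints, and that the individual size constraint matches the abstract $\mathbb{X}$ of Theorem~\ref{theo:ftoF}. Once those are in place, the conclusion follows from the single multiplication $\tfrac{1-\varepsilon}{1+\varepsilon}\cdot\tfrac13$, with no further estimation required.
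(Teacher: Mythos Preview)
Your proposal is correct and follows exactly the paper's approach: the paper derives this corollary by combining Theorem~\ref{theo:ftoF} with the $\tfrac13$-approximation of Ohsaka and Yoshida~\cite{Ohsaka:2015aa} for $k$-Greedy-IS under individual size constraints, just as you do.
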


The above corollaries follow from the results of \cite{Ohsaka:2015aa} where one can obtain approximation ratios 
of $\frac{1}{2}$ and $\frac{1}{3}$ for total size and individual size constraints, respectively, for maximizing monotone $k$-submodular functions. 
It turns out that, for any value of $\varepsilon$, we can derive better theoretical guarantees 
using the the greedy solutions from $f$ according to the above corollaries for $\varepsilon$-AS $F$. 
For $F$ that is $\varepsilon$-ADR, since it is also $\varepsilon$-AS $F$, the above results apply immediately. 
However, the above results provide a weaker guarantee than applying greedy algorithms on $F$ directly. 

\section{Additional Experiment Details} \label{sec:addexp}
\subsection{Non-submodularity}\label{sec:appendixnon}

To see that the F constructed in the AG setting is not k-submodular, consider $k=1$, $\xi(u)=1-\varepsilon$, $\xi(v)=1$, $\xi(\{u, v\})=1$, as well as $f(\{u, v\})-f(\{v\})=f(\{u\})$. It follows that $F(\{u\})-F(\mathbf{0})<F(\{u,v\}-F(\{v\}) \Leftrightarrow (1-\varepsilon)f(\{u\})=(1-\varepsilon)(f(\{u,v\}-f(\{v\}))<f(\{u,v\}-f(\{v\})=F(\{u,v\}-F(\{v\})$ for $\varepsilon>0$.

To see that $F(\pmb x)=\max_{x\in {supp(\pmb x)}}\xi(x)\cdot f(\pmb x)$ constructed in the MaxG setting is not $k$-submodular, 
consider $k=1$ and two elements $\{u,v\}$, 
$\xi(u)=1-\varepsilon$, $\xi(v)=1$, and  $f(\{u,v\})-f(\{v\})=f(\{u\})$ for some $u, v$. 
It follows that $F(\{u\})-F(\mathbf{0})<F(\{u,v\})-F(\{v\}) \Leftrightarrow (1-\varepsilon)f(\{u\}) < f(\{u,v\})-f(\{v\})$ for $\varepsilon>0$. 

To see that $F(\pmb x)=\xi(\pmb x)\cdot f(\pmb x)$ with $\xi({\pmb x})=\frac{\sum_{x\in supp(\pmb  x)}\xi(x)}{|supp(\pmb x)|}$, which is constructed in the MeanG setting, is not $k$-submodular, consider $k=1$, $\xi(u)=1-\varepsilon$, and $\xi(v)=1$, as well as 
$f(\{u, v\})=\frac{3}{2}f(\{u\})$ and $f(\{u, v\})-f(\{v\})=f(\{u\})$ for some $u, v$. It follows that $F(\{u\})-F(\mathbf{0})<F(\{u, v\}-F(\{v\}) \Leftrightarrow (1-\varepsilon)f(\{u\})<(1-\frac{3}{4}\varepsilon)f(\{u\})=f(\{u\})-\frac{3\varepsilon}{4}f(\{u\})=f(\{u,v\})-f(\{v\})-\frac{\varepsilon}{2}f(\{u, v\})=\frac{2-\varepsilon}{2}f(\{u,v\})-f(\{v\})=F(\{u,v\})-F(\{v\})$ for $\varepsilon>0$.


\subsection{Influence Maximization with Approximately $k$-Submodular Functions and the TS Constraint 
}\label{appendix:E1}
In the main paper, we considered Influence Maximization with the impact of $k$. Here, we present results in Figs.~\ref{fig:imAGAll}, ~\ref{fig:imMeanAll} and ~\ref{fig:imMaxAll} for the same problem with the impact of $B$ and $\varepsilon$. The trends are similar to those for parameter $k$ reported in the paper; $Gr$-$H$ outperformed $Gr$-$\tilde{H}$ in the AG setting, and the opposite happened in the MeanG and MaxG settings. 

\begin{figure}[!ht]
    \begin{subfigure}{0.22\textwidth}
        \centering
        \captionsetup{justification=centering}
        \includegraphics[width = \linewidth]{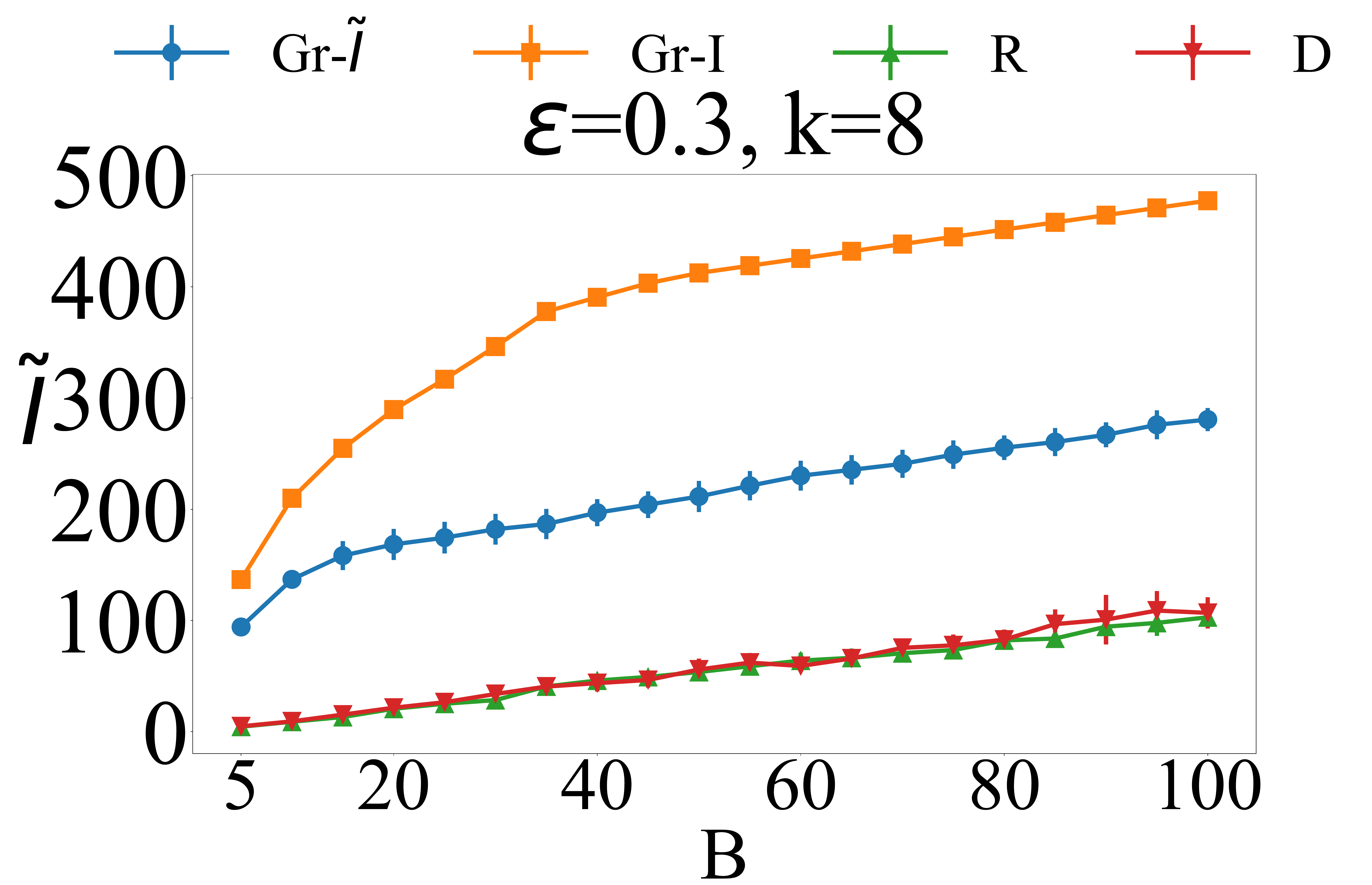}
        \caption{}\label{fig:imAG1}
    \end{subfigure}%
    \begin{subfigure}{0.22\textwidth}
        \centering
        \captionsetup{justification=centering}
        \includegraphics[width = \linewidth]{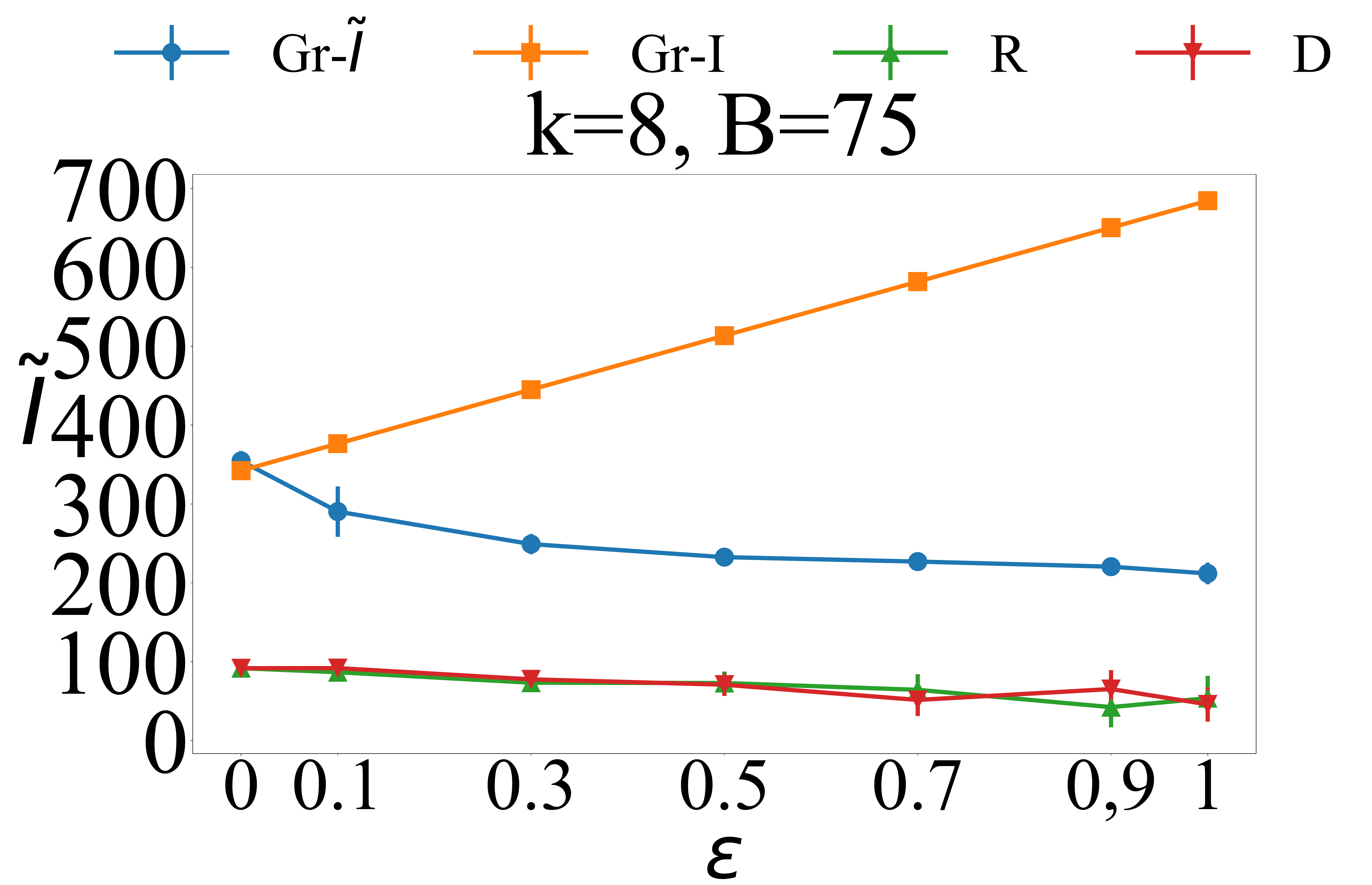}
        \caption{}\label{fig:imAG2}
    \end{subfigure}%
    \vspace{-2mm}
    \caption{$\tilde{I}$ in AG setting vs: (a) $B$, (b) $\varepsilon$.}\label{fig:imAGAll}
\end{figure}
 
\begin{figure}[!ht]
    \begin{subfigure}{0.22\textwidth}
        \centering
        \captionsetup{justification=centering}
        \includegraphics[width = \linewidth]{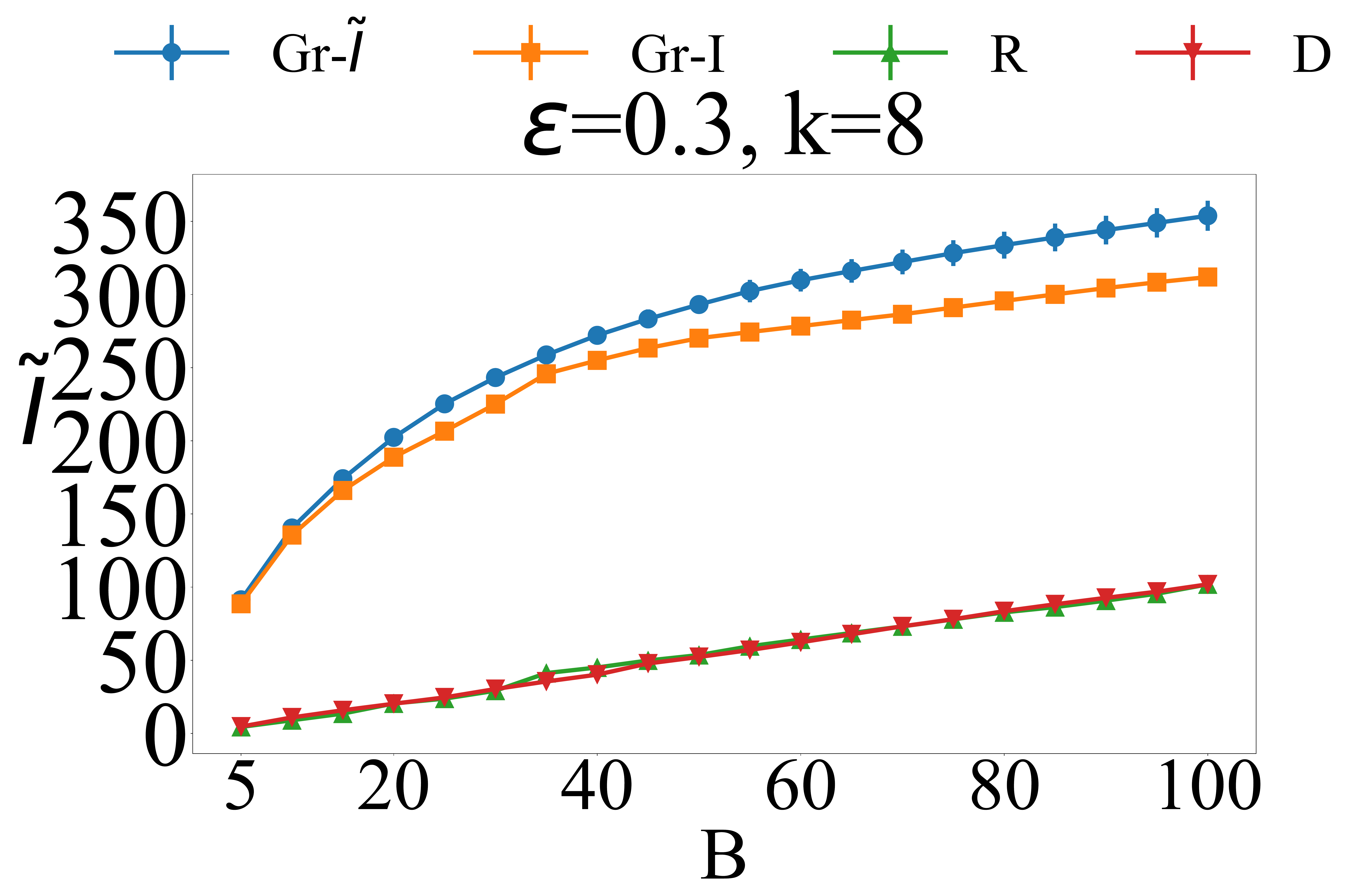}
        \caption{}\label{fig:imMean1}
    \end{subfigure}%
    \begin{subfigure}{0.22\textwidth}
        \centering
        \captionsetup{justification=centering}
        \includegraphics[width = \linewidth]{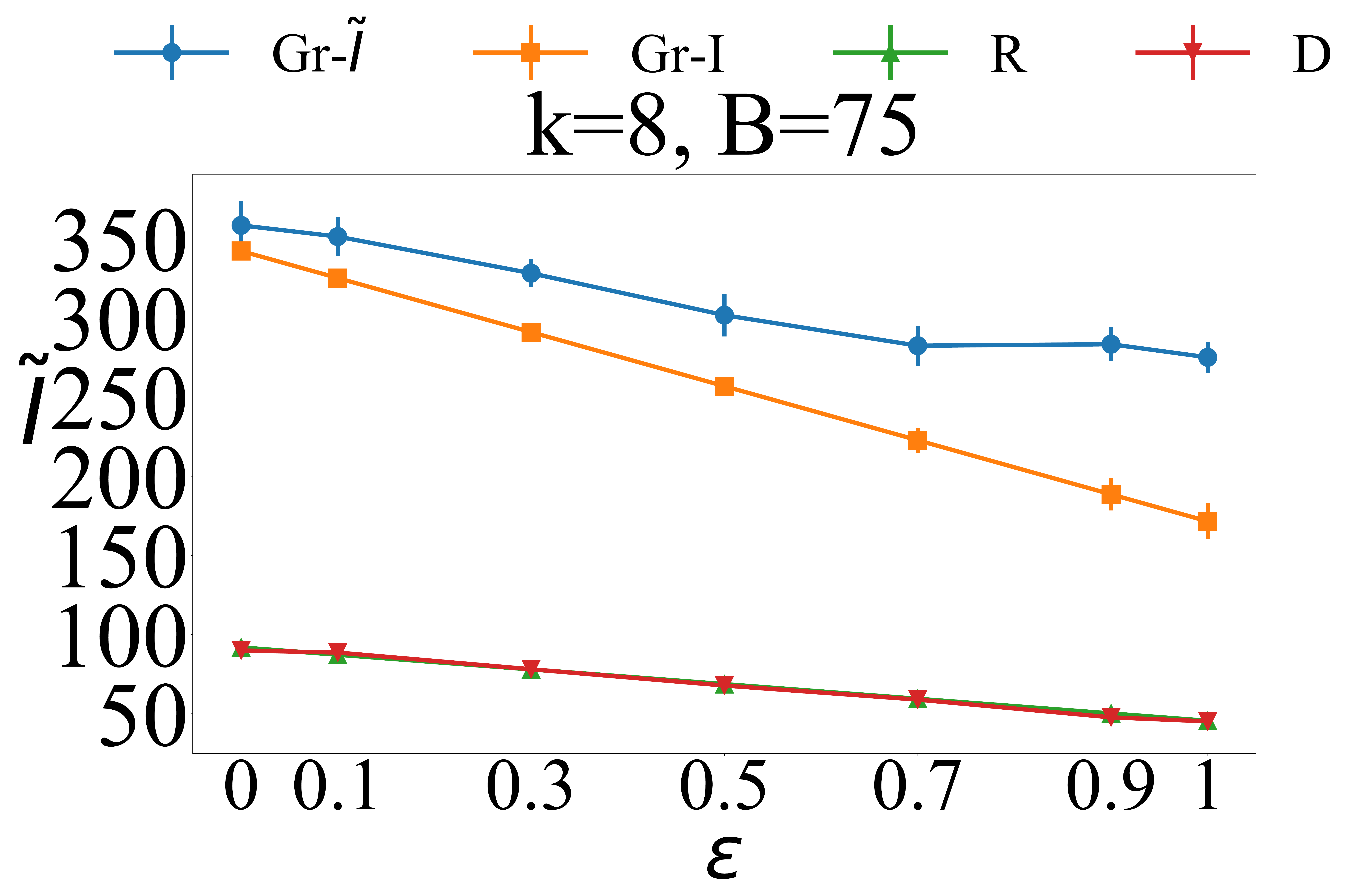}
        \caption{}\label{fig:imMean2}
    \end{subfigure}%
    \vspace{-2mm}
    \caption{$\tilde{I}$ in Mean setting vs: (a) $B$, (b) $\varepsilon$.}\label{fig:imMeanAll}
\end{figure}

\begin{figure}[!ht]
    \begin{subfigure}{0.22\textwidth}
        \centering
        \captionsetup{justification=centering}
        \includegraphics[width = \linewidth]{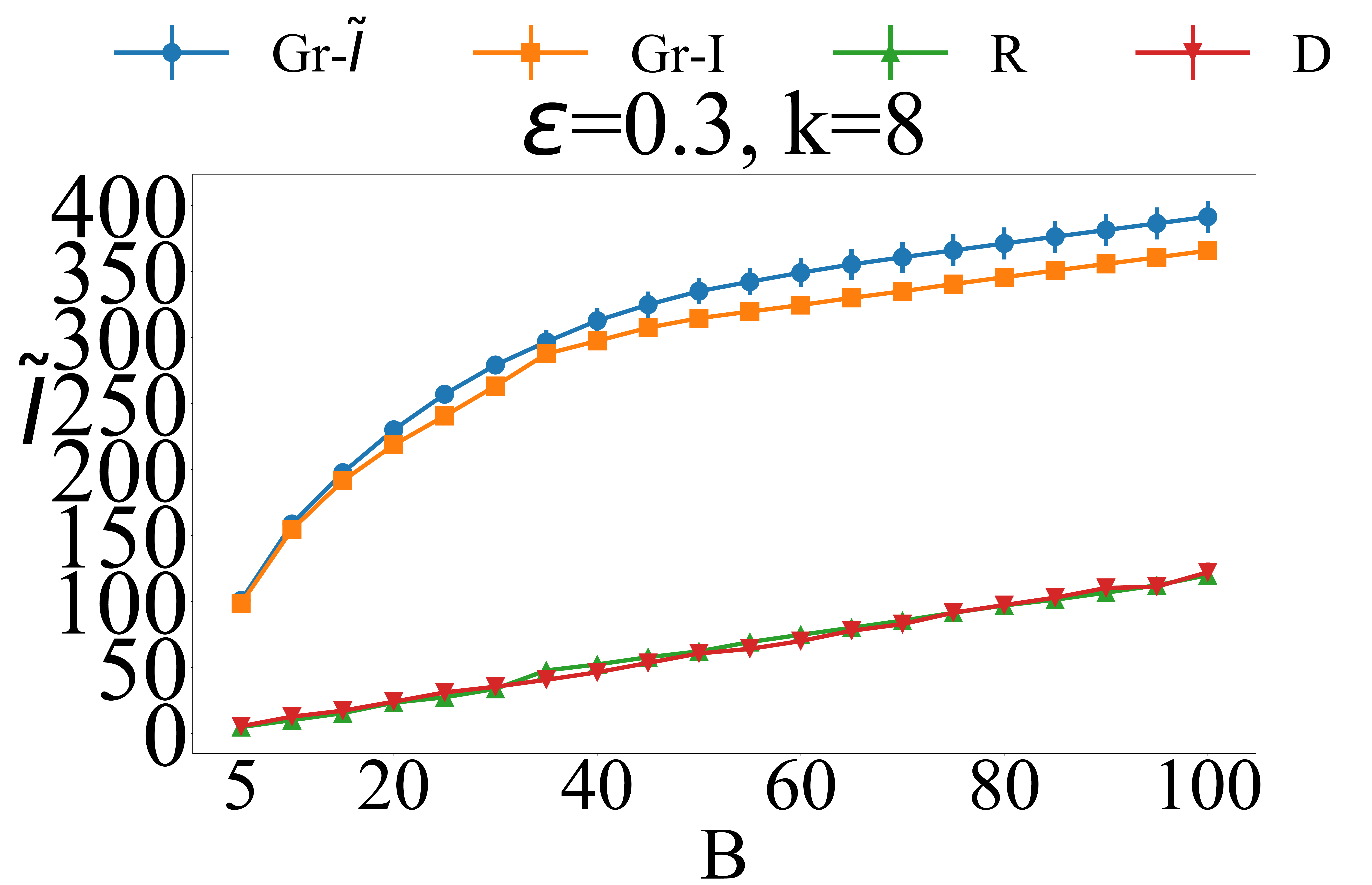}
        \caption{}\label{fig:imMax1}
    \end{subfigure}%
    \begin{subfigure}{0.22\textwidth}
        \centering
        \captionsetup{justification=centering}
        \includegraphics[width = \linewidth]{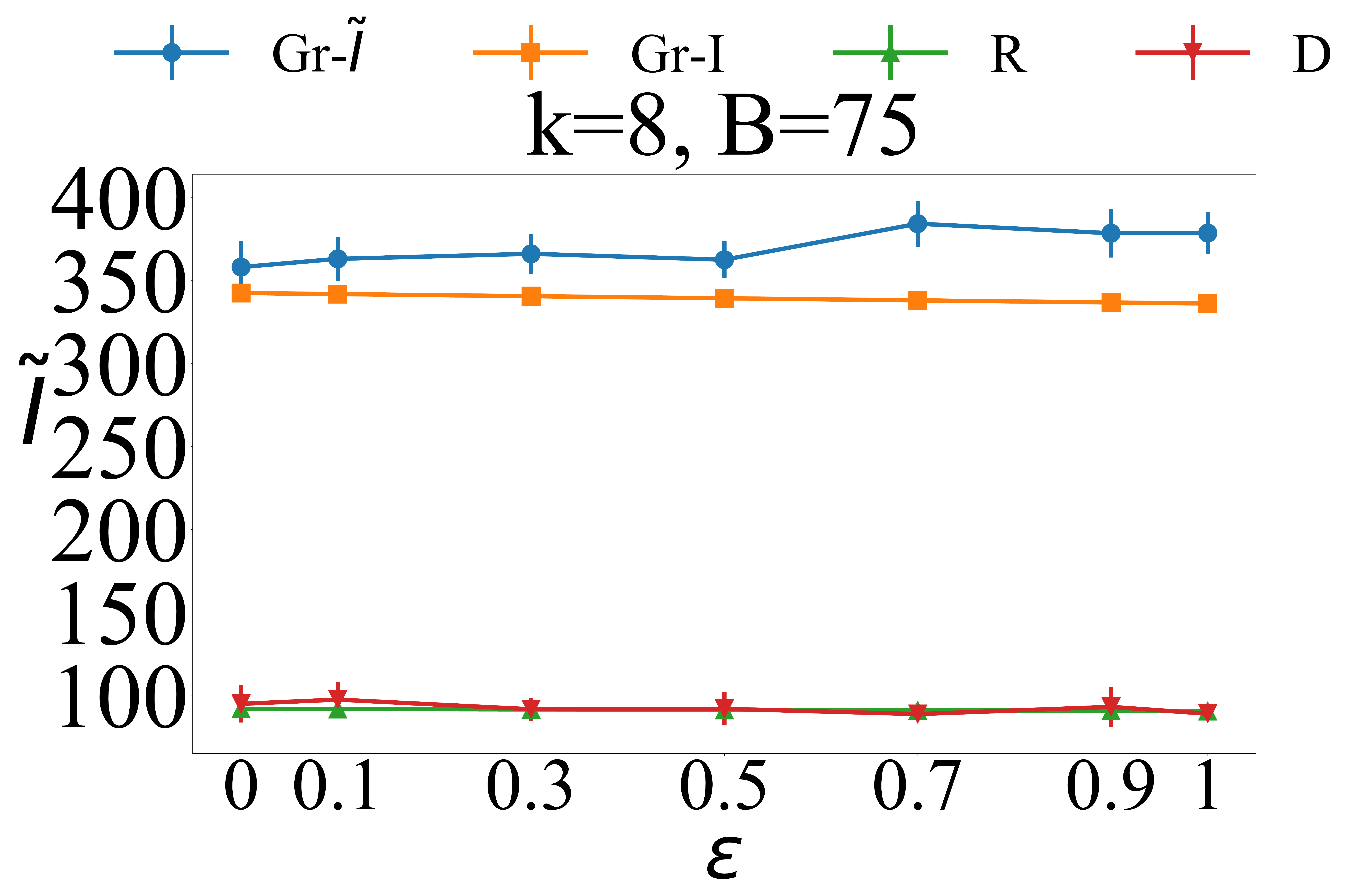}
        \caption{}\label{fig:imMax2}
    \end{subfigure}%
    \caption{$\tilde{I}$ in MaxG setting vs: (a) $B$, (b) $\varepsilon$.}\label{fig:imMaxAll}
     \vspace{+0.5cm}
\end{figure}


\vspace{+3mm}
\subsection{Sensor Placement with Approximately $k$-Submodular Functions and the TS Constraint}\label{sec:appendix:sp-TS-epsAS}

In the main paper, we considered Sensor Placement with IS constraints. Here, we present results for the same problem with TS constraint in Figs.~\ref{fig:spAGTSall}, ~\ref{fig:spMeanTSall} and ~\ref{fig:spMaxTSall}. As can be seen, the results are qualitatively similar to those for the problem with IS constraints. That is, $Gr$-$H$ outperformed $Gr$-$\tilde{H}$ in the AG setting, and the opposite happened in the MeanG and MaX settings.




\vspace{+3mm}
  \begin{figure}[!ht]
   
    \begin{subfigure}{0.17\textwidth}
        \centering
        \captionsetup{justification=centering}
        \includegraphics[width = \linewidth]{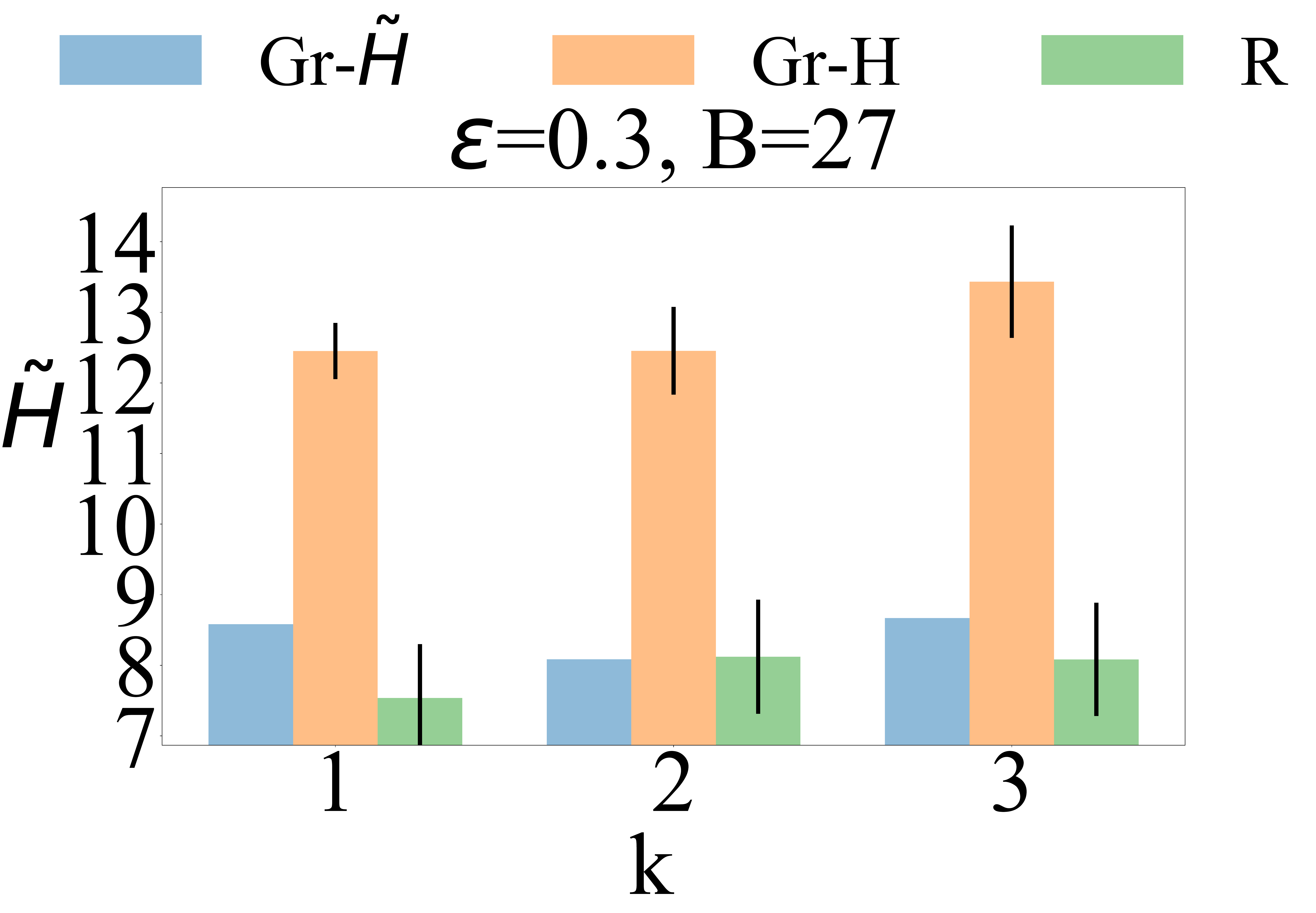}
        \caption{}\label{fig:spNew1_TS_k}
    \end{subfigure}%
    \begin{subfigure}{0.17\textwidth}
        \centering
        \captionsetup{justification=centering}
        \includegraphics[width = \linewidth]{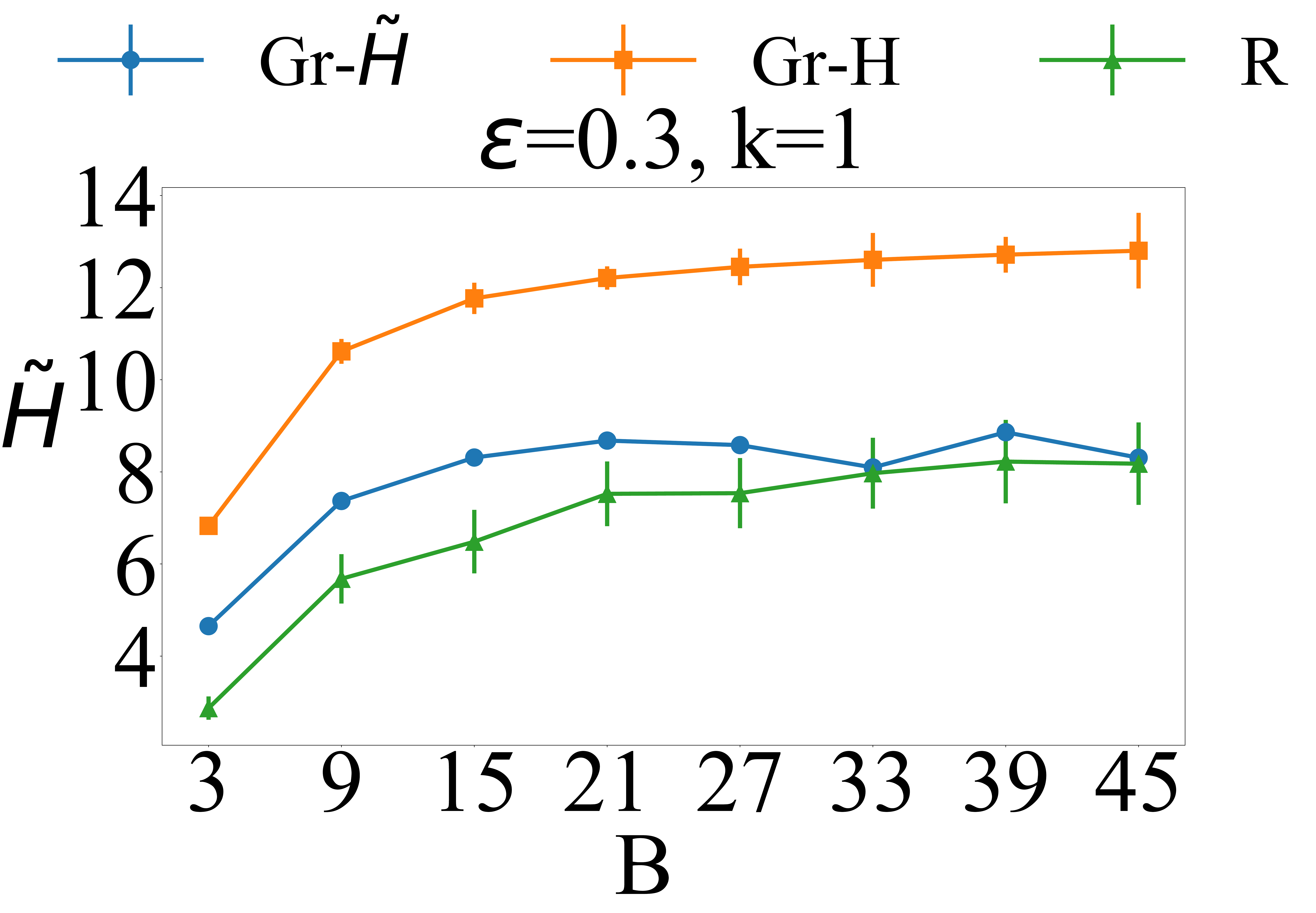}
        \caption{}\label{fig:spAG1_TS_B}
    \end{subfigure}%
    \begin{subfigure}{0.17\textwidth}
        \centering
        \captionsetup{justification=centering}
        \includegraphics[width = \linewidth]{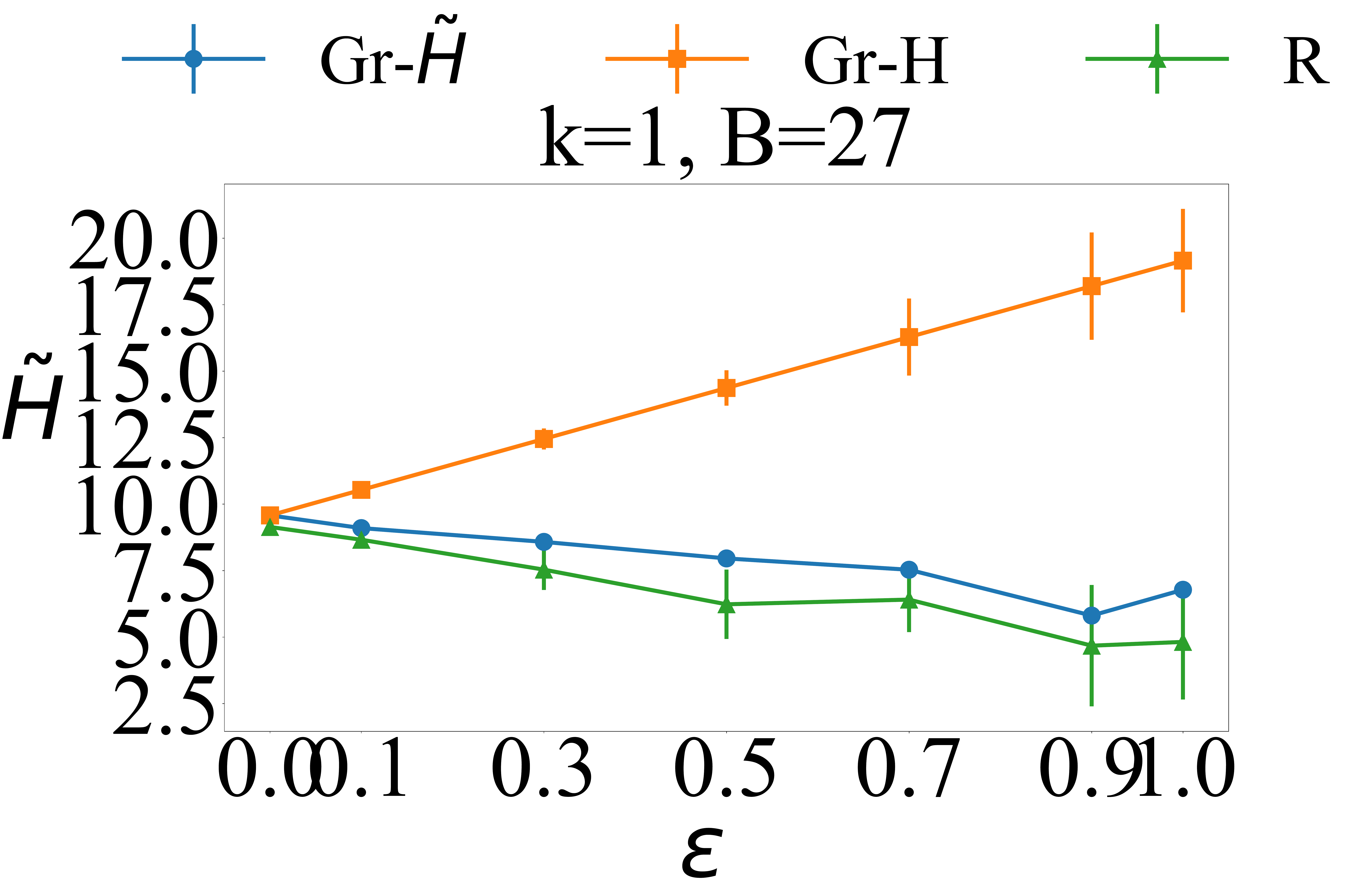}
        \caption{}\label{fig:spAG1_TS_B}
    \end{subfigure}%
    
\caption{$\tilde{H}$ in AG setting vs: (a) $k$, (b)  $B$, (c) $\varepsilon$.}\label{fig:spAGTSall}
  \end{figure}
  
  \begin{figure}[!ht]
       \begin{subfigure}{0.17\textwidth}
        \centering
        \captionsetup{justification=centering}
        \includegraphics[width = \linewidth]{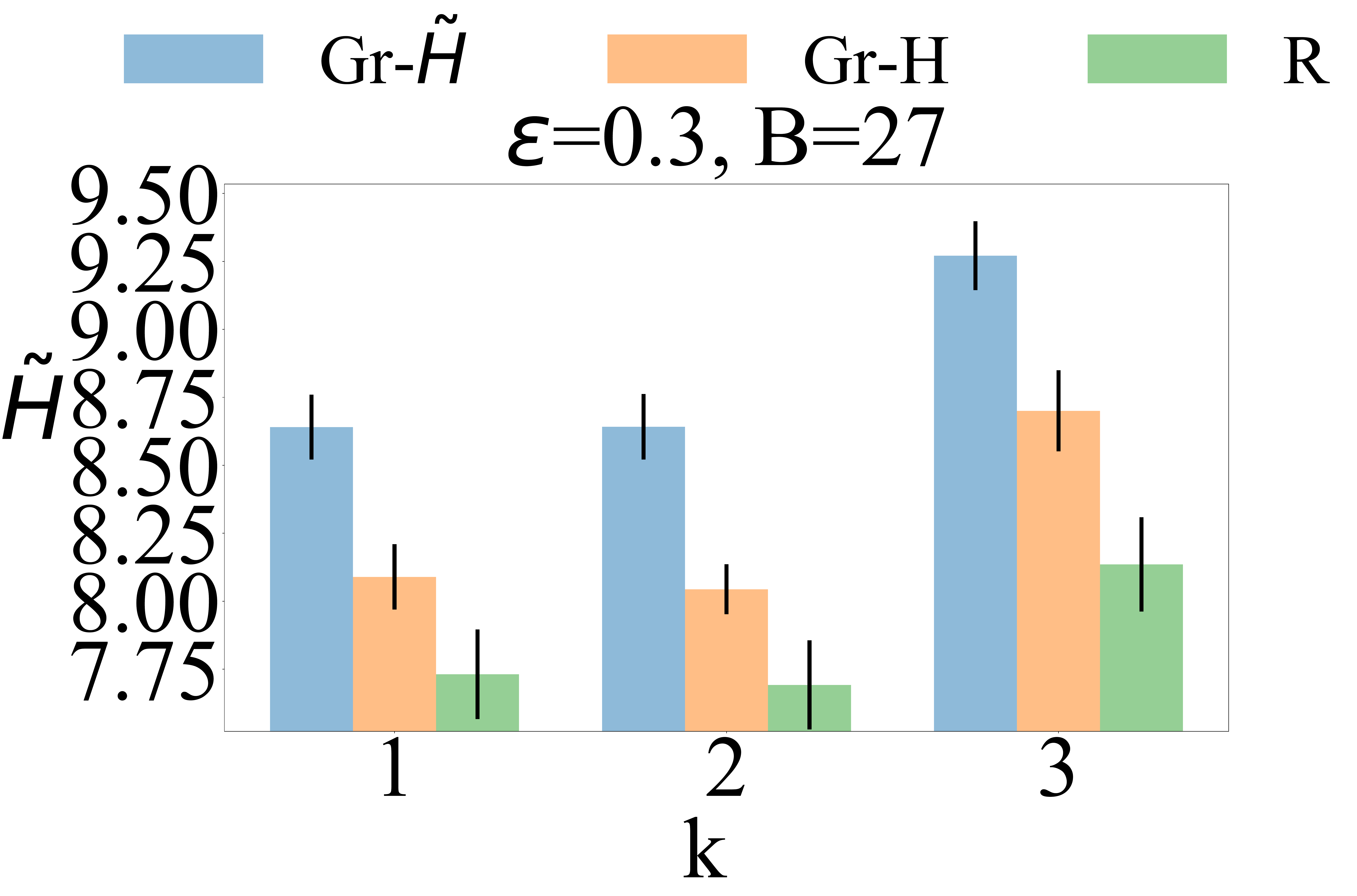}
        \caption{}\label{fig:spMean1_TS_k}
    \end{subfigure}%
    \begin{subfigure}{0.17\textwidth}
        \centering
        \captionsetup{justification=centering}
        \includegraphics[width = \linewidth]{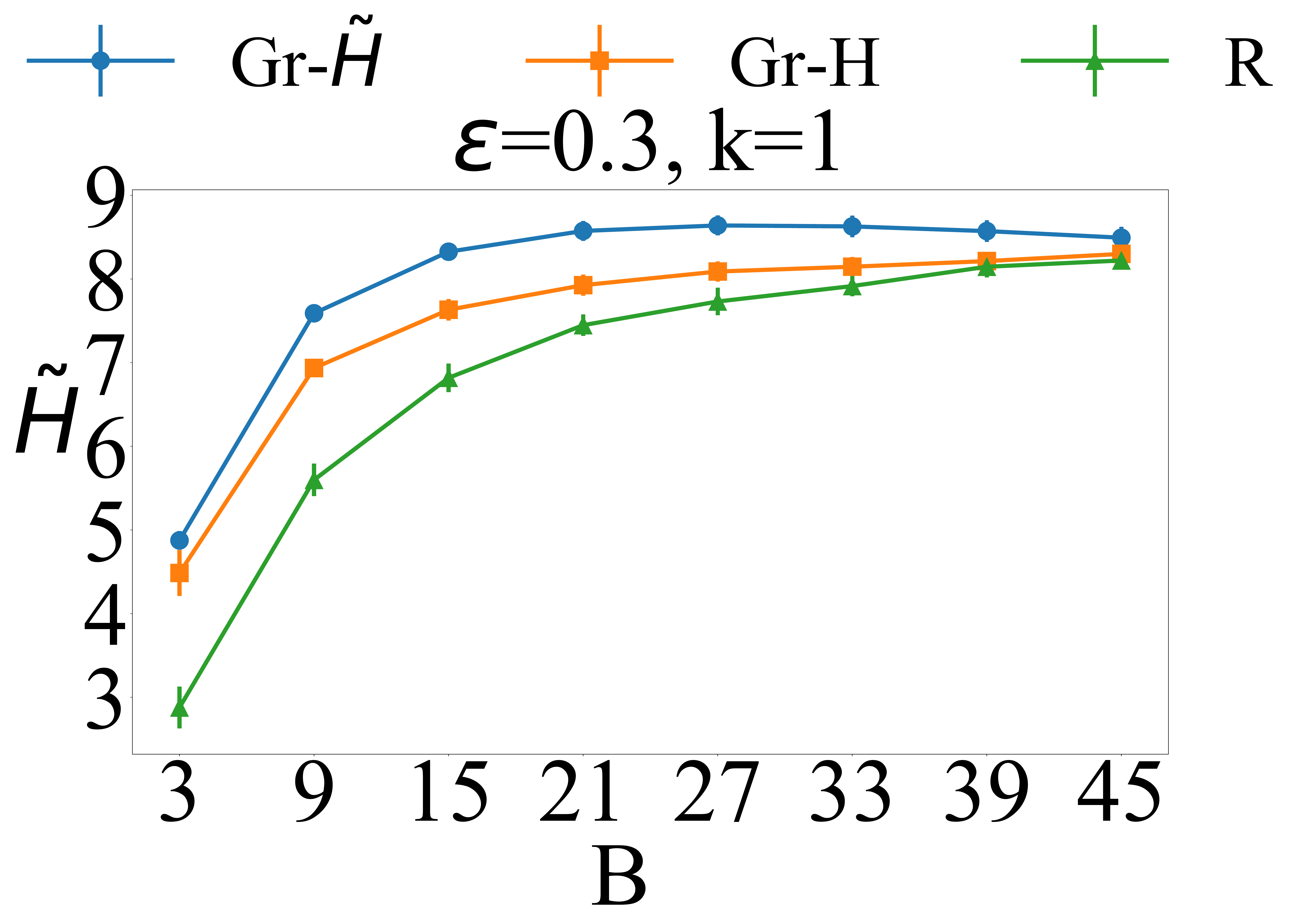}
        \caption{}\label{fig:spMean1_TS_B}
    \end{subfigure}%
    \begin{subfigure}{0.17\textwidth}
        \centering
        \captionsetup{justification=centering}
        \includegraphics[width = \linewidth]{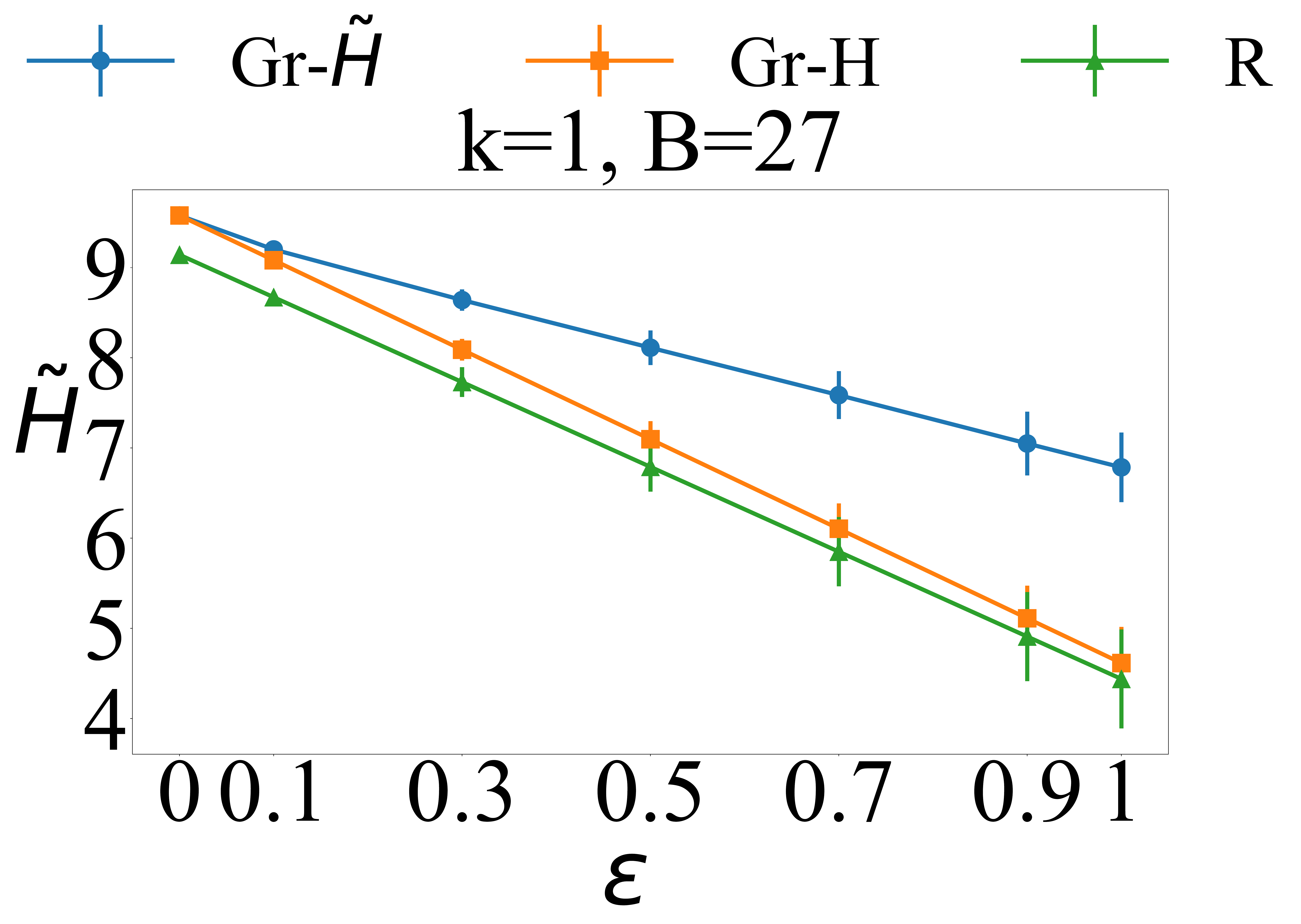}
        \caption{}\label{fig:spMean1_TS_B}
    \end{subfigure}%
    
\caption{$\tilde{H}$ in MeanG setting vs: (a) $k$, (b) $B$, (c) $\varepsilon$.}\label{fig:spMeanTSall}
  \end{figure}
  
  \begin{figure}[!ht]
    \begin{subfigure}{0.17\textwidth}
        \centering
        \captionsetup{justification=centering}
        \includegraphics[width = \linewidth]{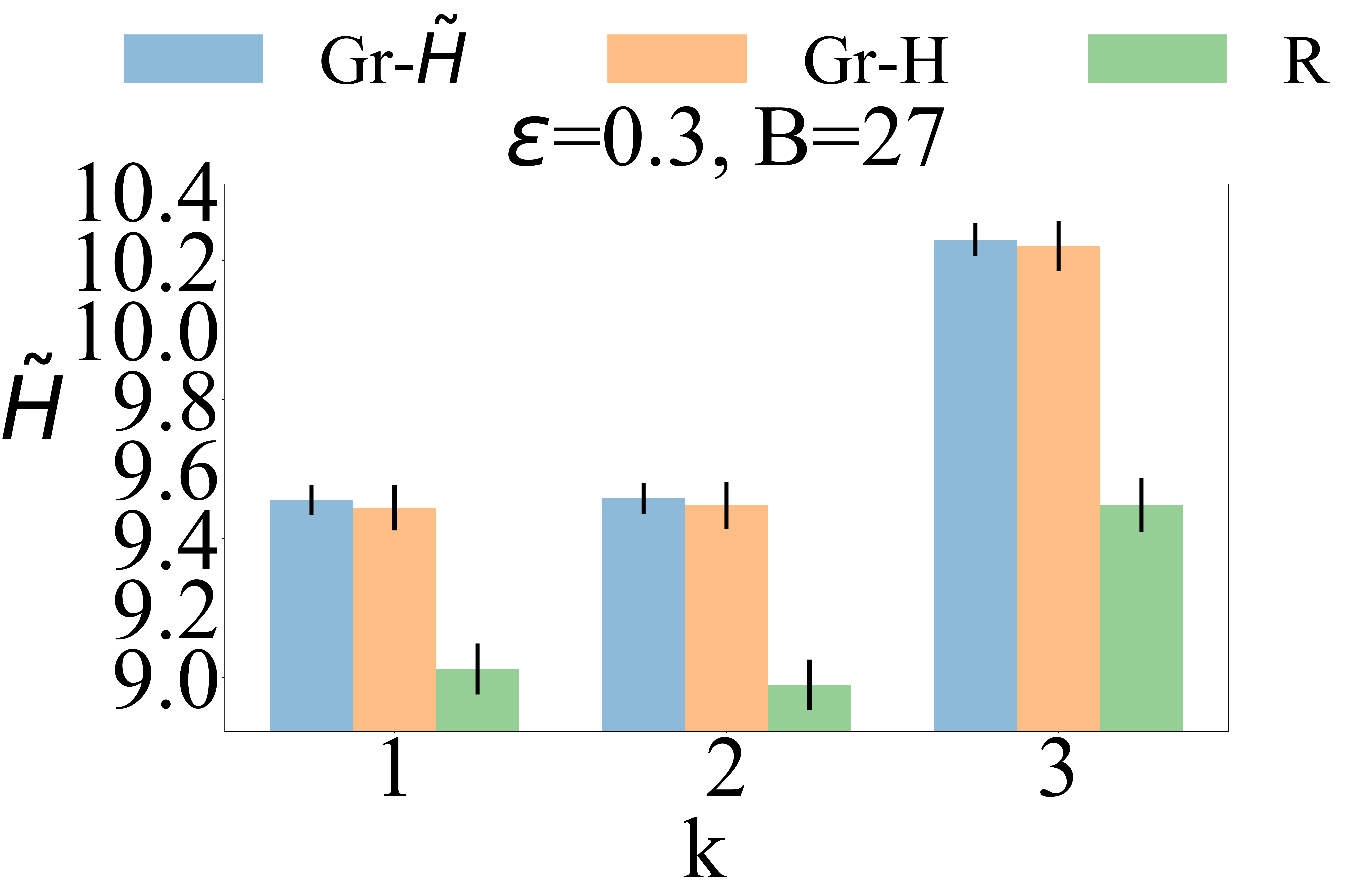}
        \caption{}\label{fig:spMax1_TS_k}
    \end{subfigure}%
    \begin{subfigure}{0.17\textwidth}
        \centering
        \captionsetup{justification=centering}
        \includegraphics[width = \linewidth]{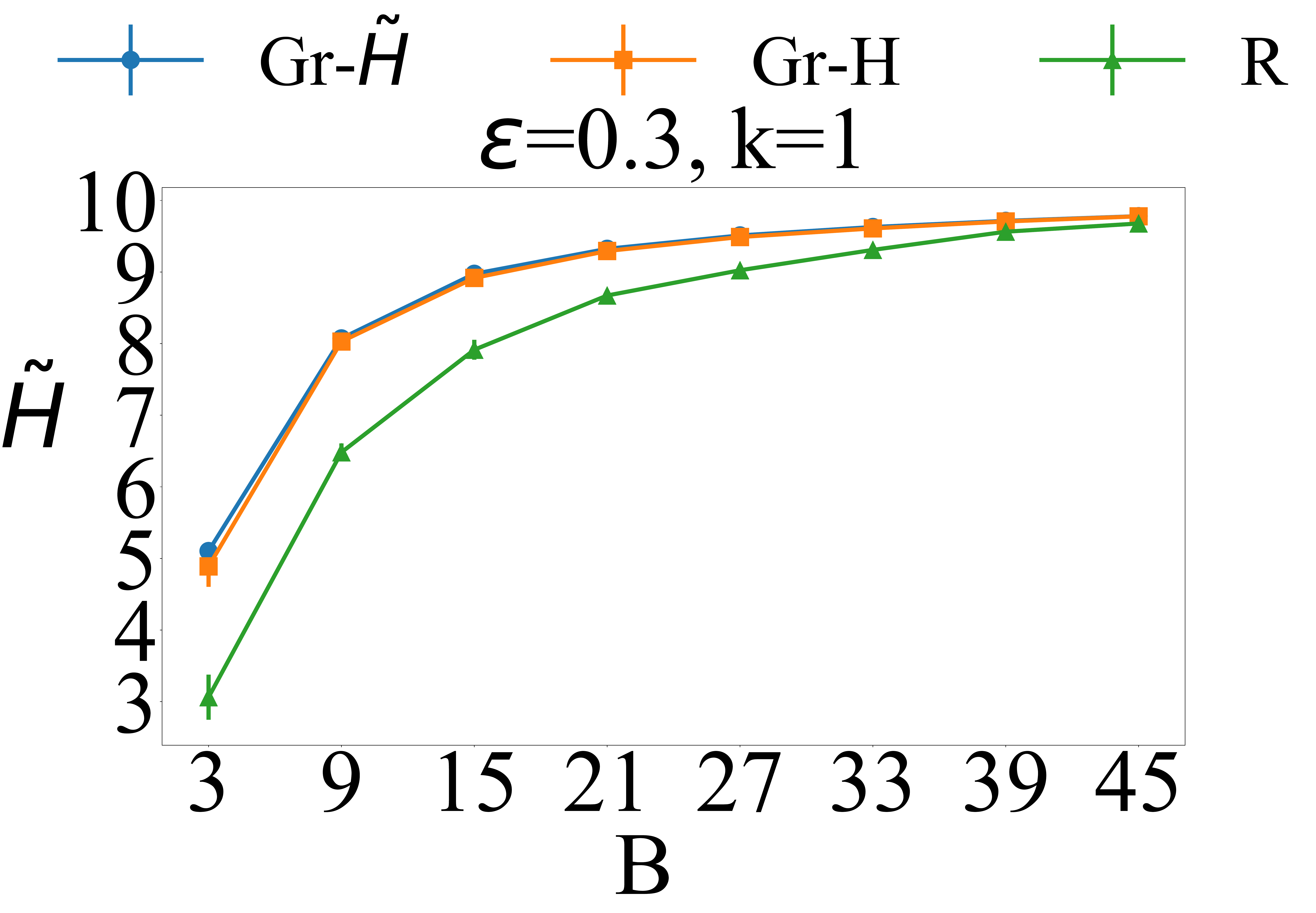}
        \caption{}\label{fig:spMax1_TS_B}
    \end{subfigure}%
    \begin{subfigure}{0.17\textwidth}
        \centering
        \captionsetup{justification=centering}
        \includegraphics[width = \linewidth]{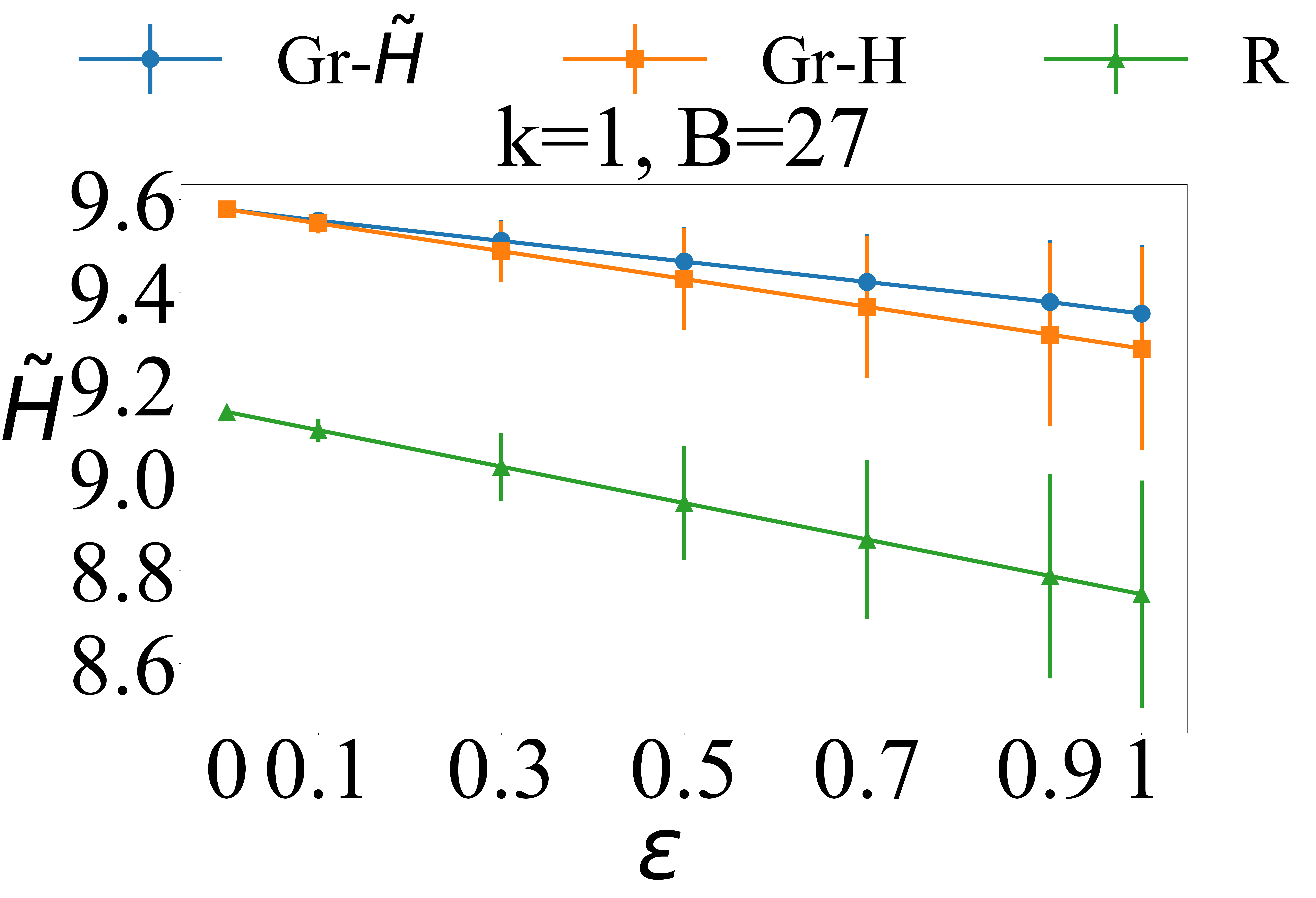}
        \caption{}\label{fig:spMax1_TS_B}
    \end{subfigure}%
    
\caption{$\tilde{H}$ in MaxG setting vs: (a) $k$, (b) $B$, (c) $\varepsilon$.}\label{fig:spMaxTSall}
\vspace{+3mm}
  \end{figure}

\newpage
\subsection{Sensor Placement and Influence Maximization with Approximately Diminishing Returns $k$-Submodular Functions}

We constructed $\varepsilon$-ADR functions for each setting, as follows. 

In the AG setting, we ran $k$-Greedy-TS on $f$ and set $\Delta_{e,i}F(\pmb{x_f})=(1+\varepsilon)\Delta_{e,i}f(\pmb{x_f})$ for its solution $\pmb{x_f}$. 
For any other $\pmb{x}\neq \pmb{x_f}$, we selected  $\xi(\pmb{x})$ uniformly at random in $[1-\varepsilon, 1]$ and set $\Delta_{e,i}F(\pmb{x_f})=\xi(\pmb{x})\cdot\Delta_{e,i} f(\pmb{x_f})$, for each $(e,i)$. Then, we summed  up   $\max_{(e,i)}\Delta_{e,i}F(\pmb{x})$ in each iteration to obtain $F(\pmb{x})$. 



In the MaxG setting, we set $\Delta_{e, i}F(\pmb{x})=\xi(\pmb{x},e)\cdot \Delta_{e,i}f(\pmb{x})$, where $\xi(\pmb{x},e)=\max(\xi(e), \max_{x\in supp(\pmb{x})}\xi({x}))$ and $\xi(x)\in [1-\varepsilon, 1]$. We then summed up $\max_{(e,i)}\Delta_{e, i}F(\pmb{x})$ in each iteration to obtain $F(\pmb{x})$. 
Similarly, in the MeanG setting, we used  $\xi(\pmb{x}, e)=\frac{\xi(e)+\sum_{x\in supp(\pmb{x})}\xi(x)}{|supp(\pmb{x})|+1}$, where $\xi(x)\in[1-\varepsilon, 1]$. 

\subsubsection{Non Submodularity}
To see that the function $F$ constructed in the AG setting is not k-submodular, consider $k=1$, $\xi(\{v\}, u)=1$, $\xi(\mathbf{0}, u)=1-\varepsilon$, as well as $f(\{u, v\})-f(\{v\})=f(\{u\})$ for some $u, v$. It follows that $F(\{u\})-F(\mathbf{0})<F(\{u,v\}-F(\{v\}) \Leftrightarrow \Delta_{u}F(\mathbf{0})=(1-\varepsilon)\Delta_{u}f(\mathbf{0})<f(\{u,v\}-f(\{v\})=\Delta_{u}F(\{v\})=F(\{u,v\}-F(\{v\})$ for $\varepsilon>0$.

To see that  
the $F$ function constructed in the MaxG setting is not $k$-submodular, 
consider $k=1$ and two elements $\{u,v\}$, 
$\xi(u)=1-\varepsilon$, $\xi(v)=1$, and  $f(\{u,v\})-f(\{v\})=f(\{u\})$ for some $u, v$. 
It follows that $F(\{u\})-F(\mathbf{0})<F(\{u,v\})-F(\{v\}) \Leftrightarrow \Delta_{u}F(\mathbf{0})=(1-\varepsilon)\Delta_{u}f(\mathbf{0}) < f(\{u, v\})-f(\{v\})=\Delta_{u}f(\{v\})=F(\{u,v\})-F(\{v\})$ for $\varepsilon>0$. 

To see that 
the $F$ function constructed in the MeanG setting is not $k$-submodular, consider $k=1$, $\xi(u)=1-\varepsilon$, and $\xi(v)=1$, as well as 
$\Delta_{u}f(\{v\})=\Delta_{u}f(\mathbf{0})$ for some $u, v$. It follows that $F(\{u\})-F(\mathbf{0})<F(\{u, v\}-F(\{v\}) \Leftrightarrow \Delta_{u}F(\mathbf{0})=(1-\varepsilon)\Delta_{u}f(\mathbf{0})<\frac{2-\varepsilon}{2}\Delta_{u}f(\mathbf{0})=\frac{2-\varepsilon}{2}\Delta_{u}f(\{v\})=\Delta_{u}F(\{v\})=F(\{u,v\})-F(\{v\})$ for $\varepsilon>0$.

We first considered sensor placement with the TS constraint, but with an $\varepsilon$-ADR instead of an $\varepsilon$-AS function. $k$-Greedy-TS using $H$ as $f$ is denoted with 
$Gr$-$H$-$ADR$, and $k$-Greedy-TS using $\tilde{H}$ as $F$ is denoted with $Gr$-$\tilde{H}$-$ADR$. The results in Figs.~\ref{fig:spNew_adr_TS}, \ref{fig:spMean_adr_TS}, and~\ref{fig:spMax_adr_TS} are analogous to those for the $\varepsilon$-AS function in Section~\ref{sec:appendix:sp-TS-epsAS} and confirm our analysis in Section~\ref{sec:paper:improvedgreedy}. 
 
 \begin{figure}[!ht]
    \begin{subfigure}{0.17\textwidth}
        \centering
        \captionsetup{justification=centering}
        \includegraphics[width = \linewidth]{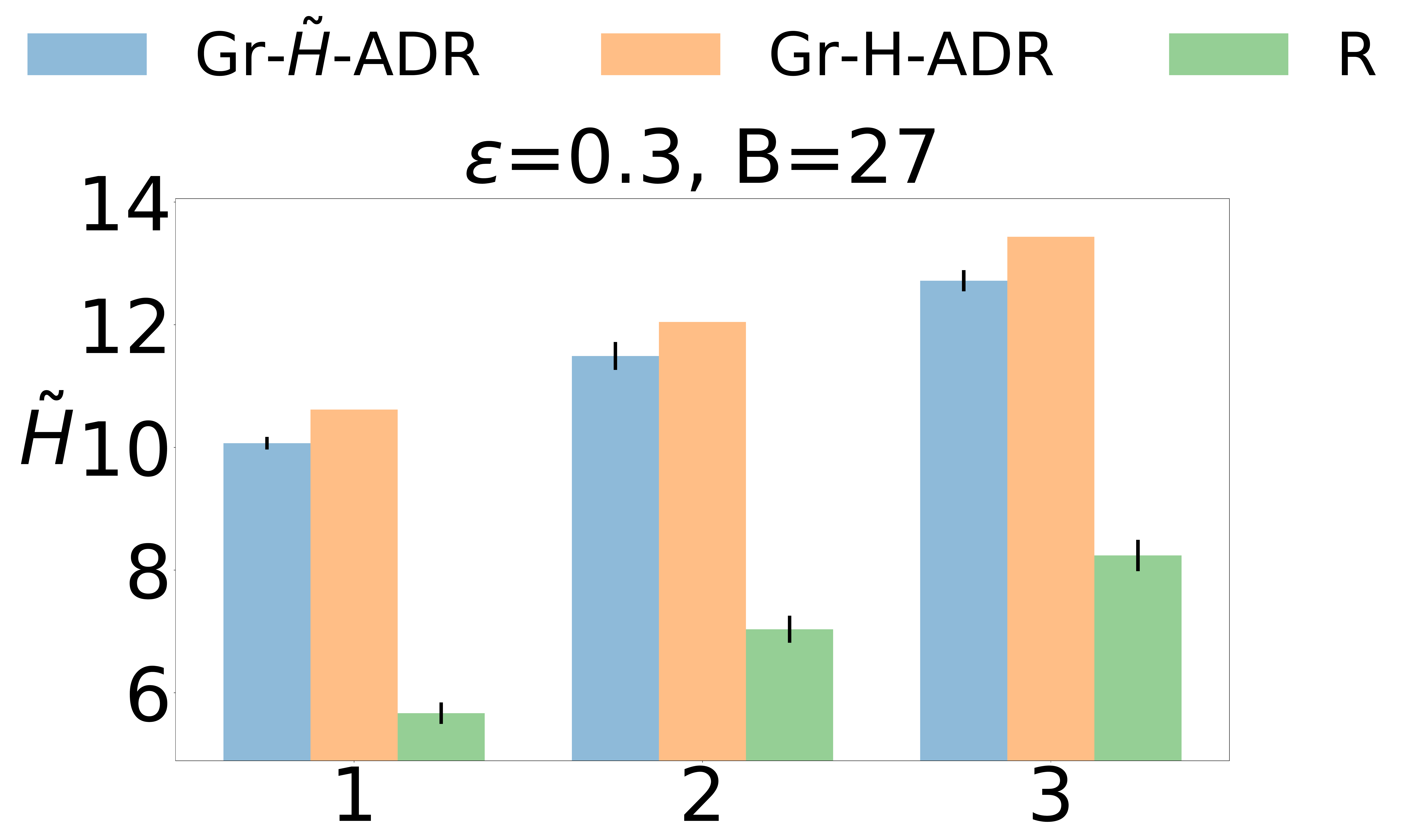}
        \caption{}\label{fig:spNew_adr1_TS_k}
    \end{subfigure}%
    \begin{subfigure}{0.17\textwidth}
        \centering
        \captionsetup{justification=centering}
        \includegraphics[width = \linewidth]{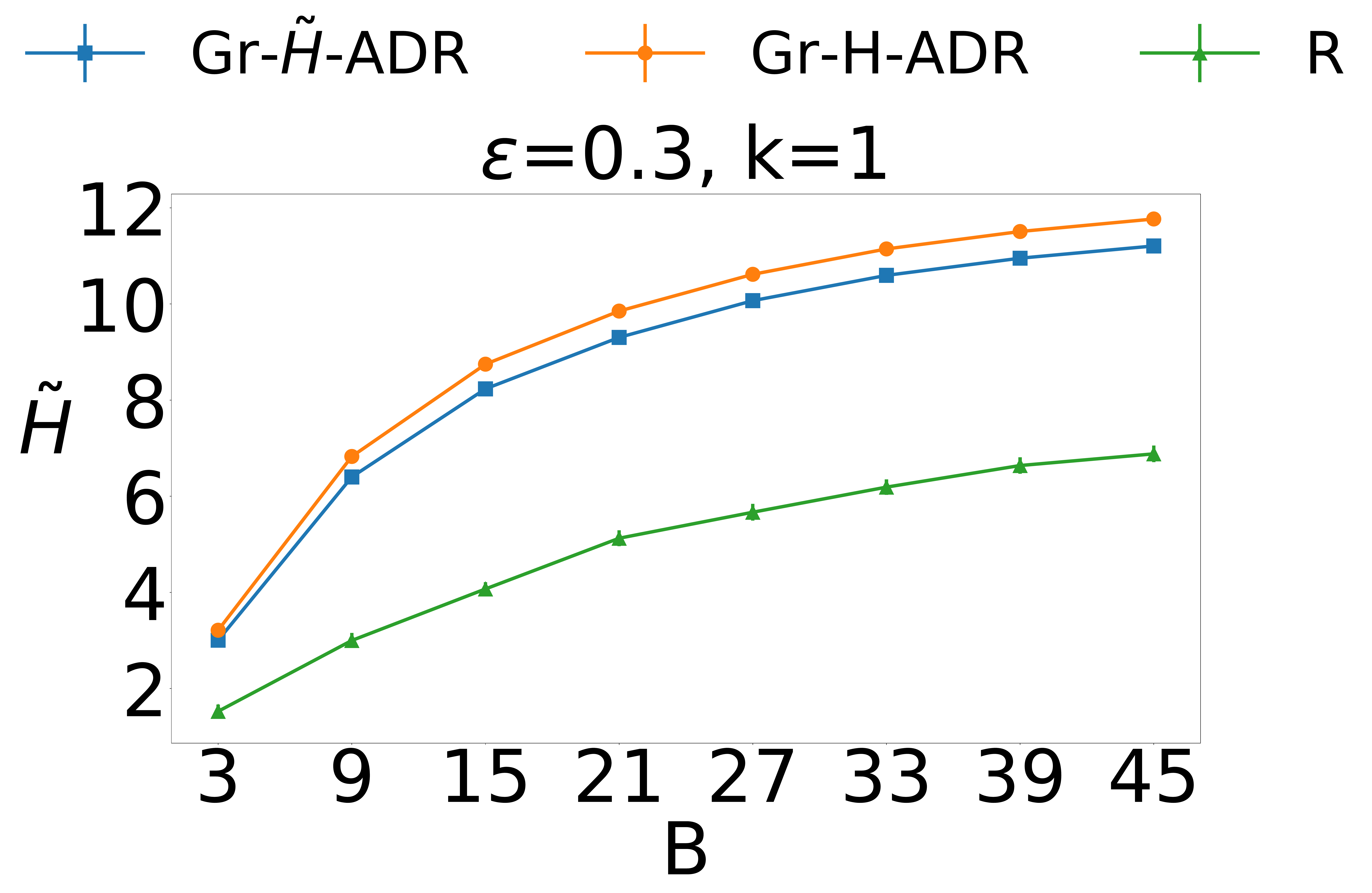}
        \caption{}\label{fig:spNew_adr1_TS_B}
    \end{subfigure}%
    \begin{subfigure}{0.17\textwidth}
        \centering
        \captionsetup{justification=centering}
        \includegraphics[width = \linewidth]{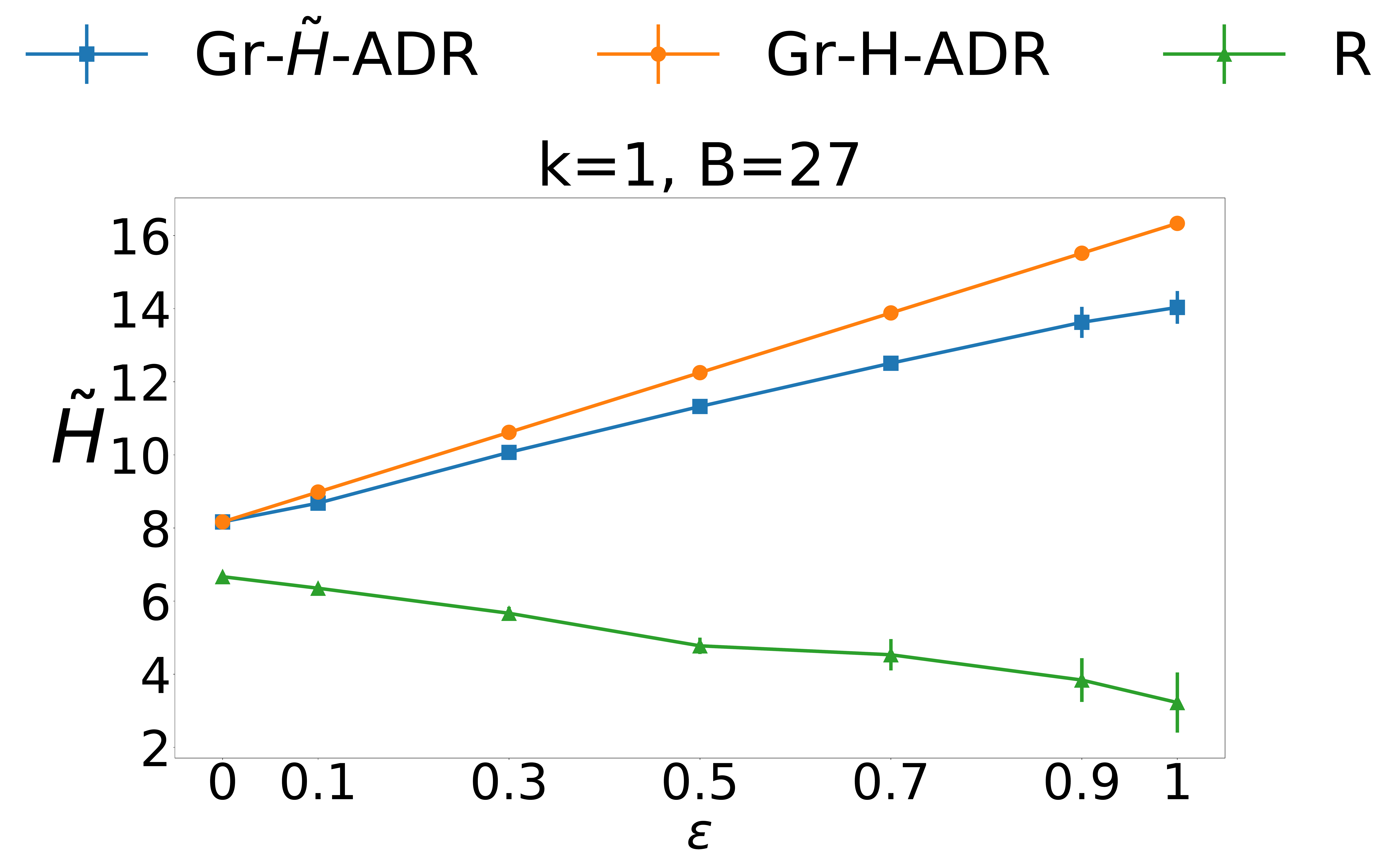}
        \caption{}\label{fig:spNew_adr1_TS_eps}
    \end{subfigure}%
\caption{$\tilde{H}$ in AG setting vs: (a)  $k$, (b) $B$, (c) $\varepsilon$.}\label{fig:spNew_adr_TS}
  \end{figure}
  
  \begin{figure}[!ht]
    \begin{subfigure}{0.17\textwidth}
        \centering
        \captionsetup{justification=centering}
        \includegraphics[width = \linewidth]{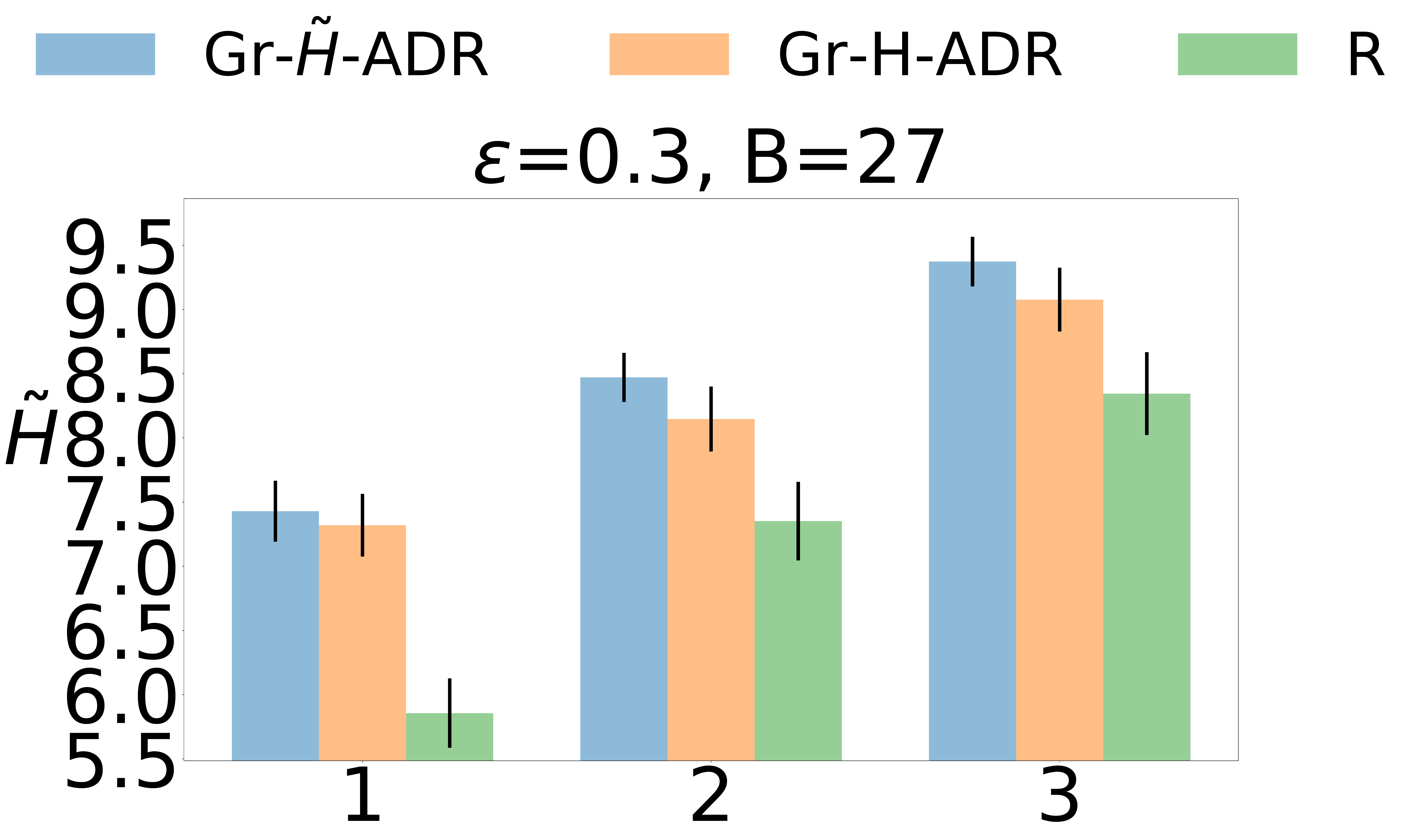}
        \caption{}\label{fig:spNew_adr1_IS_k}
    \end{subfigure}%
    \begin{subfigure}{0.17\textwidth}
        \centering
        \captionsetup{justification=centering}
        \includegraphics[width = \linewidth]{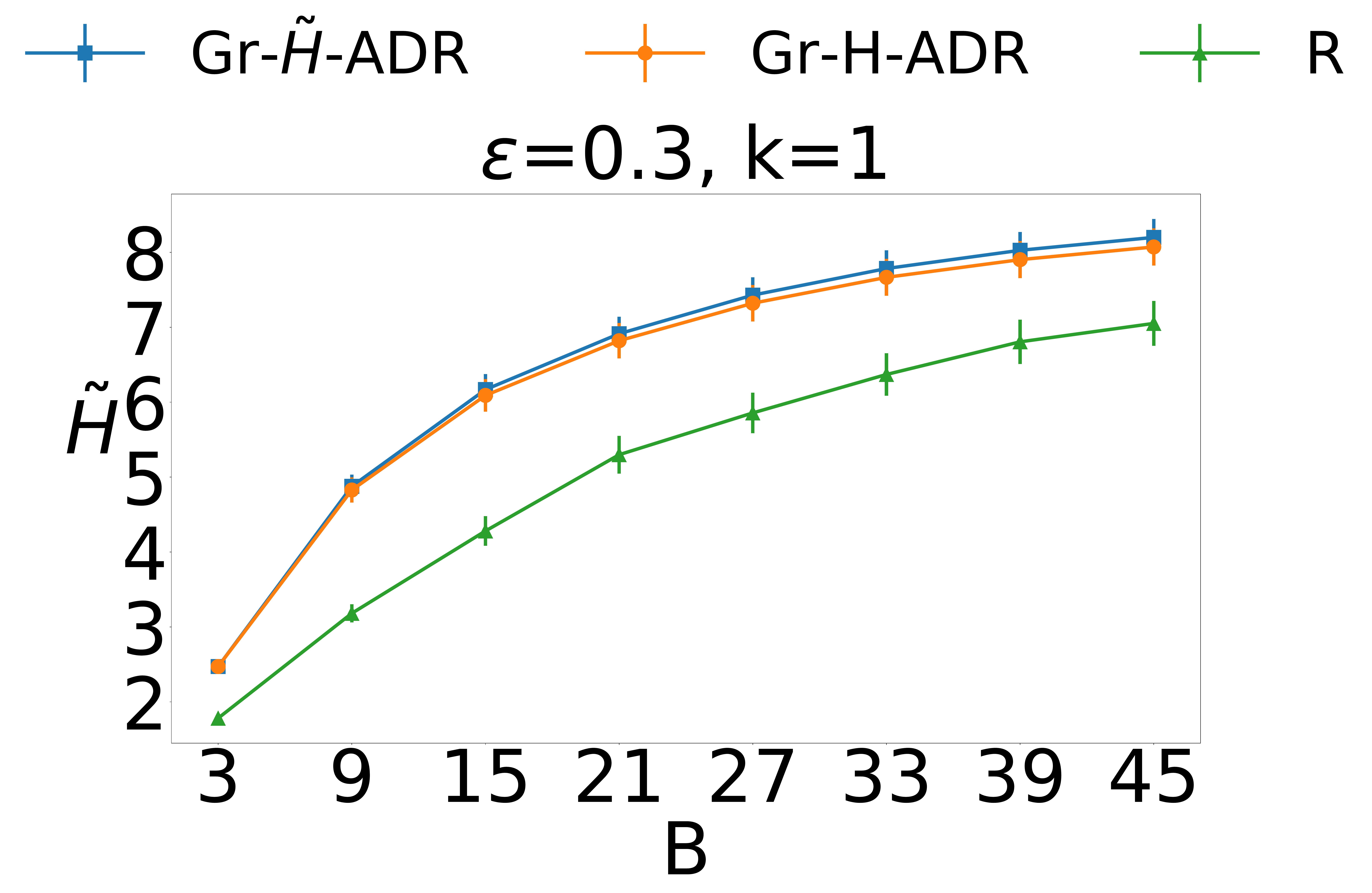}
        \caption{}\label{fig:spNew_adr1_IS_B}
    \end{subfigure}%
    \begin{subfigure}{0.17\textwidth}
        \centering
        \captionsetup{justification=centering}
        \includegraphics[width = \linewidth]{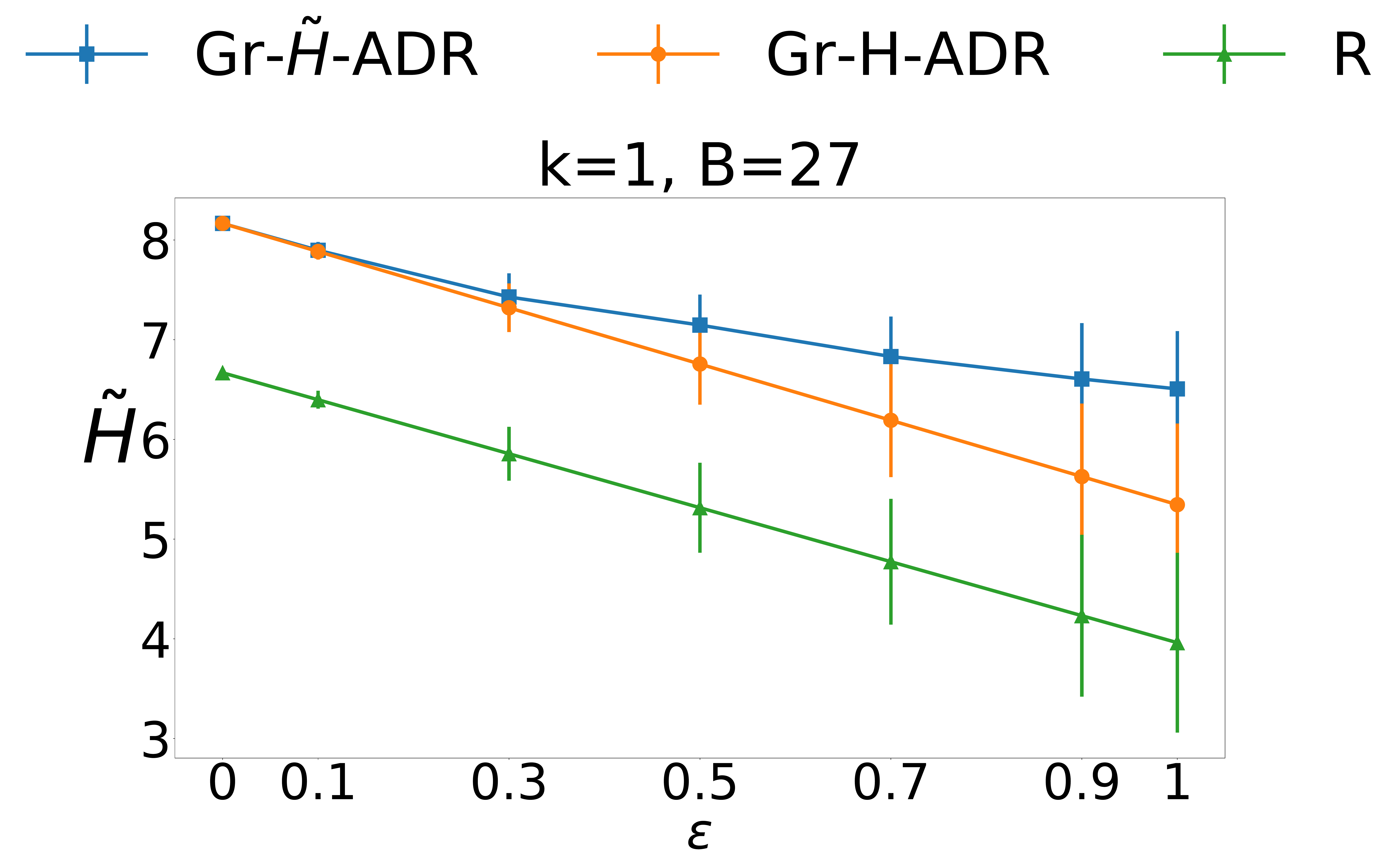}
        \caption{}\label{fig:spNew_adr1_IS_eps}
    \end{subfigure}%
\caption{$\tilde{H}$ in MeanG setting vs: (a) $k$, (b)  $B$, (c) $\varepsilon$.}\label{fig:spMean_adr_TS}
  \end{figure}
  
\begin{figure}[!ht]
    \begin{subfigure}{0.17\textwidth}
        \centering
        \captionsetup{justification=centering}
        \includegraphics[width = \linewidth]{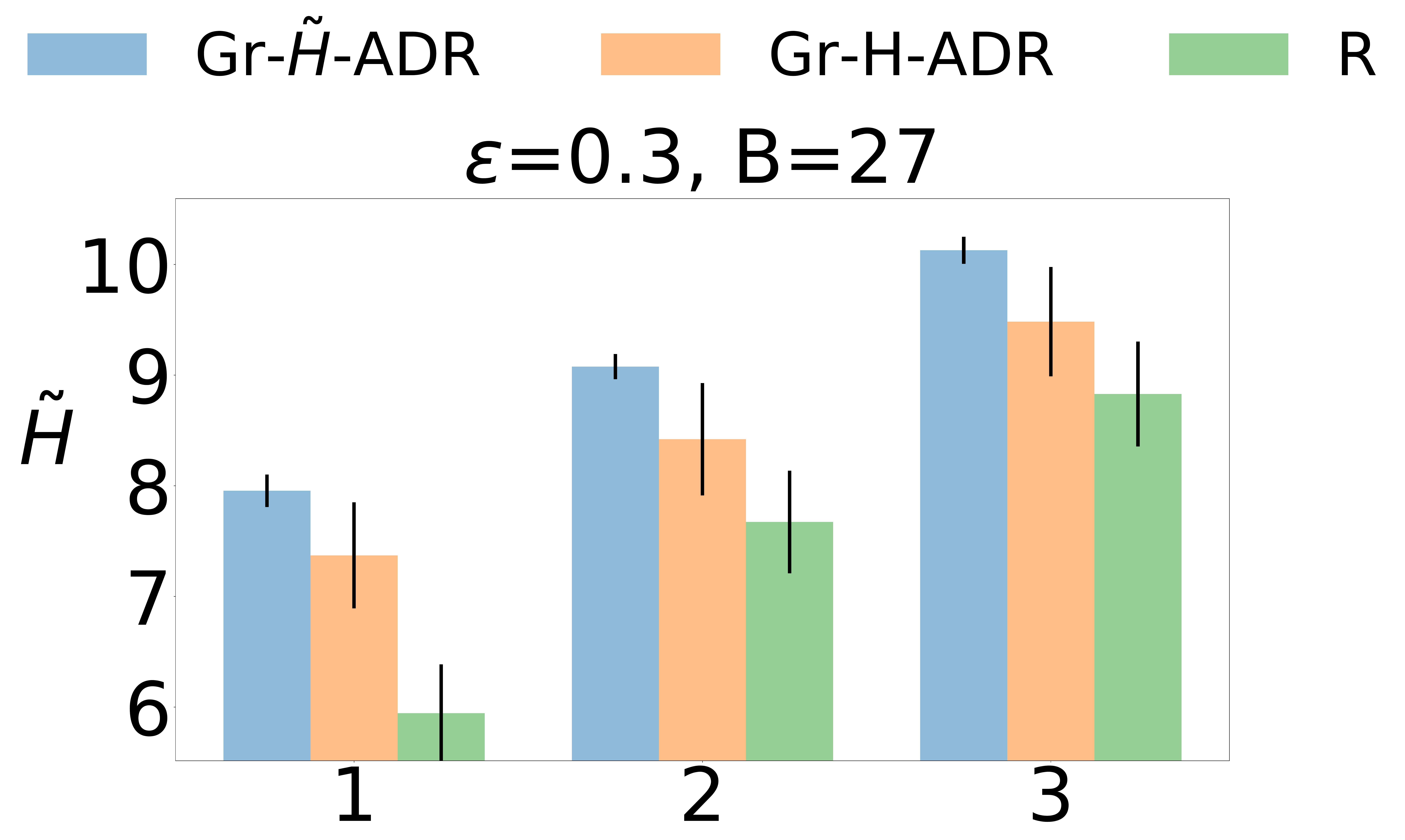}
        \caption{}\label{fig:spMax_adr1_TS_k}
    \end{subfigure}%
    \begin{subfigure}{0.17\textwidth}
        \centering
        \captionsetup{justification=centering}
        \includegraphics[width = \linewidth]{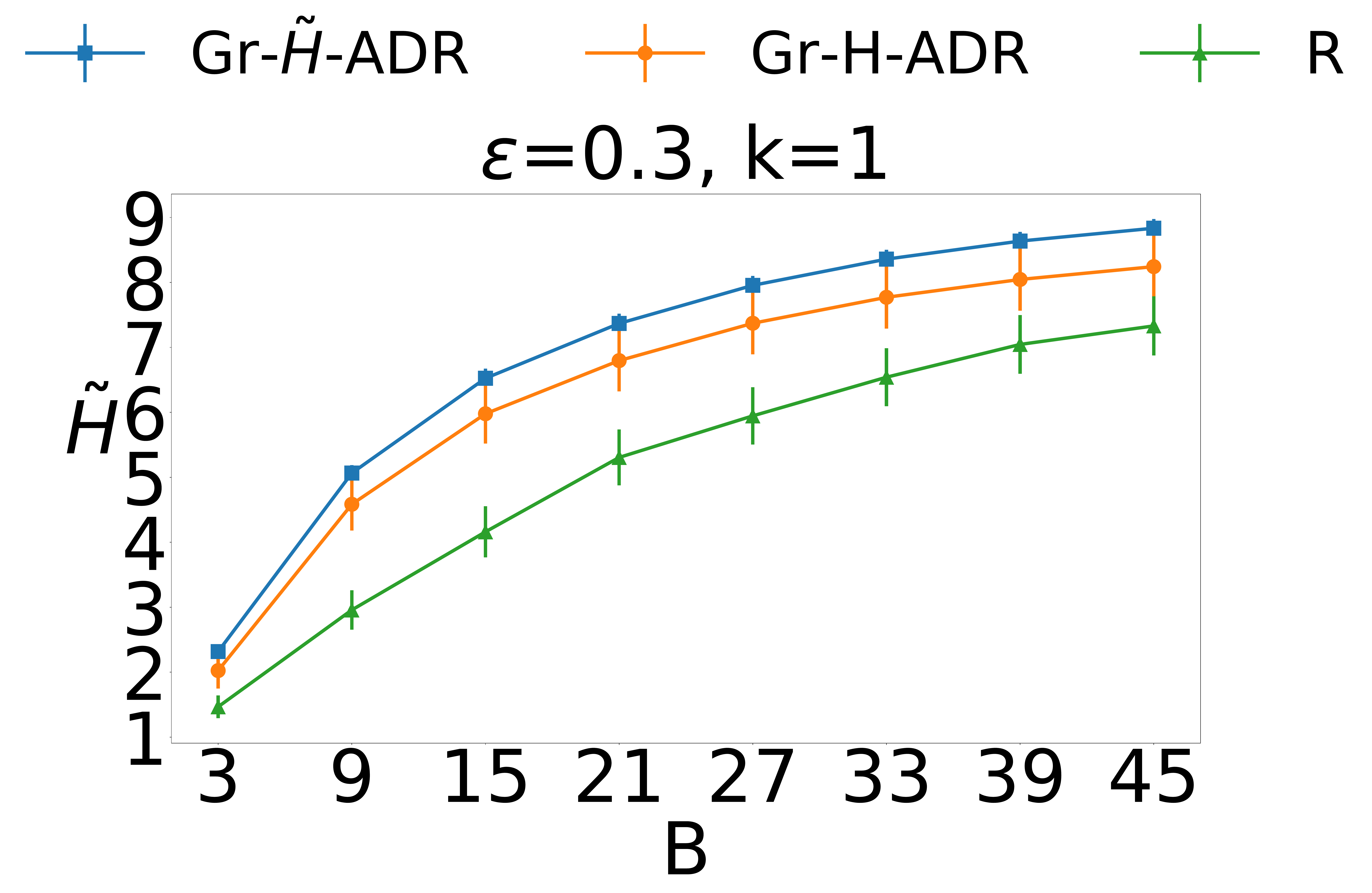}
        \caption{}\label{fig:spMax_adr1_TS_B}
    \end{subfigure}%
    \begin{subfigure}{0.17\textwidth}
        \centering
        \captionsetup{justification=centering}
        \includegraphics[width = \linewidth]{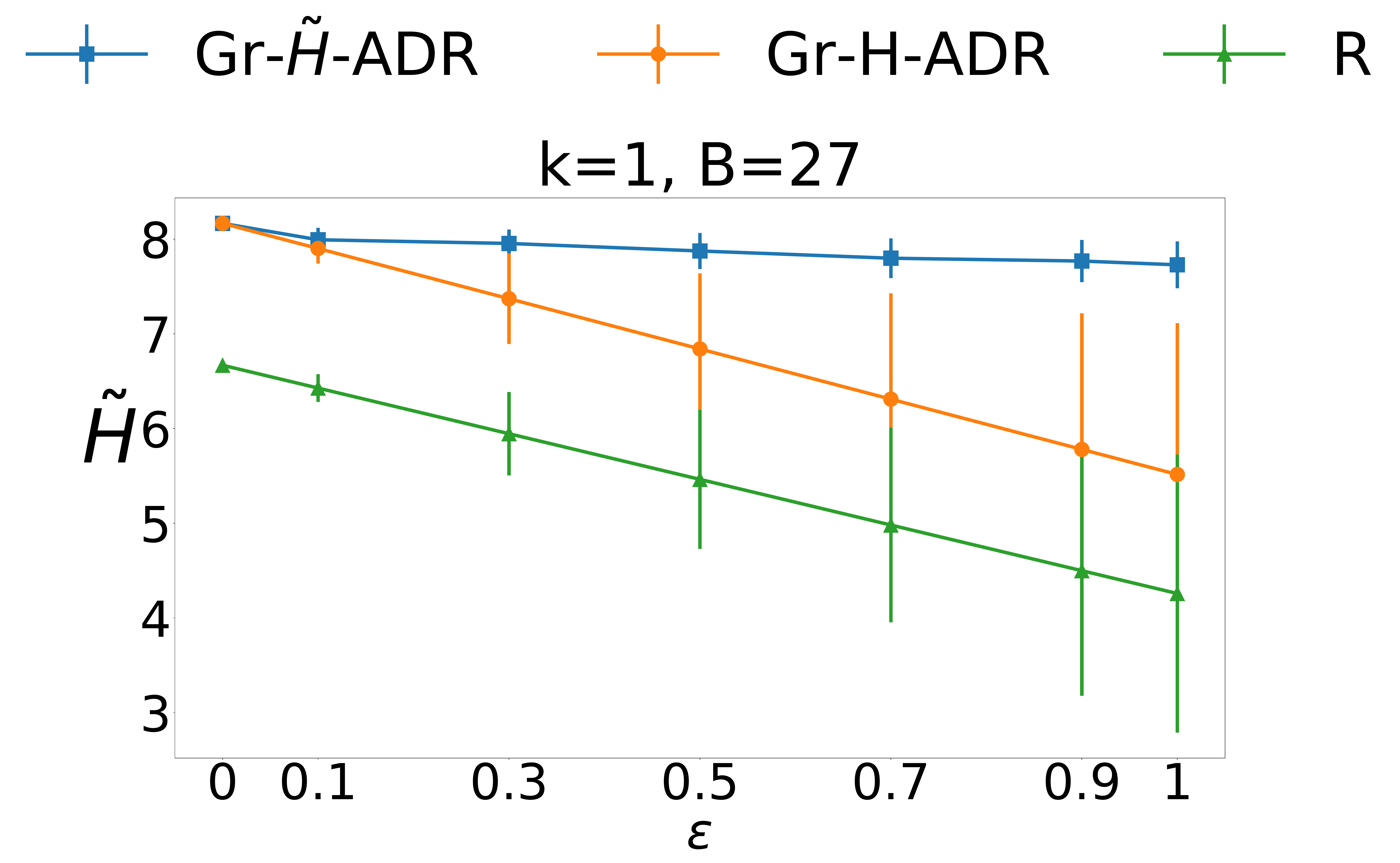}
        \caption{}\label{fig:spMax_adr1_TS_B}
    \end{subfigure}%
    \vspace{-2mm}
\caption{$\tilde{H}$ in MaxG setting vs: (a) $k$, (b)  $B$, (c) $\varepsilon$.}\label{fig:spMax_adr_TS}
  \end{figure}
  

We then considered sensor placement with IS constraints using $k$-Greedy-IS. $k$-Greedy-IS using $H$ as $f$ is denoted with 
$Gr$-$H$-$ADR$, and $k$-Greedy-IS using $\tilde{H}$ as $F$ is denoted with $Gr$-$\tilde{H}$-$ADR$. As expected from the analysis in Section~\ref{sec:paper:improvedgreedy}, the results in Figs.~\ref{fig:spNew_adr_IS}, 
\ref{fig:spMean_adr_IS}, and~\ref{fig:spMax_adr_IS} are similar to those 
in Figs.~\ref{fig:spNew_adr_TS},~\ref{fig:spMean_adr_TS}, and~\ref{fig:spMax_adr_TS}. 

  \begin{figure}[!ht]
    \begin{subfigure}{0.17\textwidth}
        \centering
        \captionsetup{justification=centering}
        \includegraphics[width = \linewidth]{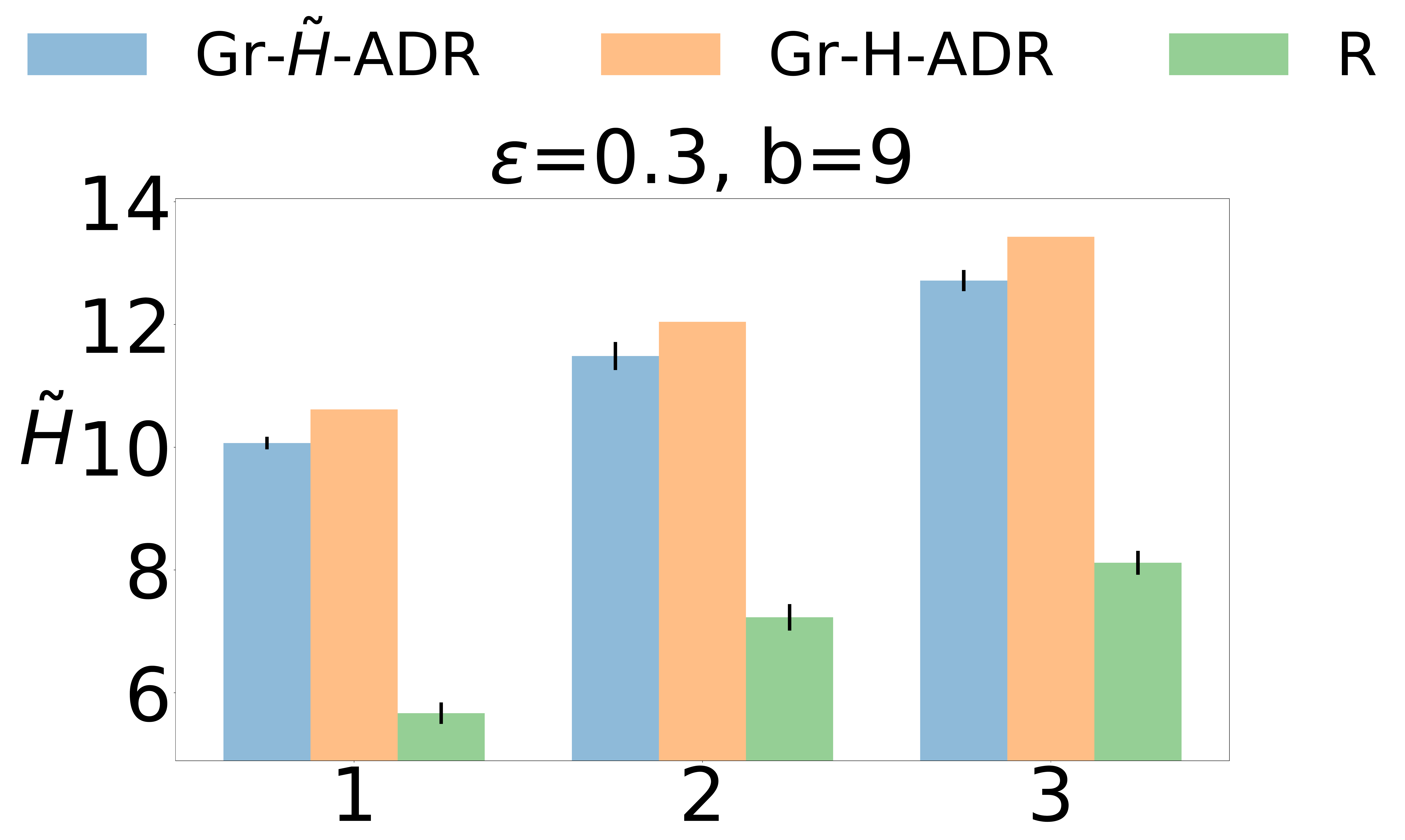}
        \caption{}\label{fig:spNew_adr1_IS_k}
    \end{subfigure}%
    \begin{subfigure}{0.17\textwidth}
        \centering
        \captionsetup{justification=centering}
        \includegraphics[width = \linewidth]{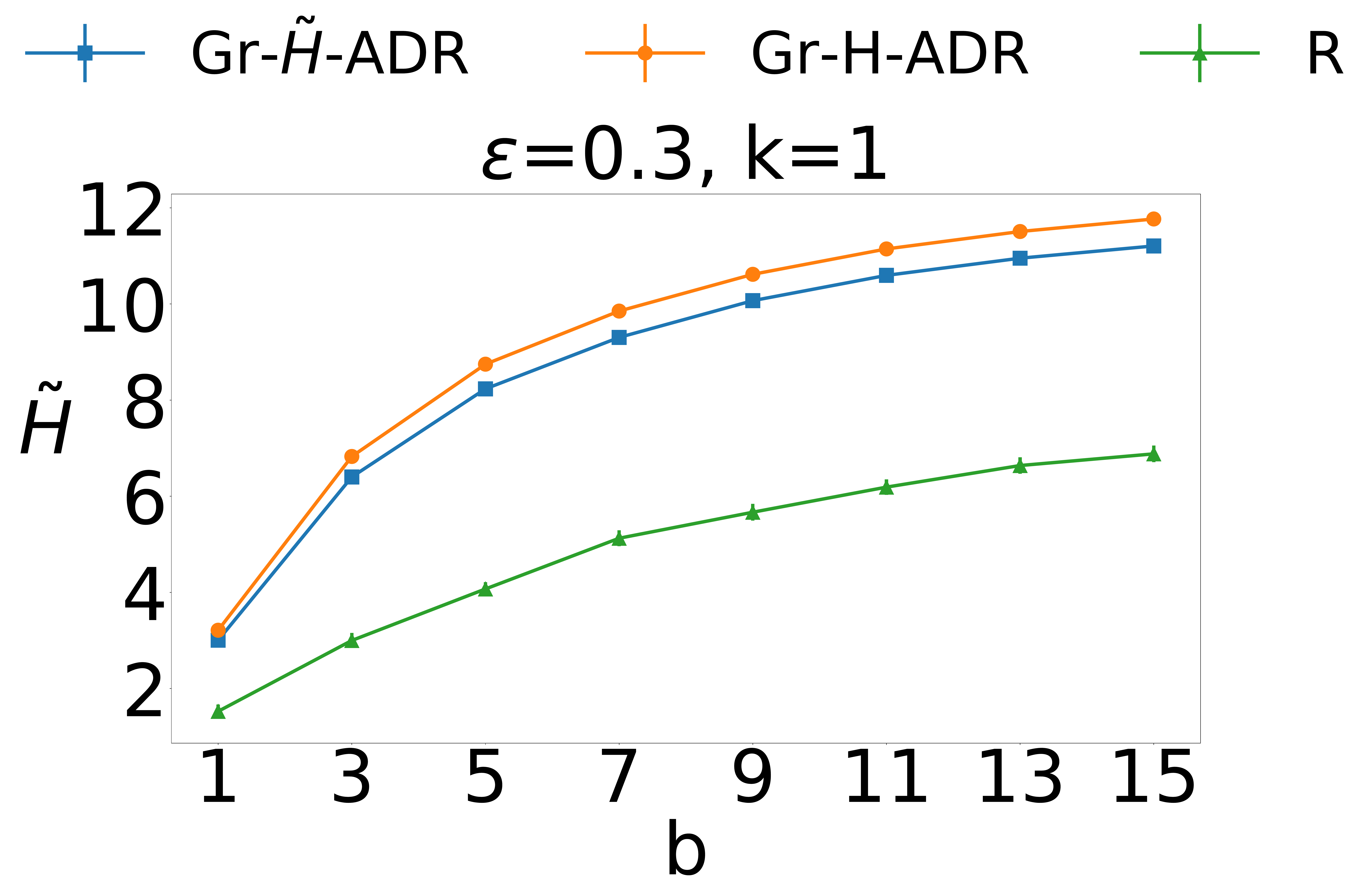}
        \caption{}\label{fig:spNew_adr1_IS_B}
    \end{subfigure}%
    \begin{subfigure}{0.17\textwidth}
        \centering
        \captionsetup{justification=centering}
        \includegraphics[width = \linewidth]{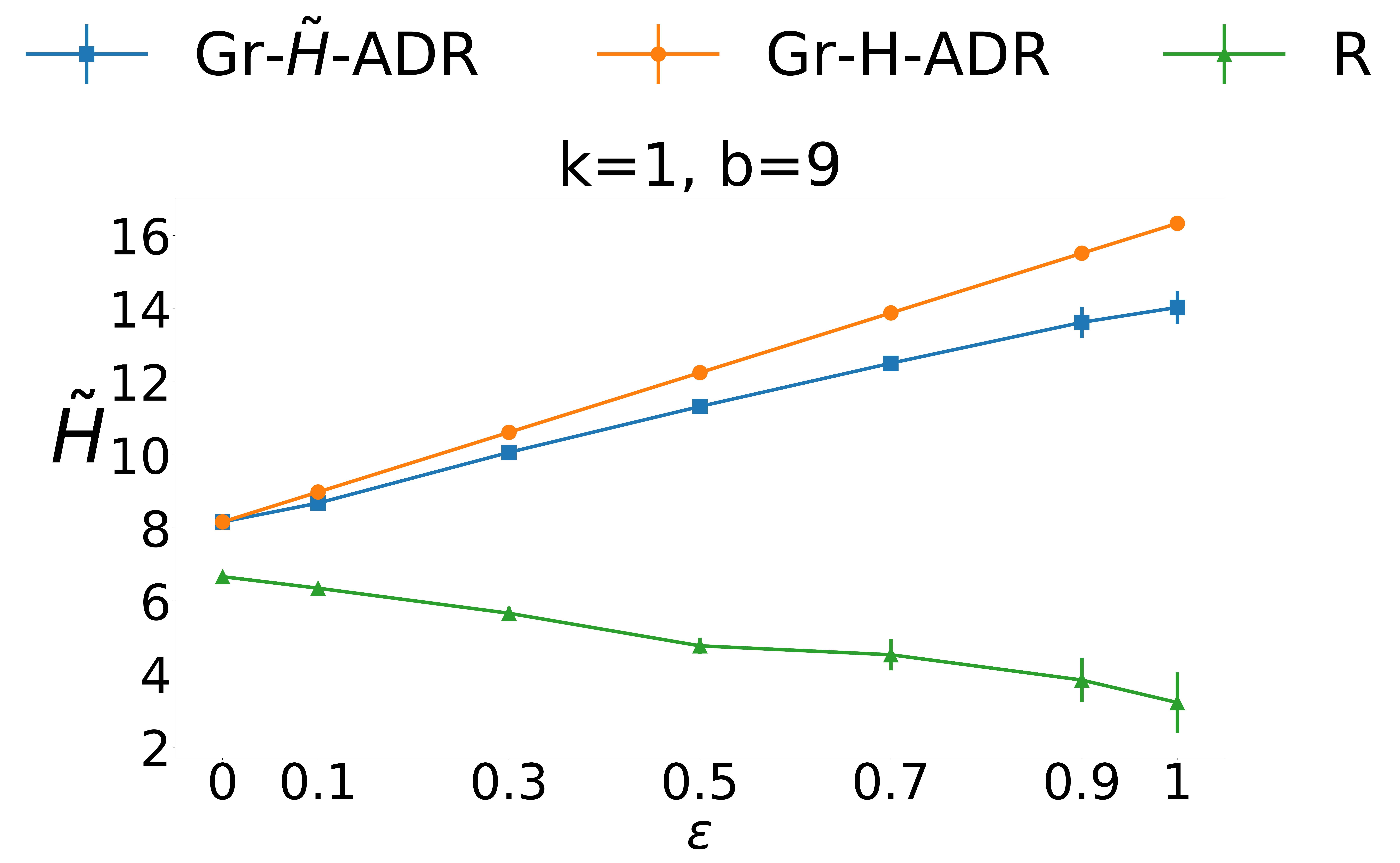}
        \caption{}\label{fig:spNew_adr1_IS_eps}
    \end{subfigure}%
\caption{$\tilde{H}$ in AG setting vs: (a) $k$, (b)  $b$, (c) $\varepsilon$.}\label{fig:spNew_adr_IS}
  \end{figure}
  
  \begin{figure}[!ht]
    \begin{subfigure}{0.17\textwidth}
        \centering
        \captionsetup{justification=centering}
        \includegraphics[width = \linewidth]{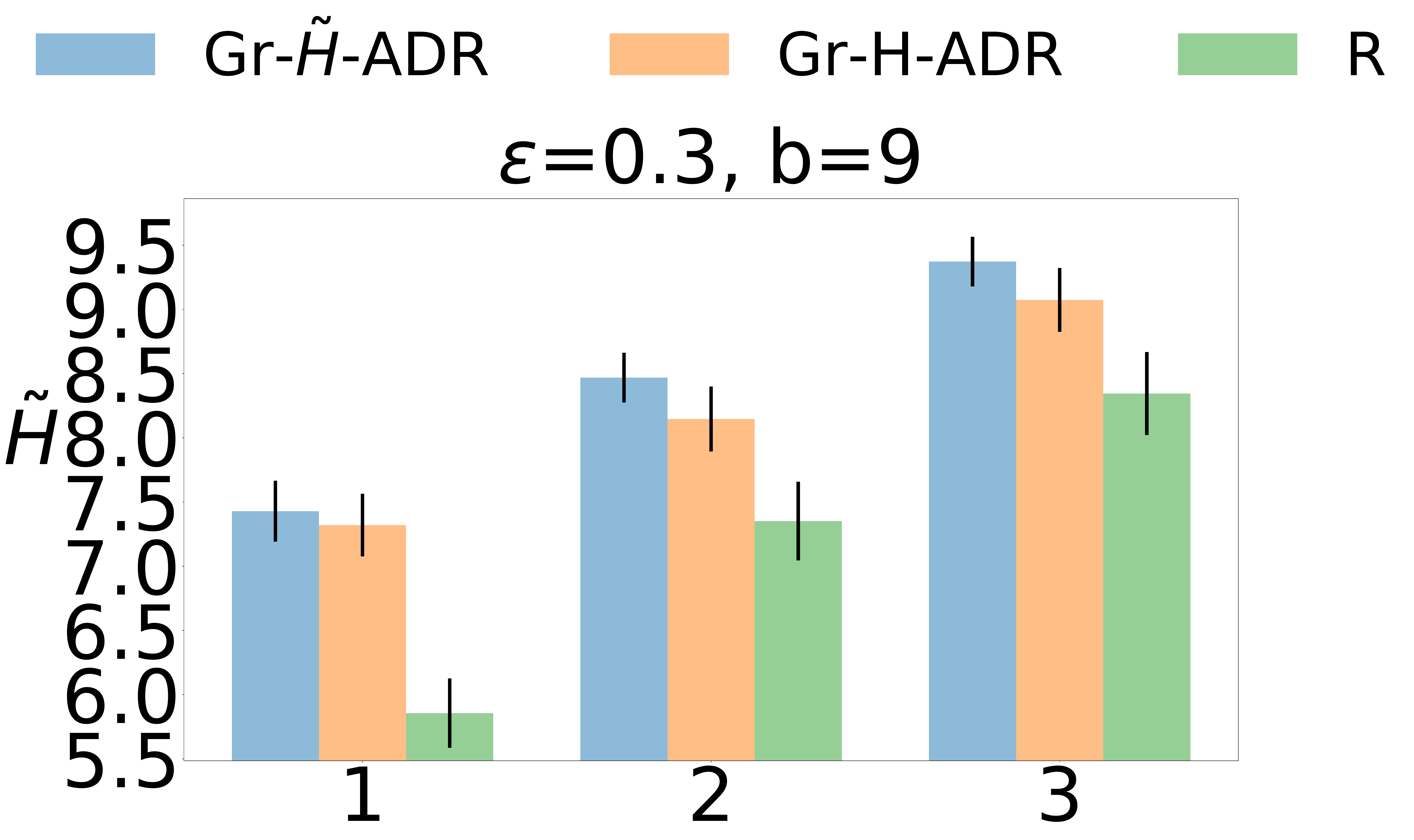}
        \caption{}\label{fig:spMean_adr1_IS_k}
    \end{subfigure}%
    \begin{subfigure}{0.17\textwidth}
        \centering
        \captionsetup{justification=centering}
        \includegraphics[width = \linewidth]{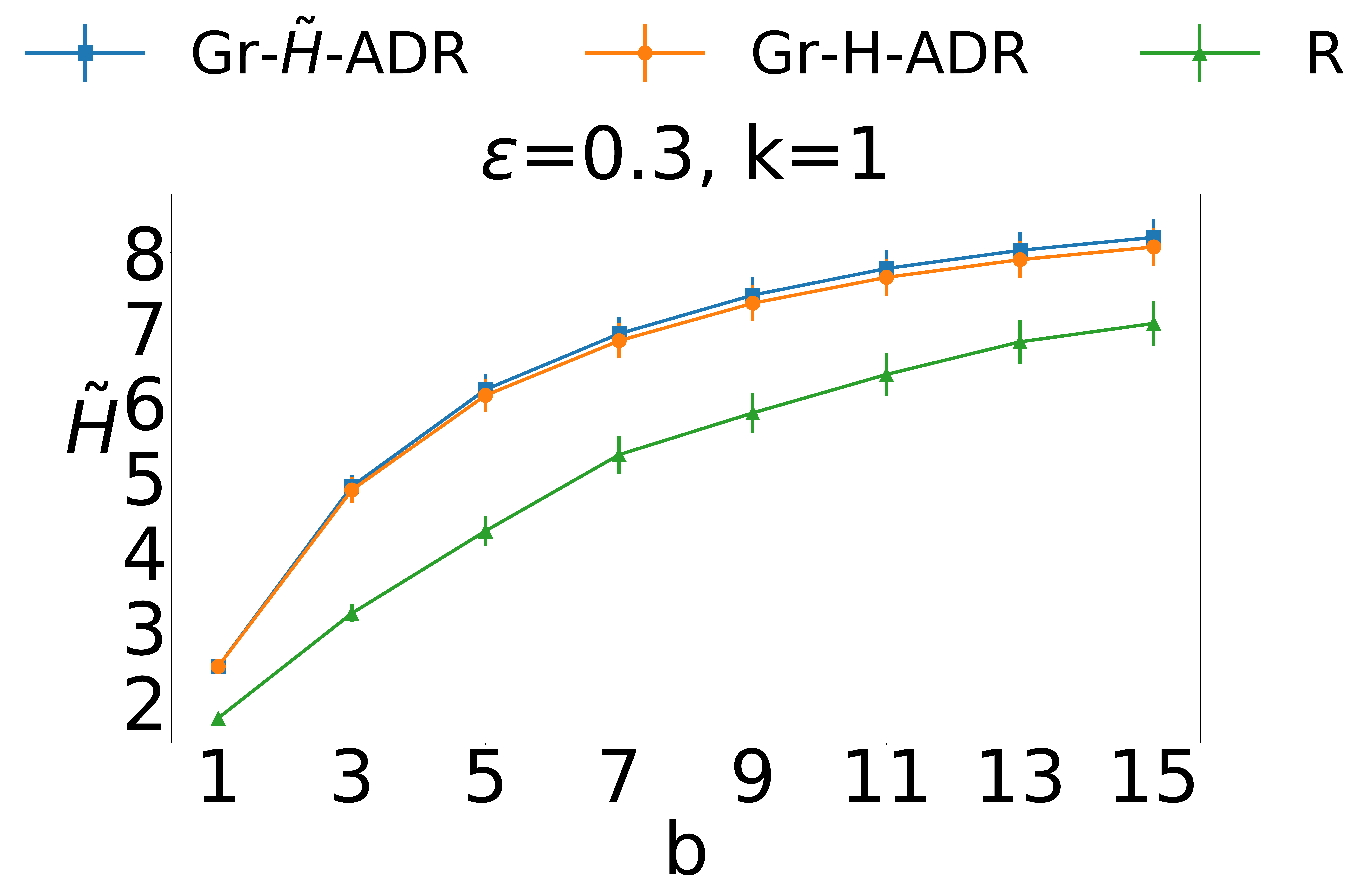}
        \caption{}\label{fig:spMean_adr1_IS_B}
    \end{subfigure}%
    \begin{subfigure}{0.17\textwidth}
        \centering
        \captionsetup{justification=centering}
        \includegraphics[width = \linewidth]{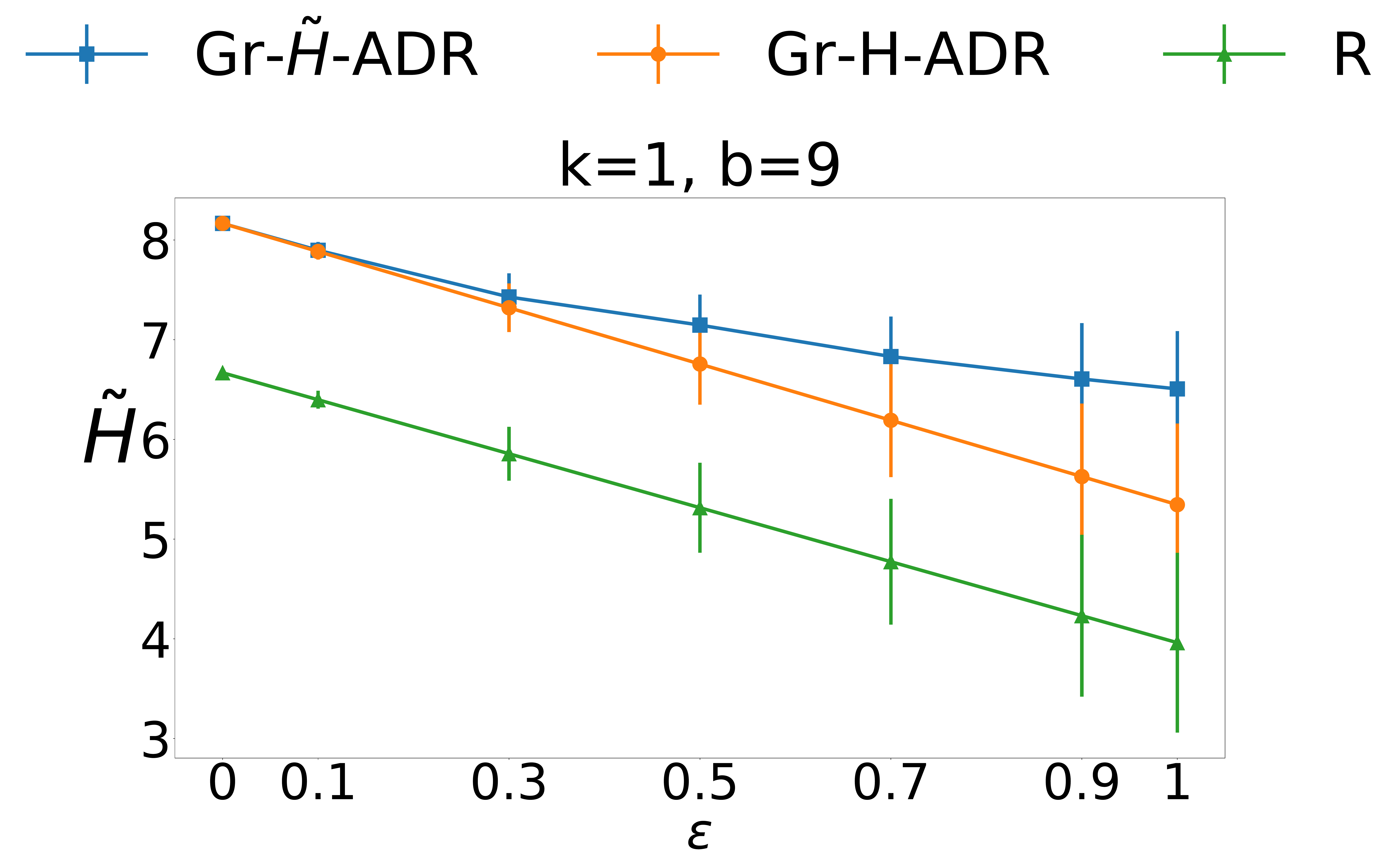}
        \caption{}\label{fig:spMean_adr1_IS_B}
    \end{subfigure}%
\caption{$\tilde{H}$ in MeanG setting vs: (a) $k$, (b) $b$, (c) $\varepsilon$.}\label{fig:spMean_adr_IS}
  \end{figure}

  \begin{figure}[!ht]

    \begin{subfigure}{0.17\textwidth}
        \centering
        \captionsetup{justification=centering}
        \includegraphics[width = \linewidth]{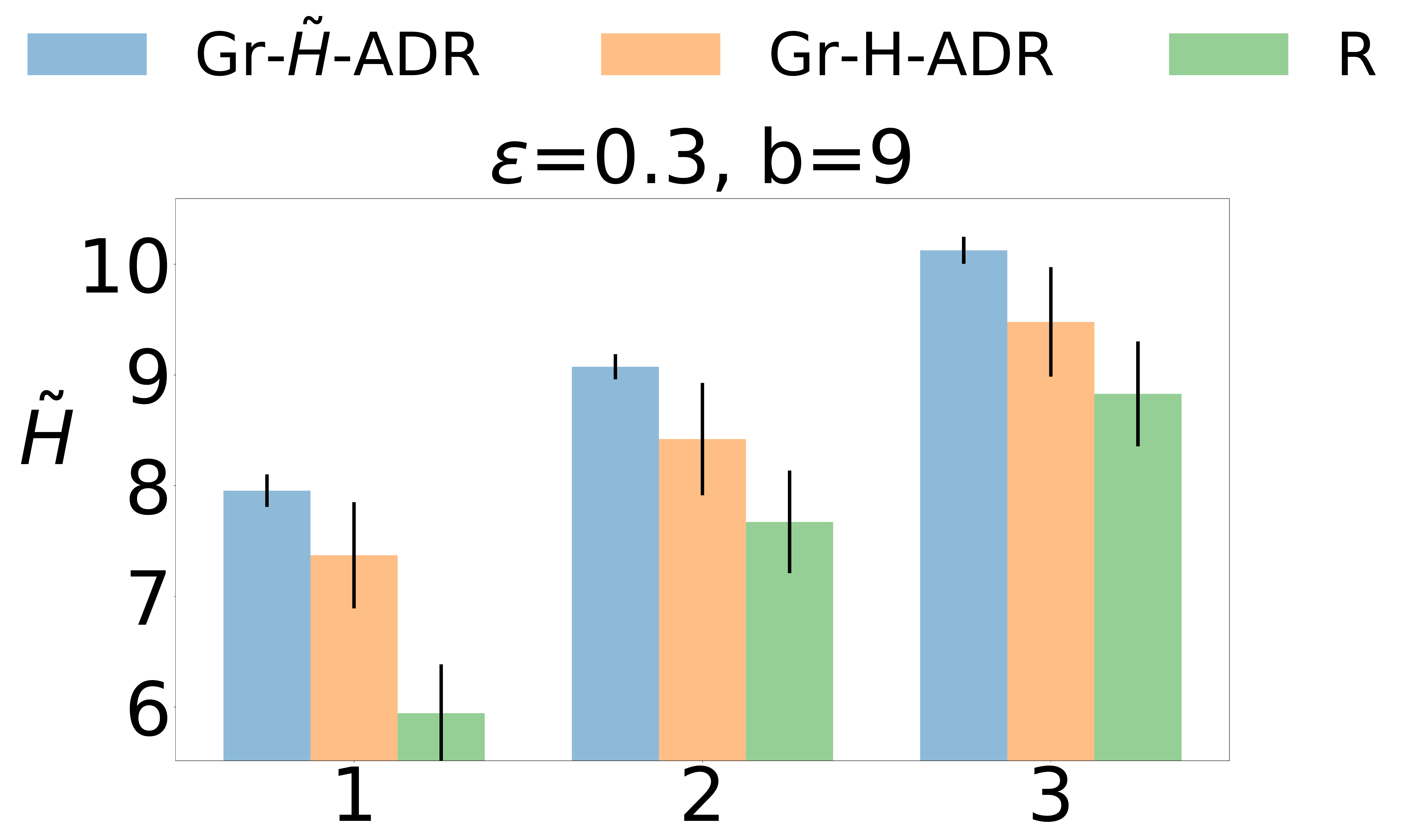}
        \caption{}\label{fig:spMax_adr1_IS_k}
    \end{subfigure}%
    \begin{subfigure}{0.17\textwidth}
        \centering
        \captionsetup{justification=centering}
        \includegraphics[width = \linewidth]{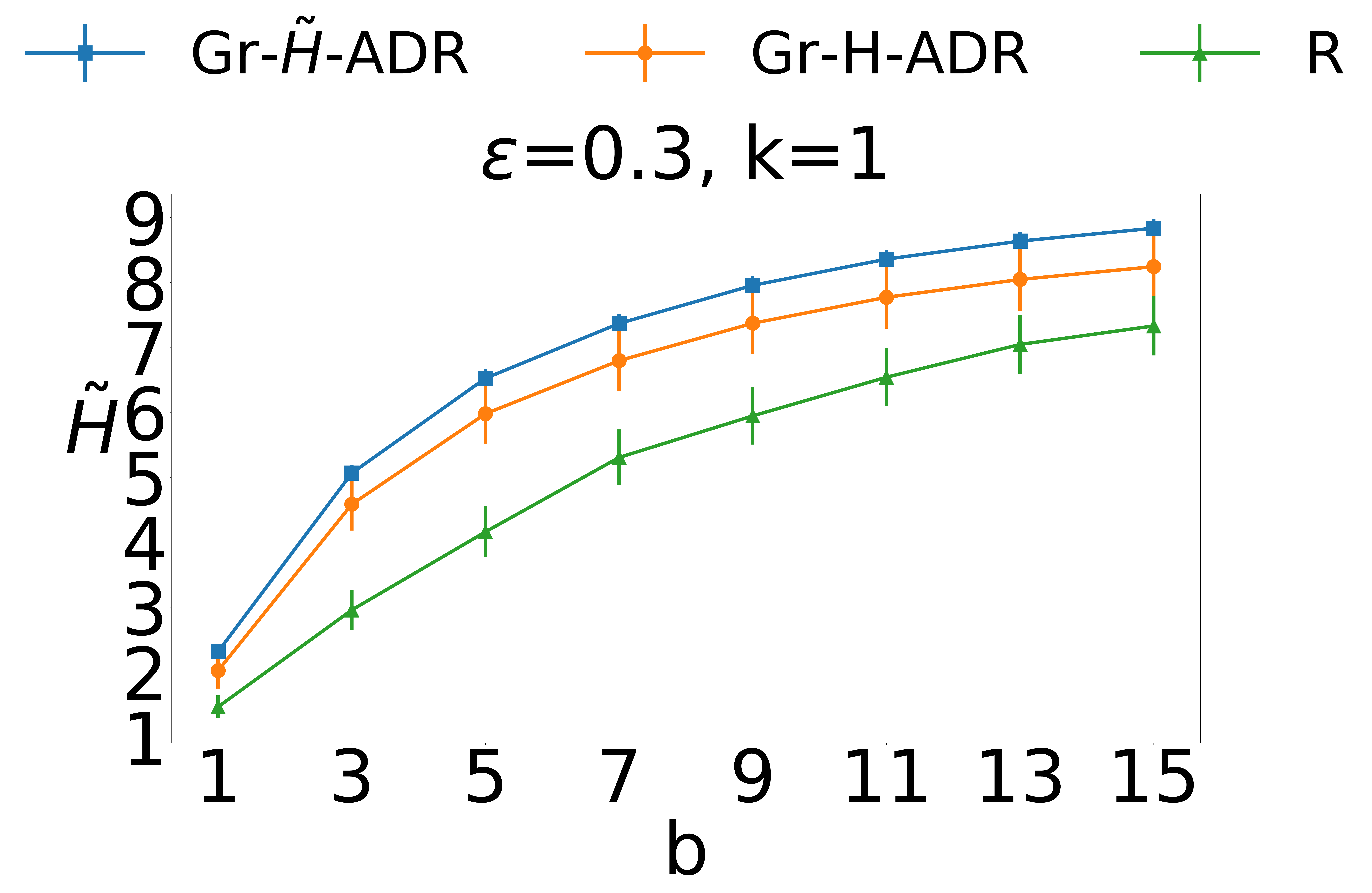}
        \caption{}\label{fig:spMax_adr1_IS_B}
    \end{subfigure}%
    \begin{subfigure}{0.17\textwidth}
        \centering
        \captionsetup{justification=centering}
        \includegraphics[width = \linewidth]{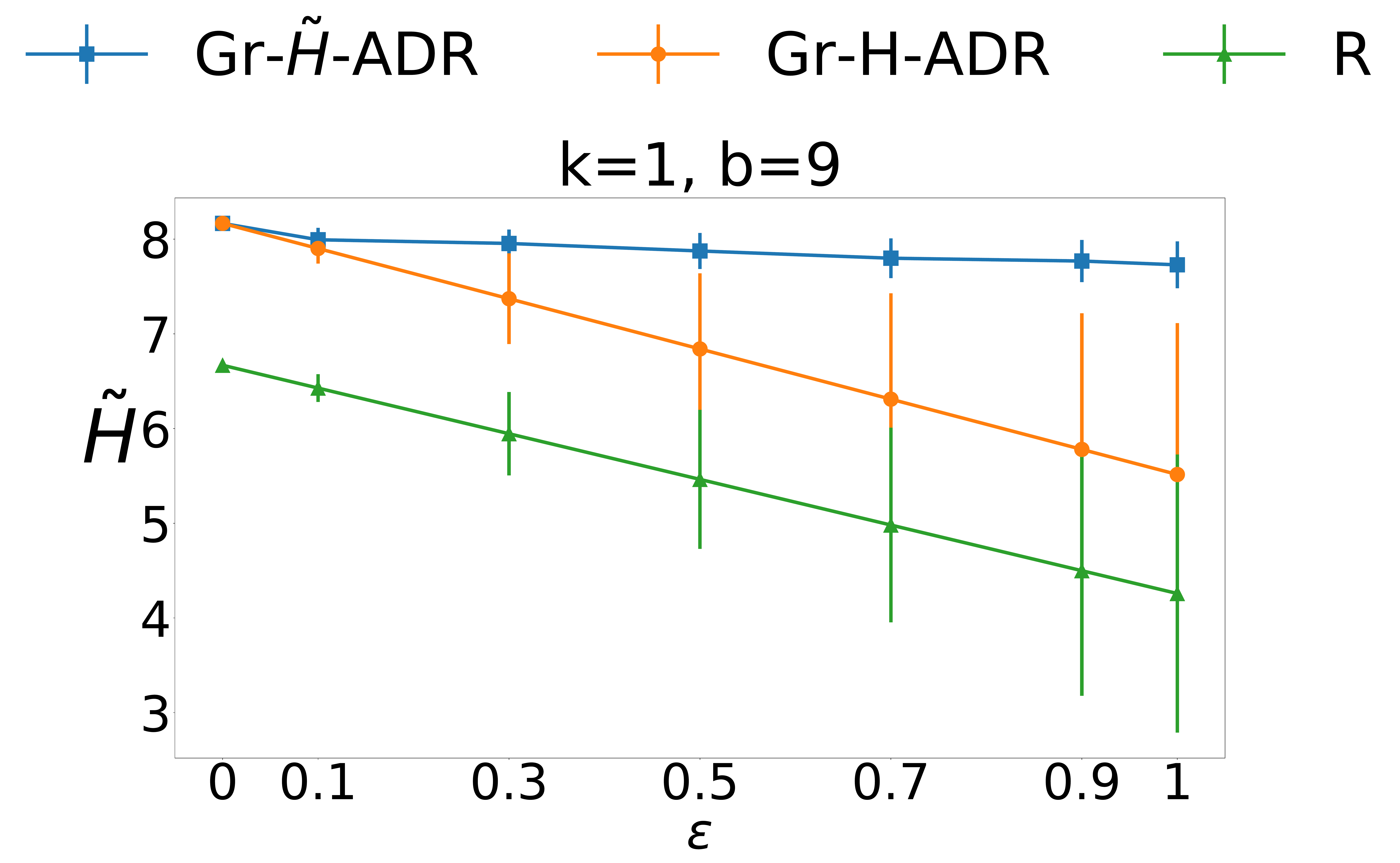}
        \caption{}\label{fig:spMax_adr1_IS_B}
    \end{subfigure}%
    \caption{$\tilde{H}$ in MaxG setting vs: (a) $k$, (b) $b$, (c) $\varepsilon$.}\label{fig:spMax_adr_IS}
  \end{figure}

Last, we considered influence maximization with IS constraints and an  $\varepsilon$-ADR function. 
$k$-Greedy-TS using $I$ as $f$ is denoted with 
$Gr$-$I$-$ADR$, and $k$-Greedy-TS using $\tilde{I}$ as $F$ is denoted with $Gr$-$\tilde{I}$-$ADR$. The obtained 
results are similar to those for the case of $\varepsilon$-AS function (see Section~\ref{exp:IM} and~Appendix~\ref{appendix:E1}).

\begin{figure}[!ht]
    \begin{subfigure}{0.17\textwidth}
        \centering
        \captionsetup{justification=centering}
        \includegraphics[width = \linewidth]{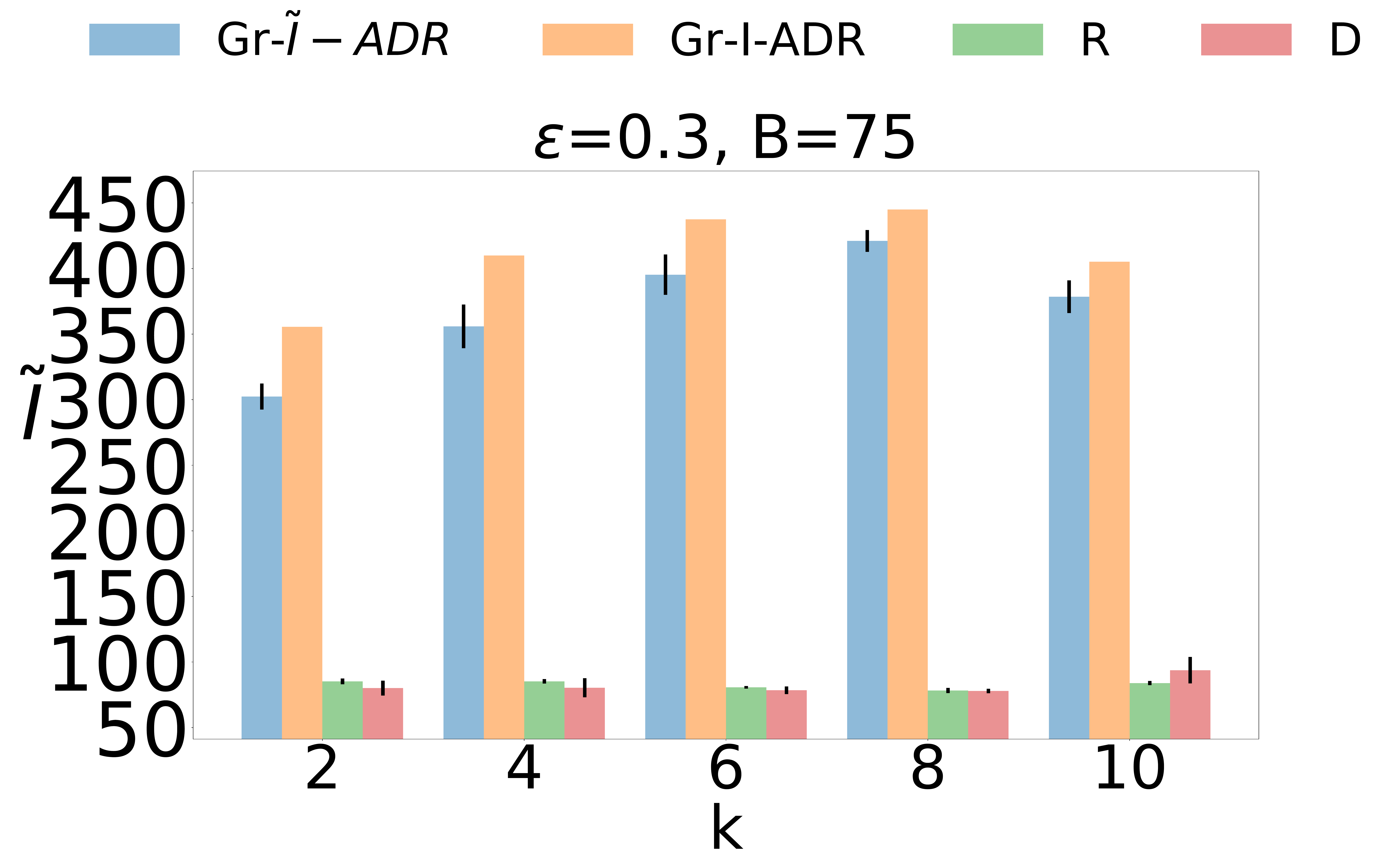}
        \caption{}\label{fig:imAG1}
    \end{subfigure}%
    \begin{subfigure}{0.17\textwidth}
        \centering
        \captionsetup{justification=centering}
        \includegraphics[width = \linewidth]{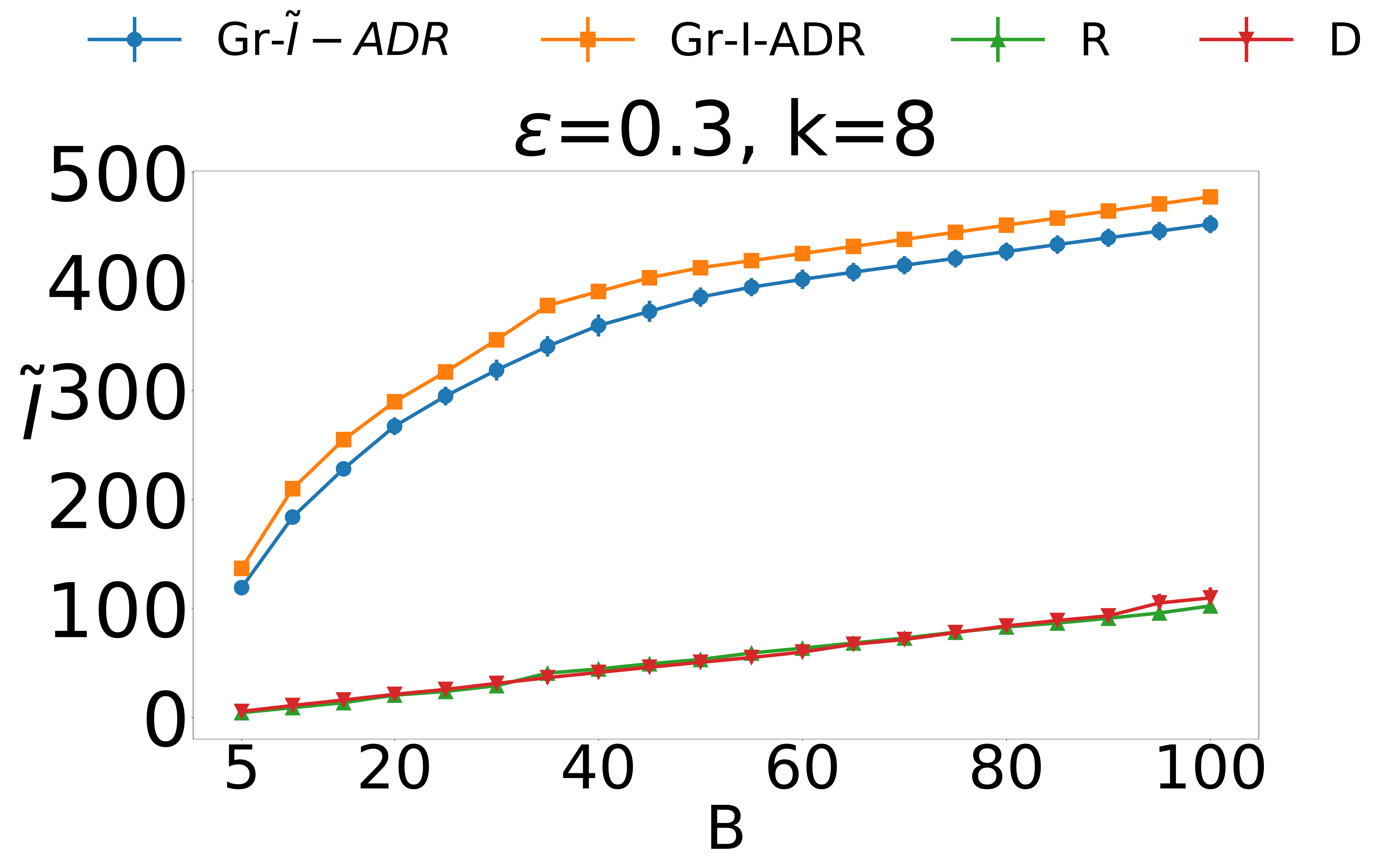}
        \caption{}\label{fig:imAG1}
    \end{subfigure}%
    \begin{subfigure}{0.17\textwidth}
        \centering
        \captionsetup{justification=centering}
        \includegraphics[width = \linewidth]{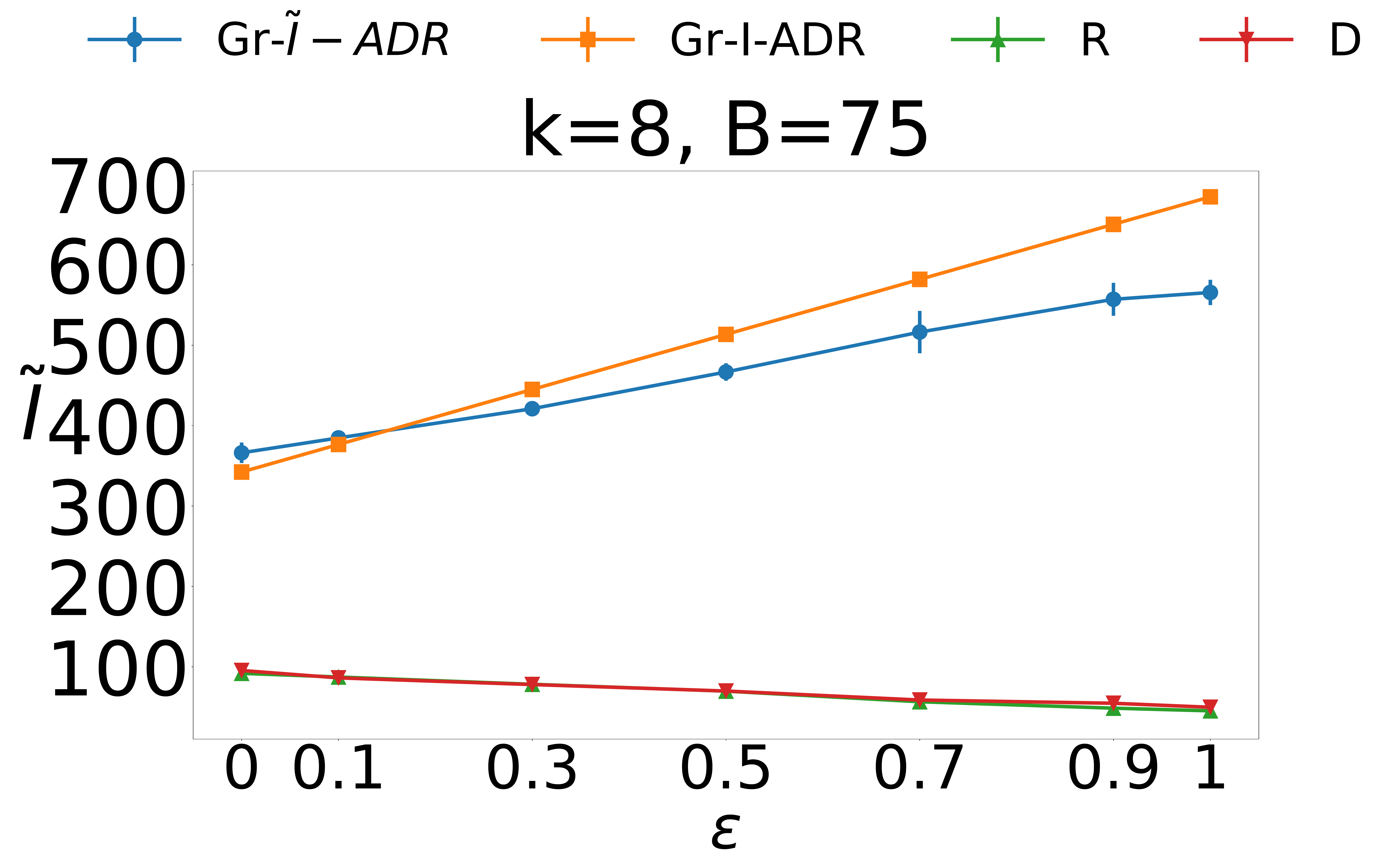}
        \caption{}\label{fig:imAG2}
    \end{subfigure}%
    \caption{$\tilde{I}$ in AG setting vs: (a) k, (b) $B$, (c) $\varepsilon$.}\label{fig:imAGAllADR}
\end{figure}
 
\begin{figure}[!ht]
    \begin{subfigure}{0.17\textwidth}
        \centering
        \captionsetup{justification=centering}
        \includegraphics[width = \linewidth]{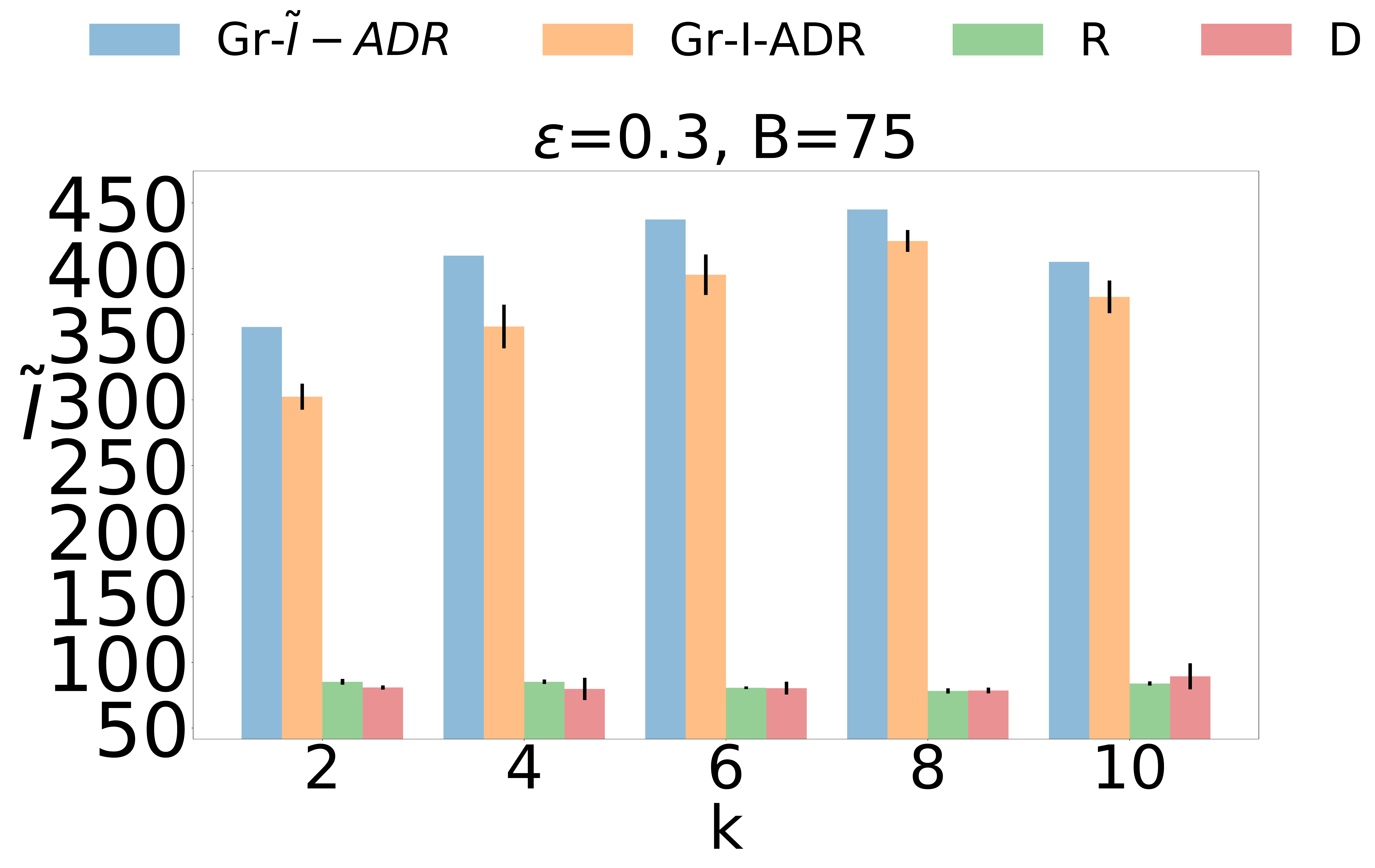}
        \caption{}\label{fig:imAG1}
    \end{subfigure}%
    \begin{subfigure}{0.17\textwidth}
        \centering
        \captionsetup{justification=centering}
        \includegraphics[width = \linewidth]{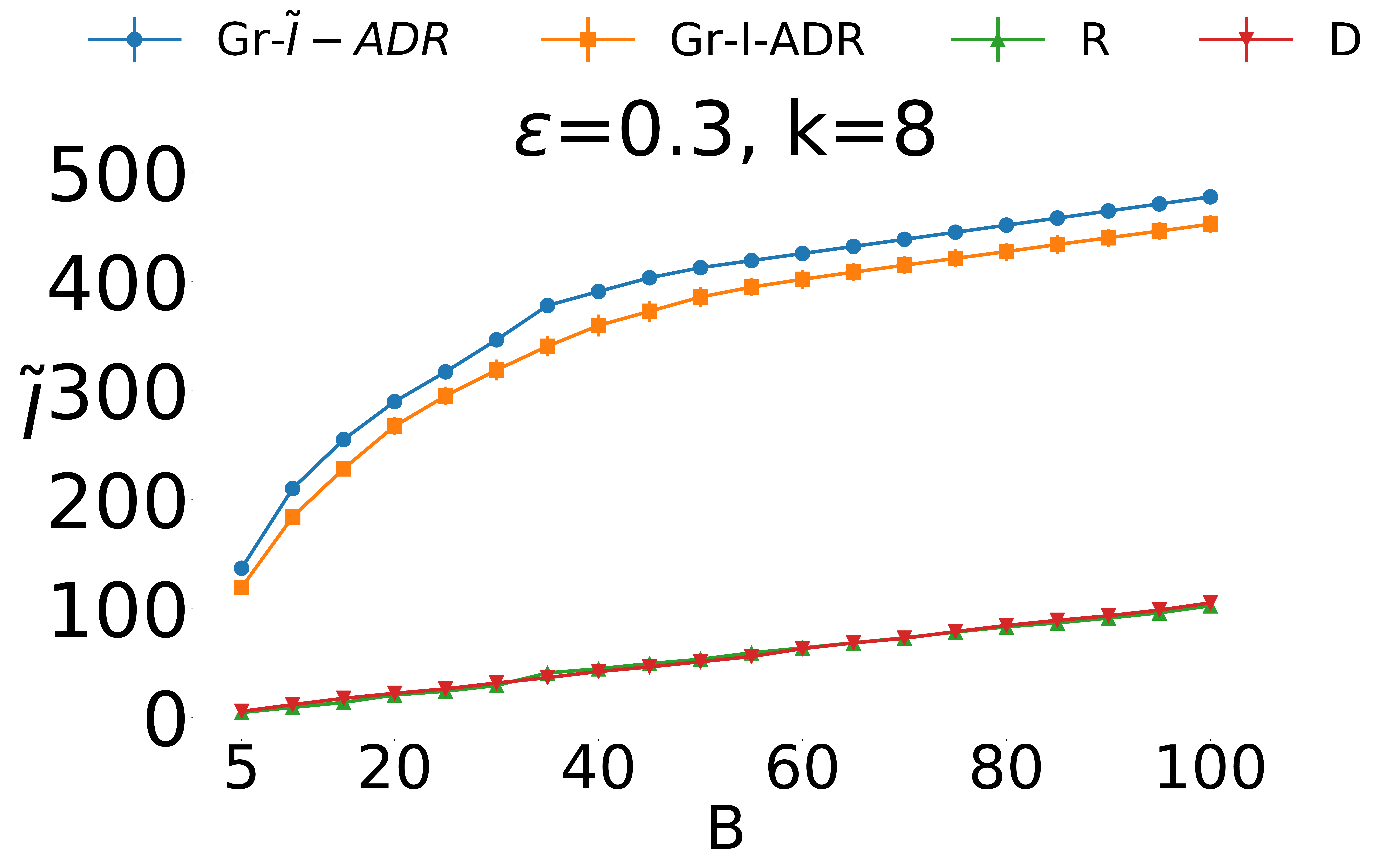}
        \caption{}\label{fig:imMean1}
    \end{subfigure}%
    \begin{subfigure}{0.17\textwidth}
        \centering
        \captionsetup{justification=centering}
        \includegraphics[width = \linewidth]{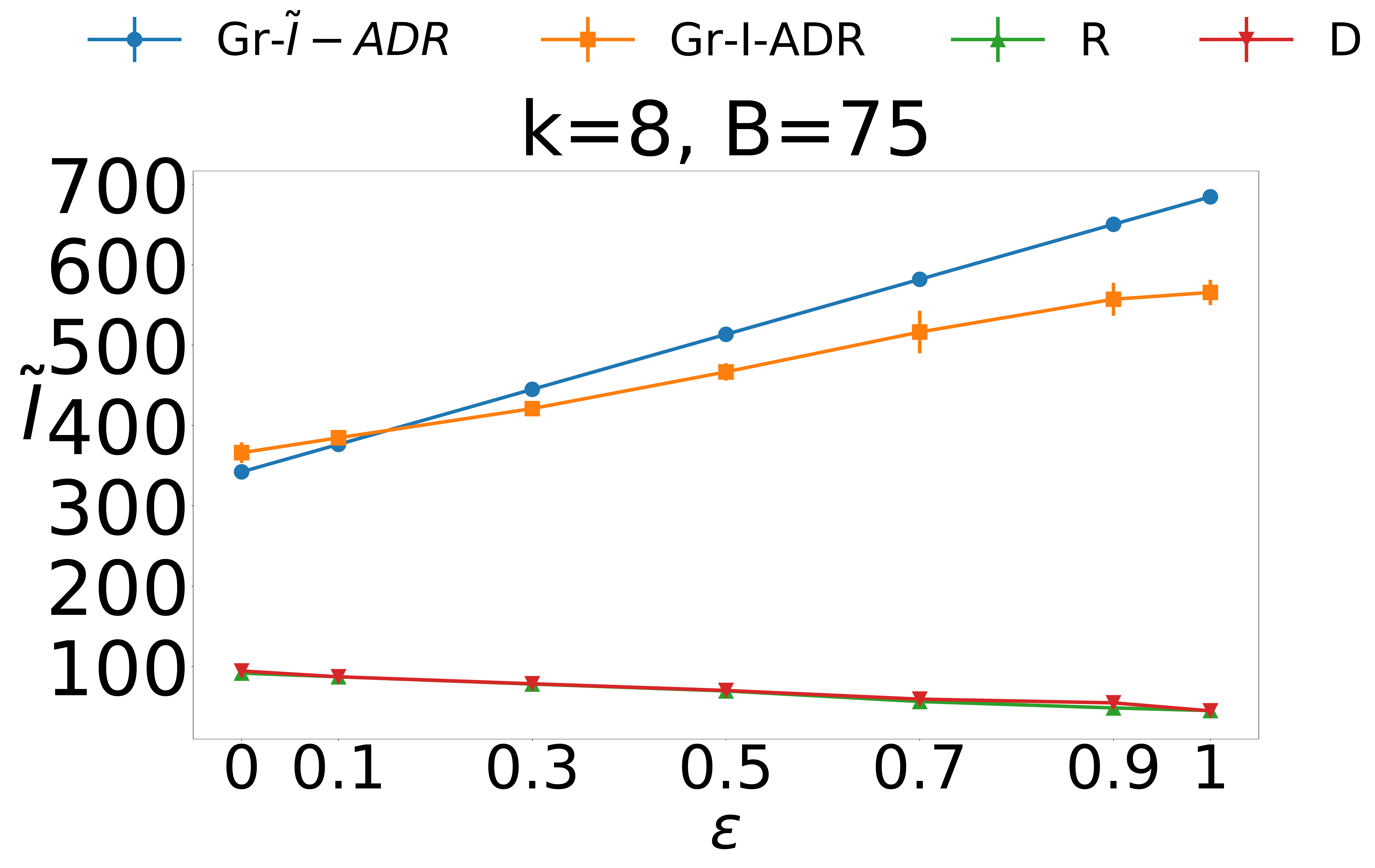}
        \caption{}\label{fig:imMean2}
    \end{subfigure}%
    \caption{$\tilde{I}$ in Mean setting vs:  (a) k, (b) $B$, (c) $\varepsilon$.}\label{fig:imMeanAllADR}
\end{figure}

\begin{figure}[!ht]
    \begin{subfigure}{0.17\textwidth}
        \centering
        \captionsetup{justification=centering}
        \includegraphics[width = \linewidth]{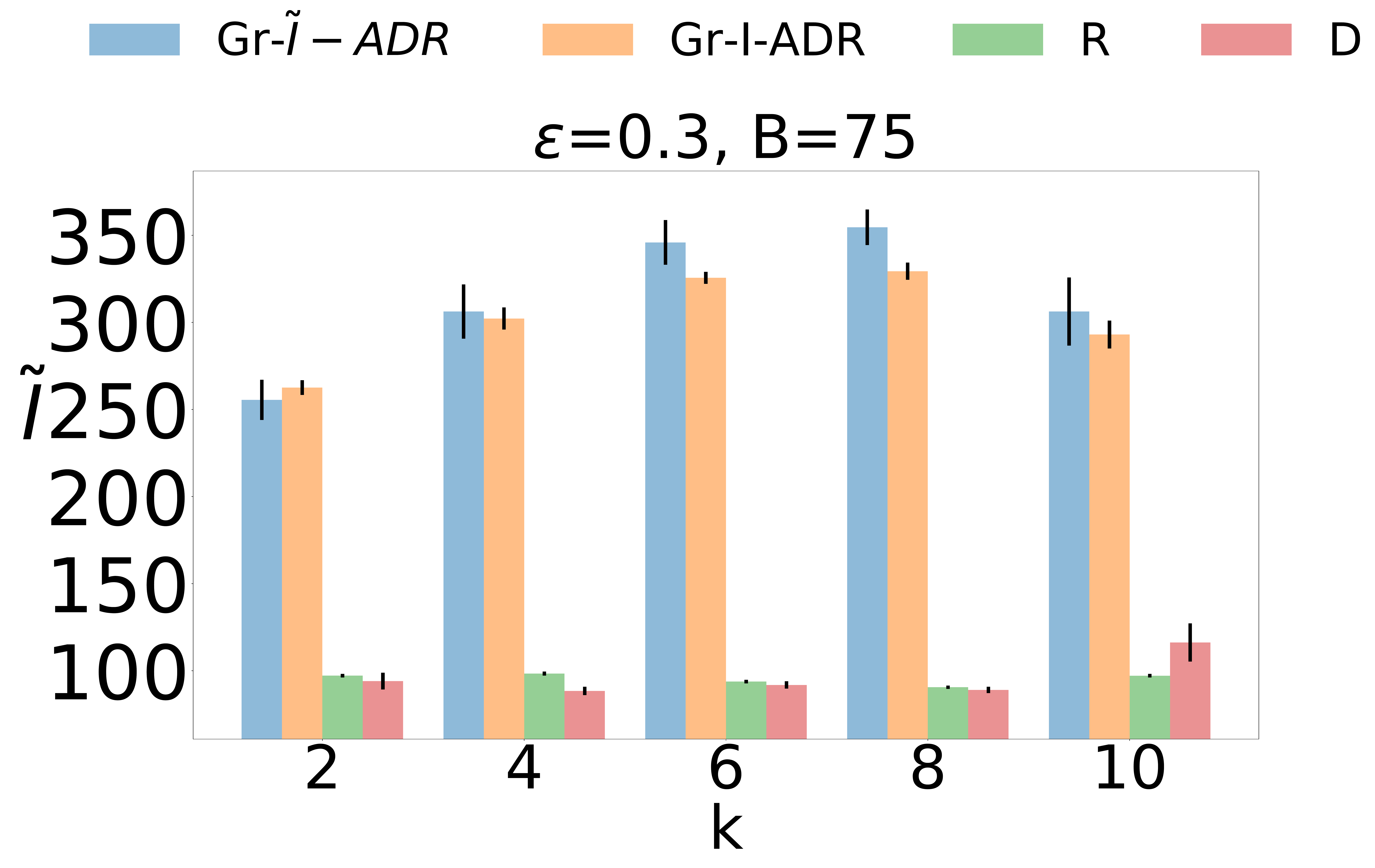}
        \caption{}\label{fig:imAG1}
    \end{subfigure}%
    \begin{subfigure}{0.17\textwidth}
        \centering
        \captionsetup{justification=centering}
        \includegraphics[width = \linewidth]{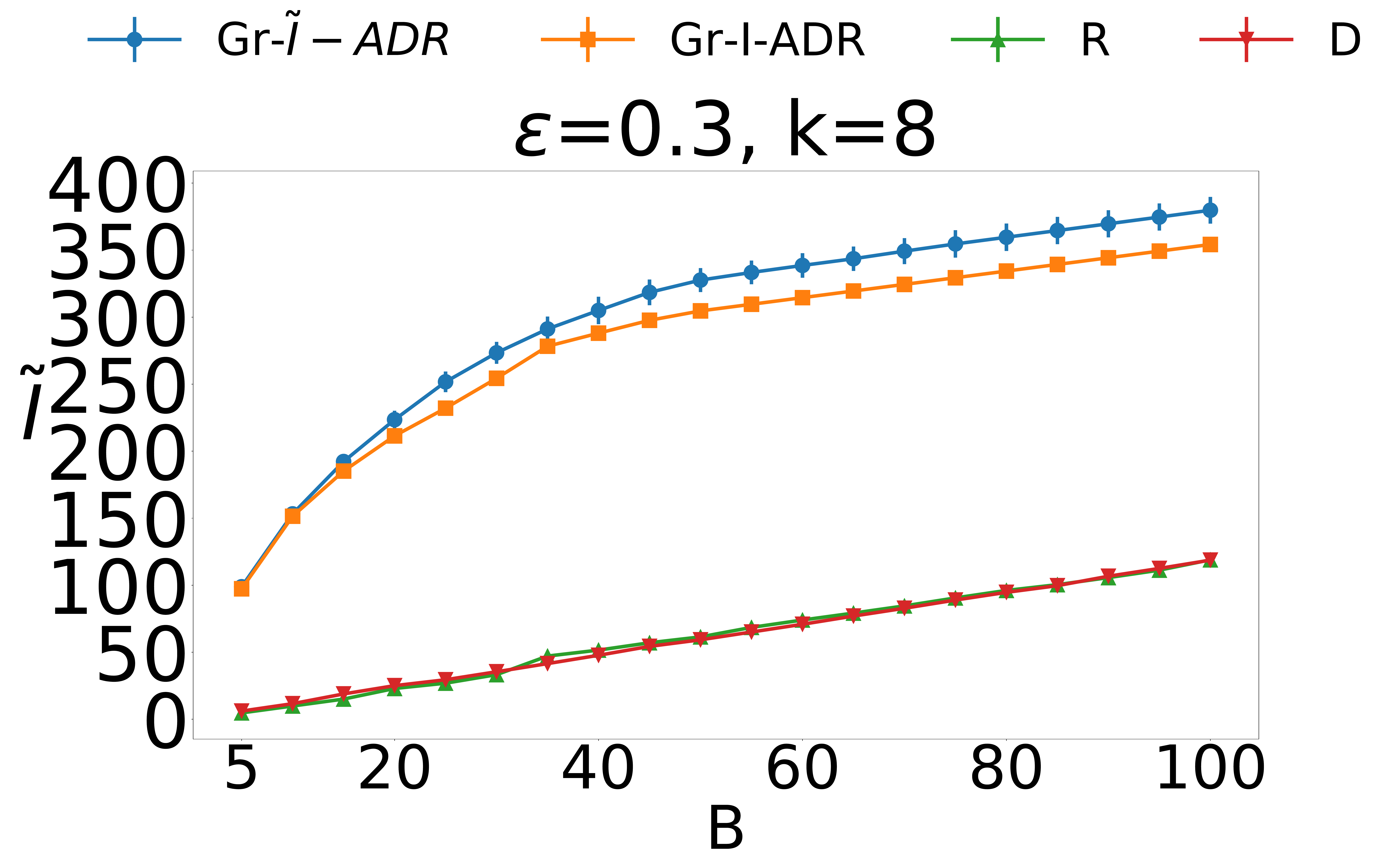}
        \caption{}\label{fig:imMax1}
    \end{subfigure}%
    \begin{subfigure}{0.17\textwidth}
        \centering
        \captionsetup{justification=centering}
        \includegraphics[width = \linewidth]{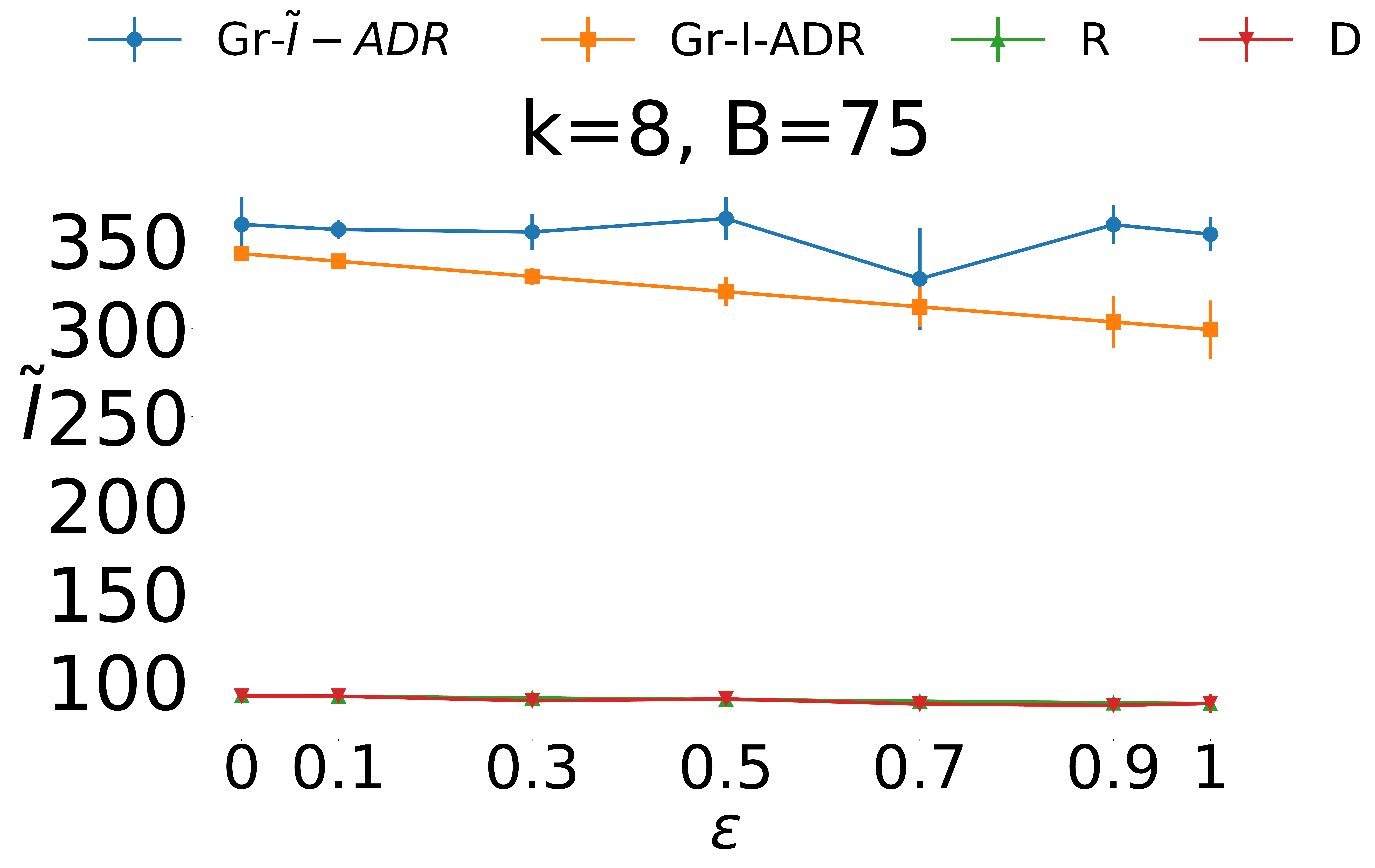}
        \caption{}\label{fig:imMax2}
\end{subfigure}%
    \caption{$\tilde{I}$ in MaxG setting vs:  (a) k, (b) $B$, (c) $\varepsilon$.}\label{fig:imMaxAllADR}
\end{figure}

\end{document}